\documentclass[runningheads]{llncs}

\usepackage[T1]{fontenc}
\usepackage{amsmath,amssymb,amsfonts}
\usepackage{booktabs}
\usepackage{longtable}
\usepackage{xspace}
\usepackage{xcolor}
\usepackage{tikz}
\usepackage{hyperref}
\usepackage[capitalize]{cleveref}

\usepackage{thmtools}
\usepackage{thm-restate}

\makeatletter
\def\@unspthm#1{%
  \expandafter\let\csname #1\endcsname\@undefined
  \expandafter\let\csname end#1\endcsname\@undefined
  \expandafter\let\csname #1name\endcsname\@undefined
  \expandafter\let\csname the#1\endcsname\@undefined
  \expandafter\let\csname p@#1\endcsname\@undefined
  \expandafter\let\csname c@#1\endcsname\@undefined}
\@unspthm{definition}
\@unspthm{example}
\@unspthm{theorem}
\@unspthm{proposition}
\@unspthm{lemma}
\@unspthm{corollary}
\@unspthm{remark}
\makeatother

\declaretheorem[numberwithin=section, name=Definition]{definition}
\declaretheorem[sibling=definition, name=Example]{example}
\declaretheorem[sibling=definition, name=Theorem]{theorem}
\declaretheorem[sibling=definition, name=Proposition]{proposition}
\declaretheorem[sibling=definition, name=Lemma]{lemma}
\declaretheorem[sibling=definition, name=Corollary]{corollary}
\declaretheorem[sibling=definition, style=remark, name=Remark]{remark}

\newcommand{\mypara}[1]{\vspace{0pt}\noindent\textbf{#1.}}
\newcommand{\temph}[1]{\textbf{#1}}

\newcommand{\remove}[1]{}

\newcommand{\calV}{\mathcal{V}}
\newcommand{\calG}{\mathcal{G}}
\newcommand{\calT}{\mathcal{T}}
\newcommand{\calF}{\mathcal{F}}

\newcommand{\calS}{\mathcal{S}}

\newcommand{\calC}{\mathcal{C}}

\newcommand{\ia}{\textit{i}}
\newcommand{\ib}{\textit{ii}}
\newcommand{\ic}{\textit{iii}}

\newcounter{pc}
\newcommand\spc{\refstepcounter{pc}\thepc}
\newcommand{\Program}[1]{\smallskip\noindent\textbf{Program \spc: #1}\vspace{-6pt}}

\newif\ifappendix
\appendixtrue

\usepackage{comment}
\ifappendix\else
\excludecomment{proof}
\fi

\newcommand{\arxivref}{the full paper~\cite{shapiro2026implementing}}
\newcommand{\appref}[2]{\ifappendix\cref{#1}\else #2\fi}
\newcommand{\proofref}{\par\noindent\textit{Proof.} See \arxivref.\medskip}

\usepackage{etoolbox}
\makeatletter
\pretocmd{\@verbatim}{\topsep=2\p@ \partopsep=\z@}{}{}
\makeatother

\AtBeginDocument{%
  \setlength{\abovedisplayskip}{2pt plus 1pt minus 1pt}%
  \setlength{\belowdisplayskip}{2pt plus 1pt minus 1pt}%
  \setlength{\abovedisplayshortskip}{1pt plus 1pt}%
  \setlength{\belowdisplayshortskip}{1pt plus 1pt}%
}

\makeatletter
\def\@spthm#1#2#3#4{\topsep 2\p@ \@plus1\p@ \@minus1\p@
\refstepcounter{#1}%
\@ifnextchar[{\@spythm{#1}{#2}{#3}{#4}}{\@spxthm{#1}{#2}{#3}{#4}}}
\def\@Thm#1#2#3{\topsep 2\p@ \@plus1\p@ \@minus1\p@
\@ifnextchar[{\@Ythm{#1}{#2}{#3}}{\@Xthm{#1}{#2}{#3}}}
\makeatother

\usepackage{enumitem}
\setlist{nosep, leftmargin=*, itemsep=0pt, topsep=0pt}

\title{Implementing Grassroots Logic Programs with Multiagent Transition Systems and AI
\ifappendix\texorpdfstring{\\}{ }(Full Version)\fi}
\titlerunning{Implementing Grassroots Logic Programs}

\author{Ehud Shapiro}
\authorrunning{E.~Shapiro}
\institute{London School of Economics and Weizmann Institute of Science\\
\email{e.shapiro2@lse.ac.uk}}

\begin{document}

\maketitle

\begin{abstract}
Grassroots Logic Programs (GLP) is a concurrent logic programming language in which logic variables are partitioned into paired readers and writers. An assignment is produced at most once via a writer and consumed at most once via its paired reader, and may contain additional readers and/or writers.  This enables the concise expression of rich multidirectional communication modalities.

The language was introduced together with concurrent (cGLP) and multiagent (maGLP)  operational semantics.  Here, we derive from these (\ia)~dGLP, a deterministic counterpart of cGLP, and (\ib)~madGLP, a counterpart of maGLP in which deterministic agents communicate solely by asynchronous message passing, and prove them correct against their abstract counterparts.  maGLP shared variable pairs spanning agents can be implemented by two local variable pairs joined by a \emph{global link}, with correctness following from disjoint substitution commutativity (a consequence of GLP's single-occurrence invariant).  We further prove that madGLP is grassroots.   Both dGLP and madGLP serve as formal specifications for an AI-driven implementation discipline (math $\to$ informal spec $\to$ Dart) employed and described here: from dGLP, AI (Claude) developed a workstation-based GLP implementation in Dart, and from madGLP it is developing a smartphone-based multiagent one.

\keywords{concurrent logic programming \and multiagent systems \and transition systems \and operational semantics \and peer-to-peer \and grassroots platforms}
\end{abstract}


\section{Introduction}
\label{sec:introduction}

\mypara{Grassroots platforms}
A digital platform is \emph{grassroots}~\cite{shapiro2023grassrootsBA,shapiro2025atomic} if it can have multiple instances that can (\ia)~operate independently of each other and of any global resource other than the network, and (\ib)~coalesce into ever larger instances, possibly resulting in a single global instance.

\mypara{GLP}
Grassroots Logic Programs (GLP)~\cite{shapiro2025glp} is a concurrent logic programming language in which logic variables are partitioned into paired \emph{readers} $X?$ and \emph{writers} $X$, each occurring at most once (single-occurrence, SO), with a variable occurring in a clause iff its paired variable also does (single-reader single-writer, SRSW). GLP eschews unification in favour of simple term matching, is linear-logic-like~\cite{girard1987linear}, and is futures/promises-like~\cite{baker1977future,friedman1976impact}: each assignment $X := T$ is produced at most once via the sole occurrence of a writer (promise) $X$, and consumed at most once via the sole occurrence of its paired reader (future) $X?$; the term $T$ may contain further readers and/or writers, enabling the concise expression of rich multidirectional communication modalities.

Being relational and nondeterministic (and thus not functional), GLP---like its predecessor concurrent logic languages~\cite{shapiro1989family}---can express nondeterministic asynchronous concurrent processes such as the fair merging of streams. The SRSW restriction is what makes GLP suitable for grassroots multiagent distributed execution: each variable assignment has exactly one producer and one consumer, so an assignment can be communicated asynchronously from its writer to its reader without consensus.
The intended deployment of GLP is grassroots systems consisting of people operating their smartphones. 

\mypara{From specifications to implementations}
Prior work has defined concurrent and multiagent nondeterministic operational semantics for GLP, cGLP and maGLP~\cite{shapiro2025glp}, via transition systems and multiagent transition systems~\cite{shapiro2021multiagent,lewis2026volitional}, and a type and module system for GLP~\cite{shapiro2026types}. \ifappendix Here, we derive from cGLP and maGLP deterministic counterparts dGLP and madGLP, and prove each of them correct with respect to its abstract counterpart, madGLP for \emph{authorised runs} (\Cref{app:held-links}).\else Here, we derive from maGLP a deterministic, message-passing counterpart madGLP and prove it correctly implements maGLP; a single-agent specialisation, dGLP, and all proofs appear in \arxivref.\fi 

\mypara{Implementing a shared variable, by example}
\ifappendix
A maGLP shared pair couples a writer at one agent with its reader at another---say writer $X$ at $p$, reader $X?$ at $q$, so values flow $p\to q$. madGLP realises this with two local pairs, $(X_p, X_p?)$ at $p$ and $(X_q, X_q?)$ at $q$, joined by a \emph{global link}: a spawned goal at $p$ suspends on $X_p?$ and, when $X_p$ is assigned, sends the value to $q$, which assigns $X_q$. The maGLP writer and reader are the outer ends ($X_p$, $X_q?$); the link bridges the inner ends ($X_p?$, $X_q$). Each shared variable is realised by two local pairs, the reader of one linked to the writer of the other.
\else
A maGLP shared pair couples a writer at one agent with its reader at another. madGLP realises it with a local pair at each, joined by a \emph{global link} (Section~\ref{subsec:madglp}).
\fi

\mypara{AI development methodology}
The AI-based development employs three layers of abstraction---math, an English+code informal specification, and Dart code. Authority flows math$\to$spec$\to$Dart but harmonised via back-and-forth in which running the implementation and deriving the specification surfaced defects, also at the mathematical level, prompting major redesigns of madGLP. madGLP and its single-agent specialisation dGLP serve as the formal specifications: from dGLP, AI (Claude) developed a workstation-based GLP implementation in Dart~\cite{dart2024}; from madGLP, it is developing a smartphone-based multiagent one. See Section~\ref{sec:ai-methodology}.

\mypara{Related work}
GLP belongs to the family of concurrent logic languages~\cite{shapiro1989family,shapiro1983subset,ueda1986guarded,clark1986parlog,mierowsky1985fcp}; it can be understood as Flat Concurrent Prolog with SRSW added, simplifying read-only unification~\cite{levi1985readonly}, with strong connections to mode and linearity analysis~\cite{ueda1994moded,ueda2001resource}, linear logic and session types~\cite{girard1987linear,caires2010session,wadler2014propositions,honda2016multiparty}, and futures and promises~\cite{baker1977future,pruiksma2022futures}. Our correctness framework draws on I/O automata~\cite{lynch1996forward}, refinement mappings~\cite{abadi1991existence}, and atomic-transaction theory~\cite{lynch1988atomic,herlihy1990linearizability}, using disjoint substitution commutativity (a consequence of SO) where linearizability uses linearization points. Grassroots~\cite{shapiro2023grassrootsBA,lewis2026volitional} extends the CRDT~\cite{shapiro2011conflict} tradition with compositional Byzantine tolerance via the Blocklace~\cite{almeida2024blocklace}. We prove that an agent-deterministic, peer-to-peer, message-passing implementation correctly implements maGLP's nondeterministic shared-variable semantics. Session types establish analogous implementation-correctness results for abstract models~\cite{toninho2013higher}; we are not aware of an actual concurrent programming language implementation~\cite{armstrong2010erlang,pike2012go,matsakis2014rust,cooper2006links,clebsch2015deny,johnsen2011abs} accompanied by such a proof. AI-driven derivation of implementations from formal specifications~\cite{fowler2025sdd,mundler2025type,blinn2024typed} is an emerging discipline; we contribute a three-layer (math$\to$spec$\to$Dart) variant at greater scale (Section~\ref{sec:ai-methodology}). See \appref{sec:related-work}{\arxivref} for fuller discussion.

\mypara{Paper outline}
Section~\ref{sec:ts-implementations} recalls transition systems and the notion of one transition system correctly implementing another.
Section~\ref{sec:glp} presents GLP syntax and its concurrent operational semantics cGLP\ifappendix, together with the deterministic dGLP that correctly implements it\fi.
Section~\ref{sec:multiagent-ts} recalls multiagent transition systems and atomic transactions, the setting in which agents share variable pairs across boundaries.
Section~\ref{sec:maglp} recalls the multiagent maGLP, defines its agent-deterministic, message-passing counterpart madGLP, and proves madGLP correctly implements maGLP\ifappendix{} for authorised runs\fi.
\ifappendix
Section~\ref{app:madglp-spec} provides the complete madGLP specification: full definitions, remarks, and correctness proofs.
\fi
Section~\ref{sec:ai-methodology} describes the three-layer (math$\to$spec$\to$Dart) AI development methodology and the implementations derived through it\ifappendix, with concrete examples of revisions driven by implementation feedback\fi.
Section~\ref{sec:grassroots} proves that madGLP is grassroots.
\ifappendix
Section~\ref{sec:related-work} discusses related work.
\fi
Section~\ref{sec:conclusion} concludes.
\ifappendix
Appendix~\ref{app:madglp-trace} provides detailed example traces, Appendix~\ref{app:code-format} the byte-level code format, Appendix~\ref{app:implementation-notes} implementation notes on the runtime, and Appendix~\ref{app:dev-history} the development history.
\else
Proofs are in \arxivref.
\fi


\section{Transition Systems and Implementations}
\label{sec:ts-implementations}

This section presents transition systems and implementations among them, providing the mathematical foundation for proving that one transition system correctly implements another.

\subsection{Transition Systems}
\label{subsec:ts-basic}

\begin{definition}[Transition System, Computation, Run, Safe, Live, Correct~\cite{shapiro2021multiagent,lewis2026volitional}]
\label{def:ts-basic}
A \temph{transition system} is a tuple $TS = (C, c_0, T, {\sim})$ where $C$ is an arbitrary set of \temph{configurations}, $c_0 \in C$ a designated \temph{initial configuration}, $T \subseteq C \times C$ a set of \temph{transitions}, each a pair $c \rightarrow c'$ of non-identical configurations $c \ne c' \in C$, and $\sim$ an equivalence relation on $T$; we write $[t]$ for the class of $t \in T$ under $\sim$.

A \temph{computation} is a (nonempty, finite or infinite) sequence of configurations $c_1, c_2, \ldots$; it is a \temph{run} if $c_1 = c_0$, and \temph{safe} if $c_i \rightarrow c_{i+1} \in T$ for every two consecutive configurations. We write $c \xrightarrow{*} c'$ for the existence of a safe computation from $c$ to $c'$ (empty if $c = c'$). A class $[t] \in T/{\sim}$ is \temph{enabled} in $c$ if $c \rightarrow c' \in [t]$ for some $c'$, and $c$ is \temph{terminal} if no class is enabled in it. A run is \temph{live} if no class $[t]$ is enabled in every configuration of some suffix in which no member of $[t]$ occurs, and \temph{correct} if it is safe and live.
\end{definition}

\mypara{Outcomes}
A run is \temph{complete} if it is infinite or ends in a terminal configuration. GLP transition systems additionally carry an \temph{outcome}: a domain-specific function from complete runs to an outcome space---for GLP, the computed answer (Section~\ref{sec:glp}). Outcomes serve only to state outcome-completeness of implementations (Definition~\ref{def:implementation-properties}).

\ifappendix
\mypara{Liveness}
The equivalence $\sim$ is the special case of the partial equivalence of~\cite{lewis2026volitional} in which every transition lies in the domain, hence every class carries a liveness obligation. It is instantiated for cGLP in Section~\ref{sec:glp} and for maGLP in Section~\ref{sec:maglp}; until then, liveness and correctness are as above.
\fi

\subsection{Implementations}
\label{subsec:implementations}

\begin{definition}[Implementation]
\label{def:implementation}
Given two transition systems $TS = (C, c_0, T, {\sim})$ (the \temph{specification}) and $TS' = (C', c'_0, T', {\sim'})$, an \temph{implementation of $TS$ by $TS'$} is a function $\tau : C' \rightarrow C$ where $\tau(c'_0) = c_0$, in which case the pair $(TS', \tau)$ is referred to as \temph{an implementation of $TS$}.

Given a computation $r' = c'_1 \rightarrow c'_2 \rightarrow \ldots$ of $TS'$, $\tau(r')$ is the (possibly empty) computation $\tau(c'_1) \rightarrow \tau(c'_2) \rightarrow \ldots$ obtained from the sequence by removing consecutively repetitive elements so that $\tau(r')$ has no \temph{stutter transitions} of the form $c \rightarrow c$.
\end{definition}

The implementation mapping $\tau$ (written $\sigma$ in~\cite{shapiro2021multiagent,lewis2026volitional}; we keep $\sigma$ for substitutions) need not preserve transitions: an implementation transition $c' \rightarrow d'$ may map to a stutter ($\tau(c') = \tau(d')$), or two consecutive specification transitions may correspond to a single implementation transition. Correctness of the implementation is captured by the conditions of Definition~\ref{def:implementation-properties}, which depend in turn on the outcome function defined above.

\begin{definition}[Correct and Outcome-Complete Implementation]
\label{def:implementation-properties}
The implementation $(TS', \tau)$ of $TS$ is:
\begin{itemize}
\item \temph{correct} if $\tau$ maps every correct run of $TS'$ to a correct run of $TS$
\item \temph{outcome-complete} if for every complete $TS$ run with outcome $O$, some complete $TS'$ run has outcome $O$
\end{itemize}
\end{definition}

Correctness carries two obligations on the non-stutter image $\tau(r')$: that it is a run of $TS$---its consecutive distinct configurations are transitions of $TS$---and that it is live. We require outcome-completeness rather than run-completeness (every correct $TS$ run being a $\tau$-image), because the agent-deterministic implementations below realise every specification outcome without matching the specification's reduction order.

\section{GLP and Its Implementation}
\label{sec:glp}

\subsection{GLP Syntax and Operational Semantics}
\label{subsec:concurrent-glp}

This section presents Grassroots Logic Programs (GLP), a concurrent logic programming language in which logic variables are partitioned into paired \emph{readers} and \emph{writers}.

\begin{definition}[GLP Variables]
\label{def:glp-variables}
Let $\calV$ denote the set of \temph{writers} (identifiers beginning with uppercase). Define $\calV? = \{X? \mid X \in \calV\}$, called \temph{readers}. The set of all GLP variables is $\hat\calV = \calV \cup \calV?$. A writer $X$ and its reader $X?$ form a \temph{variable pair}.
\end{definition}

\begin{definition}[GLP Terms and Goals]
\label{def:glp-syntax}
A \temph{term} is a variable in $\hat\calV$, a constant (numbers, strings, or the empty list \verb|[]|), or a compound term $f(T_1,\ldots,T_n)$ with functor $f$ and subterms $T_i$. Let $\calT$ denote the set of all terms. We use standard list notation: \verb=[X|Xs]= for a list cell, \verb|[X1,...,Xn]| for finite lists. A term is \temph{ground} if it contains no variables.

A \temph{unit goal} is a compound term or a string. A \temph{goal} is a multiset of unit goals; the empty goal is written \verb|true|. A \temph{clause} $A$~\verb|:-|~$B$ has head $A$ (a unit goal) and body $B$ (a goal); a \emph{unit clause} has empty body.
\end{definition}

\begin{definition}[Single-Occurrence (SO) Invariant]
\label{def:so-invariant}
A term, goal, or clause satisfies the \temph{single-occurrence (SO) invariant} if every variable occurs in it at most once.
\end{definition}

\begin{definition}[GLP Program]
\label{def:glp-program}
A clause $C$ satisfies the \temph{single-reader/single-writer (SRSW) restriction} if it satisfies SO and a variable occurs in $C$ iff its paired variable also occurs in $C$.
A \temph{GLP program} is a finite sequence of clauses satisfying SRSW; clauses for the same predicate form a \temph{procedure}.
The set of GLP goals $\hat\calG(P)$ includes all goals over $\hat\calV$ and the vocabulary of $P$ that satisfy SO.
\end{definition}

The SO and SRSW restrictions ensure that GLP eschews unification in favour of simple term matching. A guard that implies its argument is ground relaxes the single-reader half of SRSW for that reader: a ground value has no writer left to pair with, so further occurrences of the reader in the clause consume nothing~\cite{shapiro2025glp}. Program~\ref{program:monitor} relies on this.

\begin{definition}[Substitutions and Assignments~\cite{shapiro2025glp}]
\label{def:writers-assignment}
A GLP \temph{writer assignment} is a term of the form $X := T$, $X\in\calV$, $T\notin\calV$, satisfying SO. Similarly, a GLP \temph{reader assignment} is a term of the form $X? := T$, $X?\in\calV?$, $T\notin\calV$, satisfying SO. A \temph{writers (readers) substitution} $\sigma$ is the substitution implied by a set of writer (reader) assignments that jointly satisfy SO. Given a writers assignment $X := T$, its \temph{readers counterpart} is $X? := T$, and given a writers substitution $\sigma$, its \temph{readers counterpart} $\sigma?$ is the readers substitution defined by $X?\sigma? = X\sigma$. Given a reader assignment $X? := T$, its \temph{writers counterpart} is $X := T$, and given a readers substitution $\sigma$, its \temph{writers counterpart} $\sigma!$ is the writers substitution defined by $X\sigma! = X?\sigma$. The \temph{pair completion} of a readers substitution $\sigma$ is $\sigma^\star = \sigma \cup \sigma!$, applied to a fixed point.
\end{definition}
A writers substitution applied to a term assigns only writers; readers are left unchanged.

\begin{definition}[GLP Renaming, Renaming Apart]
\label{def:glp-renaming}
A \temph{GLP renaming} is an injective substitution $\rho: \hat\calV \to \hat\calV$ such that for each $X \in \calV$: $X\rho \in \calV$ and $X?\rho = (X\rho)?$. Two GLP terms \temph{have a variable in common} if for some writer $X \in \calV$, either $X$ or $X?$ occurs in both. A GLP renaming $\sigma$ \temph{renames $T'$ apart from} $T$ if $T'\sigma$ and $T$ have no variable in common.
\end{definition}

\begin{definition}[Writer MGU~\cite{shapiro2025glp}]
\label{def:writer-mgu}
Given two GLP unit goals $A$ and $H$, a \temph{writer mgu} is a writers substitution $\sigma$ such that $A\sigma = H\sigma$ and $\sigma$ is most general among such substitutions.
\end{definition}

Here ``most general'' is under the standard subsumption ordering on substitutions: $\sigma \leq \tau$ iff there exists $\theta$ such that $\sigma\theta = \tau$.
\ifappendix The operational algorithm for computing the writer mgu via term matching, including a table of all writer/reader/term combinations, is given below.\else The operational algorithm for computing the writer mgu via term matching is detailed in \arxivref.\fi

If two terms $T_1$ and $T_2$ that jointly satisfy SO are unifiable with an mgu $\sigma$, then $\sigma$ maps any variable in $T_1$ to a subterm of $T_2$ and vice versa. Hence, the SO invariant of GLP allows eschewing unification in favour of \emph{term matching} that performs joint term-tree traversal and collects variable assignments along the way.
\ifappendix
\begin{definition}[Term Matching]
\label{def:term-matching}
Given two terms $T_1$ and $T_2$ that jointly satisfy SO, their \temph{term matching} proceeds via the joint traversal of the term-trees of $T_1$ and $T_2$, consulting the following table at each pair of joint vertices, where $X_1, X_2$ denote writers, $X_1?, X_2?$ denote readers, and $f/n$ denotes a non-variable term, a constant when $n=0$ and a compound term when $n>0$:
\begin{center}
\begin{tabular}{l|lll}
$T_1 \backslash T_2$ & Writer $X_2$ & Reader $X_2?$ & Term $f_2/n_2$ \\
\hline
Writer $X_1$ & fail & $X_1 := X_2?$ & $X_1 := T_2$ \\
Reader $X_1?$ & $X_2 := X_1?$ & fail & suspend on $X_1?$\\
Term $f_1/n_1$ & $X_2 := T_1$ & fail & fail if $f_1 \ne f_2$ or $n_1 \ne n_2$\\
\end{tabular}
\end{center}
The writer mgu is the union of all writer assignments if no \emph{fail} was encountered and the suspension set is empty.
\end{definition}
\fi

\begin{example}[Fair Merge]
\label{ex:merge}
The concurrent logic program fairly merges two streams:
\begin{verbatim}
merge([X|Xs], Ys, [X?|Zs?]) :- merge(Ys?, Xs?, Zs).
merge(Xs, [Y|Ys], [Y?|Zs?]) :- merge(Xs?, Ys?, Zs).
merge(Xs, [], Xs?).
merge([], Ys, Ys?).
\end{verbatim}
In the goal \verb=merge([1,2,3|In1?],[a,b|In2?],Out)=, the writer \verb|Out| is the output stream \verb|merge| produces, and the readers \verb|In1?|, \verb|In2?| are open input-stream tails that other processes extend by assigning their paired writers \verb|In1|, \verb|In2|. The first two clauses swap inputs in their recursive calls, ensuring fairness when both streams are available; when neither input has a list cell (both start with a reader), the goal suspends until one is assigned. Different scheduling of assignments to \verb|In1| and \verb|In2| may yield different output interleavings.
\end{example}

\ifappendix
\begin{example}[Writer MGU and Reduction]
\label{ex:writer-mgu}
Consider reducing the goal of Example~\ref{ex:merge} with its first \verb=merge= clause. The goal and clause have no variable in common, so no renaming apart is needed. Term matching proceeds position by position. In position~1, \verb=[1,2,3|In1?]= matches \verb=[X|Xs]=, yielding \verb|X := 1| and \verb!Xs := [2,3|In1?]!. In position~2, \verb=[a,b|In2?]= matches \verb|Ys|, yielding \verb!Ys := [a,b|In2?]!. In position~3, \verb|Out| matches \verb=[X?|Zs?]=, yielding \verb!Out := [X?|Zs?]!. The writer mgu is $\hat\sigma = \{$\verb|X|$:=$\verb|1|$,\;$\verb|Xs|$:=$\verb=[2,3|In1?]=$,\;$\verb|Ys|$:=$\verb=[a,b|In2?]=$,\;$\verb|Out|$:=$\verb=[X?|Zs?]=$\}$. Note that the readers \verb|In1?|, \verb|In2?|, \verb|X?|, \verb|Zs?| appear in assigned terms but are not themselves assigned---only writers are assigned by the writer mgu.
\end{example}
\fi

\begin{definition}[GLP Goal/Clause Reduction]
\label{def:glp-reduction}
Given GLP unit goal $A$ and clause $C$, with $H$ \verb|:-| $B$ being the result of the GLP renaming of $C$ apart from $A$, the \temph{GLP reduction} of $A$ with $C$ \temph{succeeds with result} $(B,\sigma)$ if $A$ and $H$ have a writer mgu $\sigma$.
\end{definition}

\ifappendix
GLP clauses may include \emph{guards}---tests that determine clause applicability~\cite{shapiro2025glp}.

\begin{definition}[Guarded Clause~\cite{shapiro2025glp}]
\label{def:guarded-clause}
A \temph{guarded clause} has the form $H$ \verb|:-| $G$ \verb"|" $B$, where $H$ is the head, $G$ is a conjunction of guard predicates, and $B$ is the body. Guard arguments are readers paired to head writers.
\end{definition}

Guards have three-valued semantics: each guard predicate defines its \emph{success} condition; a guard \emph{suspends} if it does not succeed but some instance of it under a readers substitution would succeed; it \emph{fails} if no such instance exists. A conjunction succeeds if all members succeed, suspends if any member suspends and none fail, and fails if any member fails. The GLP reduction of Definition~\ref{def:glp-reduction} is augmented to succeed if the guard also succeeds, and the suspension set of Definition~\ref{def:reduction-suspend-fail} below likewise collects the readers blocking a suspended guard. The guard and system-predicate reference appears in~\cite{shapiro2025glp}.
\fi

\begin{definition}[Goal Identity~\cite{shapiro2025glp}]
\label{def:goal-identity}
Each unit goal in a cGLP computation carries a unique \temph{identifier} assigned when it is spawned: the goals of the initial goal $G_0$ receive distinct identifiers, and each goal spawned by a reduction receives a fresh identifier. Instantiating a reader preserves the identifier of the goal in which it occurs.
\end{definition}

Preserving identity under reader instantiation lets equivalent Reduce transitions operate on differently instantiated forms of the same goal, on which the correctness results below rest.

The following definition provides a standard concurrent (interleaving-based) operational semantics for GLP, which we refer to as cGLP.

\begin{definition}[cGLP Transition System~\cite{shapiro2025glp}]
\label{def:cglp-ts}
Given a GLP program $P$, an \temph{asynchronous resolvent} over $P$ is a pair $(G, \sigma)$ where $G \in \hat\calG(P)$ and $\sigma$ is a readers substitution.

A transition system $cGLP = (\calC, c_0, \calT, {\sim})$ is a \temph{cGLP transition system} over $P$ and initial goal $G_0$ satisfying SO if:
\begin{enumerate}
    \item $\calC$ is the set of all asynchronous resolvents over $P$
    \item $c_0 = (G_0, \emptyset)$
    \item $\calT$ is the set of all transitions $(G, \sigma) \rightarrow (G', \sigma')$ satisfying either:
    \begin{enumerate}
        \item \textbf{Reduce:} there exists a unit goal $A \in G$ such that $C \in P$ is the first clause for which the GLP reduction of $A$ with $C$ succeeds with result $(B, \hat\sigma)$, $G' = (G \setminus \{A\} \cup B)\hat\sigma$, and $\sigma' = \sigma \circ \hat\sigma?$.
        \item \textbf{Communicate:} $\{X? := T\} \in \sigma$, $X?\in G$, $G' = G\{X? := T\}$, and $\sigma' = \sigma$.
    \end{enumerate}
    \item $\sim$, the \temph{cGLP transition equivalence}, relates $t_1 \sim t_2$ iff either both are Reduce transitions reducing the same goal (by identity, Definition~\ref{def:goal-identity}) with the same clause, or both are Communicate transitions applying the counterpart of the same writer assignment to the same goal (by identity)
\end{enumerate}
\end{definition}

cGLP Reduce uses a writer mgu instead of standard unification, and chooses the first applicable clause instead of any clause. The first is the fundamental use of GLP readers for synchronisation. The second allows writing fair concurrent programs.

The cGLP Communicate rule realises the use of reader/writer pairs for asynchronous communication: it communicates an assignment from its writer to its paired reader.

cGLP has two forms of nondeterminism: the choice of $A \in G$ (\emph{and-nondeterminism}), and the choice of which Communicate to apply.

\ifappendix
\begin{example}[Reduce and Communicate]
\label{ex:reduce-communicate}
Continuing Example~\ref{ex:writer-mgu}, the Reduce transition produces the new goal $G' = \{$\verb=merge(Ys?, Xs?, Zs)=$\}$ (the body of the clause, with $\hat\sigma$ applied---but $\hat\sigma$ assigns only writers, and the body contains only readers and the fresh writer \verb|Zs|, so the body is unchanged). The readers counterpart $\hat\sigma? = \{$\verb|Ys?|$:=$\verb=[a,b|In2?]=$,\;$\verb|Xs?|$:=$\verb=[2,3|In1?]=$,\;\ldots\}$ is added to $\sigma$.

Two Communicate transitions are now enabled: applying \verb=Ys?:=[a,b|In2?]= and \verb=Xs?:=[2,3|In1?]= to the goal (since both \verb|Ys?| and \verb|Xs?| occur in it). After both, the goal becomes \verb=merge([a,b|In2?], [2,3|In1?], Zs)=---ready for the next Reduce, which will match the second clause (swapping the inputs back).
\end{example}
\fi

\begin{definition}[Proper Run, Outcome~\cite{shapiro2025glp}]
\label{def:proper-run}
A cGLP run $r: (G_0,\sigma_0) \rightarrow \cdots \rightarrow (G_n, \sigma_n)$ is \temph{proper} if for any $1\le i< n$, a variable that occurs in $G_{i+1}$ but not in $G_i$ also does not occur in any $G_j$, $j<i$. If proper, the \temph{outcome} of $r$ is $(G_0$ \verb|:-| $G_n)\sigma_n^\star$.
\end{definition}
Pair completion (Definition~\ref{def:writers-assignment}) is employed since $\sigma_n$ accumulates only readers counterparts, leaving the writers of $G_0$ unbound.

\ifappendix
\begin{restatable}[Disjoint Substitution Commutativity]{lemma}{lemSubstitutionCommutativity}
\label{lem:substitution-commutativity}
Let $\hat\sigma_1$ and $\hat\sigma_2$ be writers substitutions produced by two cGLP Reduce transitions on distinct goals $A_1$ and $A_2$ in the same run. Then $\hat\sigma_1$ and $\hat\sigma_2$ assign disjoint sets of variables, and $\hat\sigma_1 \circ \hat\sigma_2 = \hat\sigma_2 \circ \hat\sigma_1$.
\end{restatable}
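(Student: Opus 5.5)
The plan is to first show that the domains of $\hat\sigma_1$ and $\hat\sigma_2$ are disjoint, and then deduce commutativity from disjointness together with a ``no cross-occurrence'' property. By Definition~\ref{def:writer-mgu} and the term-matching table of Definition~\ref{def:term-matching}, the writer mgu of a Reduce on goal $A_i$ with renamed clause $H_i$ \verb|:-| $B_i$ assigns only writers that occur either in $A_i$ or in the renamed head $H_i$. Every assigned term is a subterm of $A_i$ or $H_i$. This is the observation already noted before Definition~\ref{def:term-matching}: under SO, the mgu maps each variable of one term to a subterm of the other.

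The key invariant, which I will establish by induction along the run, is that the current goal $G$ satisfies SO. I take this as given for the configurations of cGLP, since $\hat\calG(P)$ requires SO. I also need that the clause renamings are apart from the whole configuration, which I read from Definition~\ref{def:glp-reduction} together with properness (Definition~\ref{def:proper-run}): fresh variables never reappear.

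For disjointness, I split into two cases. The first case is when the two Reduce transitions occur in the same configuration, or more generally when $A_1$ and $A_2$ coexist as distinct unit goals of a single SO goal $G$. Then $A_1$ and $A_2$ share no writer. The renamed heads $H_1$ and $H_2$ are fresh and distinct from each other and from $G$, so the domains are disjoint. The second case is when the transitions occur at different points $i<j$ of the run. A writer $X$ assigned by $\hat\sigma_1$ is eliminated from the goal by $G' = (G\setminus\{A_1\}\cup B_1)\hat\sigma_1$. SO guarantees that $X$ had its unique writer occurrence there. By properness, $X$ never reappears in later goals; only $X?$ may persist, and the substitution $\sigma$ records only readers. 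Hence $X$ cannot occur in $A_2$, and it is not in the fresh head $H_2$, so $X\notin\mathrm{dom}(\hat\sigma_2)$.

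For commutativity, disjointness alone is not enough. I also need that no variable of $\mathrm{dom}(\hat\sigma_1)$ occurs in the range of $\hat\sigma_2$, and vice versa. Given that, for each writer $X$ we get $X\hat\sigma_1\hat\sigma_2 = X\hat\sigma_1$ if $X\in\mathrm{dom}(\hat\sigma_1)$, and $X\hat\sigma_2$ if $X\in\mathrm{dom}(\hat\sigma_2)$. The composition in the other order gives the same values, so the two compositions agree. The range of $\hat\sigma_2$ consists of subterms of $A_2$ and $H_2$. A writer of $\mathrm{dom}(\hat\sigma_1)$ cannot occur there, by the same SO and properness argument. In the simultaneous case, such a writer would occur twice in $G$. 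In the sequential case, it is already gone, or it is fresh and unrelated. Readers in ranges are never in the domain of a writers substitution, so they are harmless.

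I expect the main obstacle to be the sequential case. There I must argue carefully that a writer bound at step $i$ can neither reappear in $A_2$ nor appear inside a term later assigned by $\hat\sigma_2$. The difficulty is that Communicate transitions copy terms $T$ from readers' assignments into the goal, and those terms may contain writers. My first approach is to use SO of the joint pair $(G,\sigma)$ in the sense of assignments (Definition~\ref{def:writers-assignment}): each writer occurs once overall, either in $G$ or inside some pending $T$. Communicate merely moves that occurrence, so the uniqueness and elimination argument still applies.
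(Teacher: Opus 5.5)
\textbf{The gap: the ``vice versa'' half of your range condition is false, and so is the commutativity claim itself for dependent reductions.} Your disjointness argument is sound. It is also more careful than the paper's proof, which says each Reduce assigns only writers of its reduced goal and then simply asserts that substitutions with disjoint domains commute. In fact the renamed head writers \texttt{X}, \texttt{Xs}, \texttt{Ys} of Example~\ref{ex:writer-mgu} are assigned too.

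You are right that commutation needs more: no domain variable of either substitution may occur in the other's range. But a writer in the domain of the \emph{later} substitution can occur in the range of the \emph{earlier} one. Such a writer is neither ``already gone'' nor ``fresh and unrelated''. It is a writer that $\hat\sigma_1$ placed inside a binding, which Communicate then delivered into $A_2$. This is exactly the mechanism you flag at the end, and SO of $(G,\sigma)$ does not block it: the unique occurrence is the one that is moved and then assigned.

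Concretely, take the clauses \texttt{p(f(Y)) :- s(Y?).} and \texttt{c(f(a)).} with initial goal $\{\texttt{p(Out)},\texttt{c(Out?)}\}$.
\begin{itemize}
\item Reducing \texttt{p(Out)} yields $\hat\sigma_1=\{\texttt{Out}:=\texttt{f(Y)}\}$, with \texttt{Y} the renamed head writer.
\item Communicate turns \texttt{c(Out?)} into \texttt{c(f(Y))}, the same goal by identity.
\item Reducing it yields $\hat\sigma_2=\{\texttt{Y}:=\texttt{a}\}$.
\end{itemize}
The run is proper and the domains are disjoint. The goals are distinct and even coexist in $G_0$, so your first case, in its ``more generally'' form, already contains this example. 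Yet $\texttt{Out}\,\hat\sigma_1\hat\sigma_2=\texttt{f(a)}$ while $\texttt{Out}\,\hat\sigma_2\hat\sigma_1=\texttt{f(Y)}$.

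The same dependency arises whenever a goal answers on a writer it received, as the monitor does for \texttt{value(V)} in Program~\ref{program:monitor}. So with ordinary composition the commutativity half of the statement is false for causally dependent reductions. Neither your argument nor the paper's can prove it.

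\textbf{What you can prove instead.} Disjointness of domains holds, by your argument. Commutation holds whenever neither reduction assigns a writer that reached its goal through a binding made by the other. A sufficient case is when both reductions are enabled, in the forms in which they are performed, in a single configuration. Then each range consists of subterms of its own $A_i$ and fresh $H_i$, which the other reduction cannot touch.

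There are two ways to recover the commutation step in Lemma~\ref{lem:reduction-set-outcome}:
\begin{itemize}
\item Use this restricted form, and also show that every run performs a dependent pair in the same order: the dependent goal cannot reduce with that clause before it has received the writer. Two orders of the same reduction set that agree on all dependent pairs are connected by transpositions of adjacent commuting substitutions, so their compositions agree.
\item Combine the $\hat\sigma_i$ as a union applied to a fixed point, as the paper does for the pair completion $\sigma^\star$. For that operation disjoint domains do suffice. Your elimination argument (an assigned writer never reappears) supplies the acyclicity the fixed point needs.
\end{itemize}
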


\begin{proof}
By the SO invariant, each writer occurs at most once in the goal multiset $G$. Since $A_1 \neq A_2$ are distinct goals, they have disjoint writers. Each Reduce assigns only writers from its reduced goal (by Definition~\ref{def:writer-mgu}). Therefore $\hat\sigma_1$ and $\hat\sigma_2$ have disjoint domains. Substitutions with disjoint domains commute.
\end{proof}

\begin{restatable}[Reduction Set Determines Outcome]{lemma}{lemReductionSetOutcome}
\label{lem:reduction-set-outcome}
Let $r_1$ and $r_2$ be two proper cGLP runs from $(G_0, \emptyset)$ that perform pairwise equivalent sets of Reduce transitions---for each reduction of a goal (by identity) with a clause in $r_1$, there is an equivalent reduction in $r_2$, and vice versa. Then $r_1$ and $r_2$ have the same outcome.
\end{restatable}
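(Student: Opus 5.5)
The plan is to show that the outcome $(G_0 \texttt{:-} G_n)\sigma_n^\star$ is a function only of the \emph{set} of Reduce transitions performed, taken up to equivalence. The argument has four steps.

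\textbf{Step 1: write the outcome in terms of the reductions.} By Definition~\ref{def:cglp-ts}, the final readers substitution of a proper run is the composition of the readers counterparts of the writer mgus of its Reduce transitions, $\sigma_n = \hat\sigma_1?\circ\cdots\circ\hat\sigma_k?$. Communicate transitions leave $\sigma$ unchanged. They only apply to the goal an assignment already recorded in $\sigma$.

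\textbf{Step 2: remove the ordering.} By Lemma~\ref{lem:substitution-commutativity}, the $\hat\sigma_i$ of reductions of distinct goals have disjoint domains and commute. Their readers counterparts then also have disjoint domains and commute. Hence $\sigma_n$, and therefore $\sigma_n^\star$, is determined by the set $\{\hat\sigma_i\}$ alone. For $G_0\sigma_n^\star$ one should check that applying $\sigma^\star$ to a fixed point is independent of order. It is: with disjoint domains, the fixed point is the unique idempotent closure of the union of the assignments.

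\textbf{Step 3: equivalent reductions give the same writer mgu.} Suppose a goal $A$ (by identity) is reduced with clause $C$ in $r_1$, and equivalently in $r_2$. The two instances $A^1$ and $A^2$ at the time of reduction may differ, because different Communicate steps may have instantiated different readers (Definition~\ref{def:goal-identity}). I would argue that the head-matching writer assignments agree modulo these reader instantiations. Each writer of $A$ is assigned the same clause-side subterm in both runs, up to the renaming apart, and properness lets me choose renamings consistently. Reader positions that were instantiated in one run but not the other do not produce writer assignments; they only feed matching. The first-clause condition, together with the same clause $C$ being used, gives the same body $B$ up to renaming. So after pair completion the readers left uninstantiated in one run are resolved by the same assignments that the other run had already communicated. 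This is where $\sigma^\star$ applied to a fixed point is essential.

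\textbf{Step 4: the final goals agree.} Both runs perform the same reductions, so the same goals remain unreduced. These residual goals agree up to reader instantiation by assignments contained in $\sigma_n$, and therefore coincide after applying $\sigma_n^\star$. Combining this with Steps 2 and 3 gives equal outcomes, modulo the choice of fresh variable names that properness fixes up to renaming.

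The main obstacle is Step 3: making precise that the ``same'' reduction in two runs, applied to differently instantiated forms of the goal, produces writer mgus that agree after $\star$-completion. I would handle it by induction on the number of reductions, in an order consistent with both runs' spawn dependencies. The invariant would be that the two runs' substitutions coincide after pair completion on all variables introduced so far, under a fixed bijection of fresh variables.
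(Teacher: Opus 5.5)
Your Steps~1, 2 and~4 follow the paper's proof. In Step~3 you are more careful than the paper, which simply asserts that equivalent reductions yield the \emph{same} writer mgu ``regardless of which reader substitutions have been applied''; that assertion is false, and the mgus agree only after $\star$-closure, as you say. The gap is that the induction you propose for Step~3 does not work.

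Take the clauses \texttt{p(Y, g(Y?)).} and \texttt{q(X, a) :- r(X?).} with $G_0=\{p(U?,V),\,q(V?,U)\}$.
Run $r_1$ reduces the $q$-goal ($X:=V?$, $U:=a$), communicates $U?:=a$, and then reduces $p(a,V)$ ($Y:=a$, $V:=g(Y?)$).
Run $r_2$ reduces the $p$-goal ($Y:=U?$, $V:=g(Y?)$), communicates $V?:=g(Y?)$, and then reduces $q(g(Y?),U)$ ($X:=g(Y?)$, $U:=a$).
Both runs have outcome $(p(a,g(a)),\,q(g(a),a)$ \texttt{:-} $r(g(a)))$, yet neither reduction has the same mgu in the two runs. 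The discrepancies are head-writer bindings such as $Y:=U?$ versus $Y:=a$, so an instantiated reader position \emph{does} produce a differing writer assignment, contrary to one sentence of your Step~3.

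Both goals are initial, so every order respects spawn dependencies. Whichever reduction you process first, its bindings in $r_1$ and $r_2$ agree only after closing with the \emph{other} reduction's bindings: $Y:=U?$ needs $U:=a$, and $X:=V?$ needs $V:=g(Y?)$. An invariant over ``variables introduced so far'' has not yet supplied these. The underlying problem is that values communicated into a goal come from reductions unrelated to it in the spawn tree, and the two runs may order those reductions oppositely. So no comparison of the two runs' \emph{partial} substitutions in a common order can hold at intermediate stages.

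The repair is to induct along one run at a time, comparing against the other run's \emph{final} closure.
\begin{itemize}
\item First observe that $\hat\sigma$ binds only writers of the reduced goal and of the renamed head. By SO and renaming apart, none of these occurs in the body or elsewhere in the resolvent, so a Reduce yields $G'=(G\setminus\{A\})\cup B$. Hence every goal keeps its spawned form $A_R$ up to Communicate steps, and $A_R$ and the renamed head $H_R$ are the same in both runs (with fresh names chosen alike).
\item Since $r_1$ reduced $A_R$ with $H_R$, its closure $\sigma_1^\star$ unifies $A_R$ and $H_R$.
\item Now induct along $r_2$. The instance $r_2$ reduces is $A_R$ under earlier $r_2$-bindings, which $\sigma_1^\star$ respects by hypothesis. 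So $\sigma_1^\star$ also unifies that instance with $H_R$, and therefore respects every binding that term matching produces from it.
\item Symmetrically, $\sigma_2^\star$ respects every binding of $r_1$.
\item A writer bound in one run is then bound in the other: if $v:=t$ in $r_2$ but $v$ is unbound in $r_1$, then $v=t\sigma_1^\star$, which is impossible since $t$ is no writer. So both closures fix exactly the same variables, and $\sigma_1^\star=\sigma_2^\star\sigma_1^\star=\sigma_2^\star$.
\end{itemize}
Your Step~4 then goes through.

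Finally, drop the Step~2 claim that the readers counterparts commute as substitutions. For example, $\{X?:=f(Y?),\,Y?:=W?\}$ and $\{W?:=a\}$ do not commute. Rely only on your fixed-point observation, which is all the argument needs.
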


\begin{proof}
Let $r_1$ perform reductions in order $R_{\pi_1(1)}, \ldots, R_{\pi_1(k)}$ and $r_2$ in order $R_{\pi_2(1)}, \ldots, R_{\pi_2(k)}$ for permutations $\pi_1, \pi_2$, where $R_i$ and the corresponding reduction in the other run are equivalent (same goal by identity, same clause).

Since each writer mgu assigns only writers (Definition~\ref{def:writers-assignment}), equivalent Reduce transitions produce the same writer mgu $\hat\sigma_i$ regardless of which reader substitutions have been applied. By Lemma~\ref{lem:substitution-commutativity}, these substitutions have disjoint domains and commute. Therefore:
$\hat\sigma_{\pi_1(1)} \circ \cdots \circ \hat\sigma_{\pi_1(k)} = \hat\sigma_{\pi_2(1)} \circ \cdots \circ \hat\sigma_{\pi_2(k)}$

The final readers substitution $\sigma_n$ is the composition of the readers counterparts, which likewise commute.

The final goal $G_n$ consists of goals added by clause bodies (which depends only on which reductions occurred, not their order) with the accumulated substitution applied (which is the same by the above).

Communicate transitions apply reader assignments but do not affect the outcome: they do not change $\sigma_n$ and merely propagate existing assignments within goals. Therefore both runs have the same $\sigma_n$, hence the same pair completion $\sigma_n^\star$ and the same outcome $(G_0$ \verb|:-| $G_n)\sigma_n^\star$.
\end{proof}
\else
Two properties of cGLP, used in the implementation results below, follow from the single-occurrence invariant: reductions at distinct goals assign disjoint variables and therefore commute (\emph{disjoint substitution commutativity}); and the set of reductions performed, not their order, determines a run's outcome. Their statements and proofs appear in \arxivref.
\fi

\ifappendix
Two equivalent Reduce transitions may operate on different instantiations of the same goal---with different reader substitutions applied---but produce the same writer mgu (since the writer mgu assigns only writers and is unaffected by reader substitutions).

\begin{remark}[Properness]
\label{rem:properness}
The properness condition ensures fresh variables introduced by clause renaming are indeed fresh with respect to the entire computation history. This is guaranteed by the GLP renaming-apart requirement in Definition~\ref{def:glp-renaming}.
\end{remark}
\fi

A \emph{cGLP run} is a run of the cGLP transition system; it starts at $c_0$ and follows only cGLP transitions, hence is safe, collapsing Definition~\ref{def:ts-basic}'s run/safe-run distinction. Correctness of a cGLP run thus reduces to liveness.

Deterministic implementations of GLP refine reduction with explicit suspension and failure, and reactivate a suspended goal when one of its blocking readers is assigned.

\begin{definition}[Reduction Suspension and Failure]
\label{def:reduction-suspend-fail}
Given unit goal $A$ and clause $C$, with $H$ \verb|:-| $B$ being the result of the GLP renaming of $C$ apart from $A$:
\begin{itemize}
\item The reduction \temph{succeeds} if $A$ and $H$ have a writer mgu $\sigma$ (Definition~\ref{def:glp-reduction})
\item The reduction \temph{suspends with suspension set $W$} if no writer mgu exists, but there exists a set $W \subseteq \calV?$ of readers occurring in $A$ such that for some ground substitution $\theta$ on $W$, the terms $A\theta$ and $H$ have a writer mgu. $W$ is the set of readers of $A$ matched against a non-variable subterm of $H$.
\item The reduction \temph{fails} if it neither succeeds nor suspends
\end{itemize}
\end{definition}

\begin{definition}[Reactivation Set]\label{def:reactivation-set}
Given a set $S$ of pairs $(G, W)$, each a suspended goal $G$ with its suspension set $W \subseteq \calV?$ (Definition~\ref{def:reduction-suspend-fail}), and a readers substitution $\sigma?$, the \temph{reactivation set} is:
\[
\textrm{reactivate}(S, \sigma?) = \{G : (G, W) \in S \wedge \exists X? \in W.\; X?\sigma? \neq X?\}
\]
These are goals suspended on readers that have been instantiated by $\sigma?$.
\end{definition}

\ifappendix\else
\mypara{dGLP}
The suspension, failure, and reactivation just defined yield \emph{dGLP}, a deterministic counterpart of cGLP that schedules goal reductions in FIFO order. dGLP is a correct and outcome-complete implementation of cGLP (Definition~\ref{def:implementation-properties}) and is the single-agent specialisation from which the workstation-based Dart implementation was derived (Section~\ref{sec:ai-methodology}). Its definition and correctness proof appear in \arxivref.
\fi

\ifappendix
\subsection{Deterministic Implementation of cGLP}
\label{subsec:dglp}

\mypara{Motivation}
dGLP serves as the formal specification for AI to implement GLP on workstations. The discipline is: from the mathematical definition (this section), AI derives an English+code specification, and from this specification derives Dart code.

While cGLP (Definition~\ref{def:cglp-ts}) is nondeterministic in goal selection and abstracts away suspension and failure, dGLP provides:
\begin{enumerate}
\item Deterministic FIFO scheduling of active goals
\item Explicit tracking of suspended goals with their suspension sets
\item Explicit tracking of failed goals
\item Automatic reactivation of suspended goals when blocking readers are instantiated
\end{enumerate}

The key simplification compared to maGLP is that all variables are local, so reader assignments can be applied immediately rather than through asynchronous communication.

\begin{example}[Suspension and Failure]
\label{ex:suspension}
Consider the goal \verb=merge(Xs?, Ys?, Zs)= where both input streams start with readers. Matching against the first clause head \verb=merge([X1|Xs1], Ys1, [X1?|Zs1?])= fails at position~1: \verb|Xs?| is a reader and \verb=[X1|Xs1]= is a compound term, so no writer mgu exists. However, if \verb|Xs?| were instantiated to a list cell (e.g., \verb=[1|Xs1?]=), a writer mgu would exist. Hence the reduction suspends with suspension set $W = \{$\verb|Xs?|$\}$. The goal is suspended until \verb|Xs?| is assigned.

Conversely, \verb|merge(hello, Ys?, Zs)| fails outright against all clauses: the constant \verb|hello| cannot match any clause head's first argument, regardless of what readers are instantiated to.
\end{example}

\begin{definition}[dGLP Configuration]\label{def:dglp-configuration}
A \temph{dGLP configuration} over program $P$ is a triple $(Q, S, F)$ where:
\begin{itemize}
\item $Q$ is a sequence (queue) of \temph{active} unit goals
\item $S$ is a set of \temph{suspended} unit goals, each paired with its suspension set of blocking readers
\item $F$ is a set of \temph{failed} unit goals
\end{itemize}
\end{definition}

\begin{definition}[dGLP Transition System]\label{def:dglp-ts}
The transition system $\textrm{dGLP}(P) = (\calC, c_0, \calT, {\mathrm{id}})$ over GLP program $P$ and initial goal $G_0$ satisfying SO is defined by:
\begin{enumerate}
\item $\calC$ is the set of all dGLP configurations over $P$
\item $c_0 = (G_0, \emptyset, \emptyset)$
\item $\calT$ is the set of all transitions $(Q, S, F) \rightarrow (Q', S', F')$ where $Q = A \cdot Q_r$ (with $\cdot$ denoting the queue with head element $A$ and remainder $Q_r$) and exactly one of the following holds:

\begin{enumerate}
\item \textbf{Reduce:} There exists a first clause $C \in P$ for which the GLP reduction of $A$ with $C$ (Definition~\ref{def:glp-reduction}) succeeds with $(B, \hat\sigma)$. Let $R = \textrm{reactivate}(S, \hat\sigma?)$. Then:
\[
Q' = (Q_r \cdot B \cdot R)\hat\sigma\hat\sigma?, \quad S' = S \setminus \{(G, W) : G \in R\}, \quad F' = F
\]

\item \textbf{Suspend:} No clause succeeds, but the GLP reduction of $A$ with at least one clause $C \in P$ suspends. Let $W = \bigcup_{C \in P} W_C$ where $W_C$ is the suspension set for clause $C$. Then:
\[
Q' = Q_r, \quad S' = S \cup \{(A, W)\}, \quad F' = F
\]

\item \textbf{Fail:} No clause succeeds and no clause suspends (all clauses fail outright). Then:
\[
Q' = Q_r, \quad S' = S, \quad F' = F \cup \{A\}
\]
\end{enumerate}
\item $\sim$ is the identity relation: dGLP is deterministic, so each transition forms its own class
\end{enumerate}
\end{definition}

\begin{remark}[Deterministic Scheduling]
\label{rem:dglp-deterministic}
The dGLP transition system is deterministic: given any non-terminal configuration, exactly one transition is enabled. Goal selection follows FIFO order from the active queue $Q$, and clause selection uses the first applicable clause as in cGLP.
\end{remark}

\begin{remark}[Immediate Reader Application]
\label{rem:dglp-immediate-reader}
In dGLP, reader substitutions $\hat\sigma?$ are applied immediately during reduction rather than through separate Communicate transitions. This is sound for single-agent execution where all variables are local.
\end{remark}

\mypara{dGLP Correctly implements cGLP}

\begin{definition}[Implementation Mapping $\tau$]
\label{def:dglp-sigma}
Given a dGLP computation $r' = c'_0 \rightarrow c'_1 \rightarrow \ldots \rightarrow c'_n$ where $c'_n = (Q, S, F)$, the implementation mapping $\tau$ is defined by:
$$\tau(c'_n) = (G, \sigma_r)$$
where $G = Q \cup \{A : (A, W) \in S\} \cup F$ is the multiset union of active, suspended, and failed goals, and $\sigma_r = \hat\sigma_1? \circ \hat\sigma_2? \circ \ldots \circ \hat\sigma_k?$ is the composition of all readers substitutions produced by Reduce transitions along $r'$.
\end{definition}

\begin{restatable}[$\tau$ is Correct]{lemma}{lemdGLPLive}
\label{lem:dglp-live}
The implementation mapping $\tau$ is correct.
\end{restatable}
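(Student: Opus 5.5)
We must show that $\tau$ maps every correct dGLP run $r'$ to a correct cGLP run $\tau(r')$. By Definition~\ref{def:implementation-properties} this has two parts: safety (every non-stutter step of $\tau(r')$ is a cGLP transition, possibly after inserting intermediate configurations) and liveness. Since $\tau(r')$ as defined in Definition~\ref{def:implementation} only removes stutters, the safety step must be handled with care. I will show that each dGLP transition maps either to a stutter or to a short safe cGLP computation. The image run is then the concatenation of these computations, and it passes through every $\tau(c'_i)$.

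\textbf{Suspend and Fail.} These transitions only move the head goal $A$ of $Q$ into $S$ or $F$. The multiset $G$ and the substitution $\sigma_r$ are unchanged, so $\tau(c') = \tau(d')$ and the step is a stutter.

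\textbf{Reduce.} I will show the following invariant by induction along $r'$. Every goal in $Q \cup S \cup F$ already has all readers of $\sigma_r$ applied to it, i.e.\ $G\sigma_r = G$ for the $G$ of $\tau(c')$. Only the suspended goals in the reactivation set $R$, and the readers in $Q_r$, can be affected by the new $\hat\sigma?$, and dGLP applies $\hat\sigma\hat\sigma?$ to $Q_r\cdot B\cdot R$. To keep the invariant I need that the readers of $\hat\sigma?$ occurring in $S$ lie only in goals of $R$ and not in goals of $F$. I expect to argue this from the suspension set of Definition~\ref{def:reduction-suspend-fail}. Should it fail, I will weaken the invariant to ``$\tau(c')$ is cGLP-reachable'' and absorb the missing Communicate steps as pending ones.

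Given the invariant, a dGLP Reduce of $A$ with first clause $C$ corresponds to the following cGLP computation. First a cGLP Reduce of the same $A$, which is applicable with the same first clause because $A$ is identical in both systems; this yields $((G\setminus\{A\}\cup B)\hat\sigma,\ \sigma_r\circ\hat\sigma?)$. Then one cGLP Communicate for each occurrence of a reader $X?$ with $\{X?:=T\}\in\hat\sigma?$ in the goal multiset. By SO each reader occurs at most once, so there are finitely many such steps, and they commute by Lemma~\ref{lem:substitution-commutativity}. The final configuration equals $\tau(d')$. To present $\tau(r')$ as a cGLP run, I will read the statement as requiring that consecutive distinct images be connected by safe cGLP computations, and make the inserted intermediate configurations explicit.

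\textbf{Liveness.} Assume a cGLP class $[t]$ is enabled from some point on in $\tau(r')$ but never taken. There are two cases.

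\emph{Communicate class.} By the invariant every pending reader assignment is applied within the same dGLP step that created it, so a Communicate class cannot stay enabled.

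\emph{Reduce class (goal $A$, clause $C$).} Since $A$ is reducible, it is not in $F$: if it were, all clauses failed on an instance of $A$, and failure is preserved under reader instantiation. If $A$ is in $S$, then some reader in its suspension set has since been assigned, since otherwise $C$ would still suspend. That reader was assigned by a Reduce whose $\hat\sigma?$ instantiated it, so $A$ was put into $R$ and moved back to $Q$. Once in $Q$, FIFO scheduling guarantees that $A$ reaches the head after finitely many steps and is reduced. Because $C$ is the first applicable clause and first-clause choice is stable (earlier clauses that failed remain failing), the reduction uses $C$, contradicting the assumption.

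This requires $r'$ to be infinite or to end in a terminal configuration with $Q$ empty. That holds since dGLP is deterministic and correct runs are complete.

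\textbf{Main obstacle.} The hard part is the reactivation and suspension bookkeeping. I must show that a goal suspended with a suspension set $W$ is reactivated whenever an assignment could have made it reducible. This needs the property that suspension with union set $W$ is only lifted by instantiating some reader in $W$. That property must be extracted from Definition~\ref{def:reduction-suspend-fail}, and I would first prove it as a separate monotonicity lemma.
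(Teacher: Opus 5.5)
\textbf{Gap: the invariant is false, and the fallback does not rescue liveness.} The step you flagged cannot be closed: the invariant $G\sigma_r = G$ is false for dGLP as defined. A dGLP Reduce applies $\hat\sigma\hat\sigma?$ only to $Q_r\cdot B\cdot R$ (Definition~\ref{def:dglp-ts}). Goals left in $S$ and goals in $F$ are never substituted. A suspension set records only the readers that block matching (Definition~\ref{def:reduction-suspend-fail}), not all readers of the goal, and a failed goal may contain readers.

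Concretely, take the program \texttt{q(a).}\ \texttt{r(c).}\ and the initial queue \texttt{q(g(Y?))}, \texttt{r(Y)}. dGLP moves the first goal to $F$. It then reduces \texttt{r(Y)} with $\hat\sigma? = \{Y? := c\}$, which touches nothing, and halts with $Q$ empty. This correct dGLP run maps to a cGLP run ending in $(\{\texttt{q(g(Y?))}\},\{Y? := c\})$. There the Communicate class for $Y? := c$ on that goal is enabled and never taken. So $\tau(r')$ is not live, and with the definitions as written the lemma itself fails on this program.

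Your fallback (weakened invariant, pending Communicates) recovers reachability, but it turns your Communicate case of liveness from unproved into false. No monotonicity lemma about suspension sets can repair that. The same happens for a suspended goal holding a reader outside its suspension set. It also happens for chained assignments, because $\hat\sigma?$ is applied once rather than to a fixed point: the head \texttt{p(f(Z?),Z)} against the goal \texttt{p(X,V?)} gives $\hat\sigma? = \{X? := f(Z?),\ Z? := V?\}$, which leaves $Z?$ in the consumer of $X?$.

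\textbf{Relation to the paper's proof, and the fix.} The paper's proof has the same hole: it simply asserts that applying $\hat\sigma?$ during Reduce takes every enabled Communicate class. The repair therefore belongs in the definitions. Let Reduce apply $\hat\sigma?$ to a fixed point throughout the configuration, with $S' = (S\setminus\{(G,W) : G\in R\})\hat\sigma?$ and $F' = F\hat\sigma?$, as the heap implementation effectively does by dereferencing. Your invariant then follows by a routine induction, and the rest of your argument goes through.

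With that fix your argument is more complete than the paper's, which argues liveness only. You correctly observe that a dGLP Reduce corresponds to a cGLP Reduce followed by several Communicates. So the stutter-free image is a cGLP run only under the multi-step reading suggested by the remark after Definition~\ref{def:implementation}.

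Two minor corrections:
\begin{itemize}
\item The order-independence of the inserted Communicates is elementary and is not Lemma~\ref{lem:substitution-commutativity}, which concerns writer mgus of distinct Reduces.
\item The first applicable clause need not be stable: an earlier clause that suspended may later succeed. But then the class $[A,C]$ is disabled, which is all liveness requires.
\end{itemize}
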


\begin{proof}
Since dGLP is deterministic (Remark~\ref{rem:dglp-deterministic}), every complete dGLP run is trivially live---there is at most one enabled transition at any configuration, and it is taken.

\emph{Reduce:} Since dGLP's FIFO scheduling eventually processes every goal that enters the queue, for every cGLP Reduce equivalence class $[t]$ that becomes enabled, eventually some equivalent reduction is taken in $\tau(r')$.

\emph{Communicate:} In dGLP, reader substitutions $\hat\sigma?$ are applied immediately during Reduce (Remark~\ref{rem:dglp-immediate-reader}). Each such application corresponds to the cGLP Communicate equivalence classes that would apply those reader assignments. Hence every enabled cGLP Communicate equivalence class is taken when the corresponding dGLP Reduce fires.

Therefore $\tau$ maps correct dGLP runs to correct cGLP runs.
\end{proof}

While dGLP cannot match cGLP's arbitrary reduction order step-by-step (cGLP is nondeterministic in goal selection; dGLP uses FIFO), it achieves the same \emph{outcomes}. This follows from the commutativity of disjoint substitution compositions.

\begin{restatable}[dGLP Performs Equivalent Reductions]{lemma}{lemdGLPSameReductions}
\label{lem:dglp-same-reductions}
For every proper cGLP run $r$ that performs Reduce transitions $\{R_1, \ldots, R_k\}$, there exists a dGLP run $r'$ that performs a pairwise equivalent set of Reduce transitions.
\end{restatable}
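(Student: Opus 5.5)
The plan is an induction on the Reduce transitions of $r$, taken in the order $r$ performs them. The first observation is that by Remark~\ref{rem:dglp-deterministic} dGLP has a single run $r'$ from $(G_0,\emptyset,\emptyset)$. So ``there exists a dGLP run'' means a suitable prefix of this unique run, and the work is to show that this run performs, for each $R_i$, a Reduce of the same goal (by identity, Definition~\ref{def:goal-identity}) with the same clause. I would track goal identities through dGLP as well: a dGLP Reduce spawns the body $B$ with fresh identifiers, and applying $\hat\sigma\hat\sigma?$ to $Q$ preserves identity. I would then set up a bijection between goals spawned in $r$ and goals spawned in $r'$, indexed by the reduction that spawned them.

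The inductive invariant for $R_1,\dots,R_j$ would be the following. The dGLP run reaches a configuration in which each $R_i$, $i\le j$, has an equivalent Reduce. Each goal of $r$ alive after $R_j$ then has a counterpart in $Q\cup S$ whose instantiation is the cGLP goal with at least the reader assignments of $\sigma$ communicated to it. Such reader assignments are available in dGLP immediately, by Remark~\ref{rem:dglp-immediate-reader}.

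The inductive step runs as follows. Let $R_{j+1}$ reduce goal $A$ with clause $C$. The goal $A$ has a dGLP counterpart $A'$, which is either queued or suspended. If $A'$ is suspended, its suspension set consists of readers matched against non-variable head subterms (Definition~\ref{def:reduction-suspend-fail}). The cGLP reduction succeeded, so those readers were instantiated in $r$ by earlier reductions. Those reductions have dGLP counterparts, so Definition~\ref{def:reactivation-set} has already requeued $A'$. FIFO scheduling then guarantees that $A'$ reaches the head of $Q$. When it does, $A'$ is an instance of $A$ under reader substitutions only. By Lemma~\ref{lem:substitution-commutativity} the writer mgu with $C$ is unchanged, so $C$ succeeds on $A'$ and yields the same $\hat\sigma$ and the same body.

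The step I expect to be the main obstacle is showing that $C$ is still the \emph{first} succeeding clause for $A'$. The danger is that $A'$ is more instantiated than $A$ was in $r$: an earlier clause $C_0$ that suspended on $A$ in $r$ could succeed on $A'$. I would first try to rule this out by a monotonicity argument. Instantiating readers can only turn a suspension into success or failure, so I would need $A'$ to carry no more instantiation than $A$ had when $r$ reduced it. To get this, I would refine the invariant so that $r'$ is compared with a reordering of $r$. Concretely, I would use Lemma~\ref{lem:reduction-set-outcome} together with commutativity to permute $r$ so that every reduction producing a reader assignment consumed by $A$ precedes $R_{j+1}$. I would then argue that in the cGLP run Communicate can be applied lazily, so that $A$ is reduced in $r$ exactly at its dGLP instantiation.

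If this exchange argument does not go through, the fallback is to restrict attention to clauses with mutually exclusive heads or guards; for those, clause choice is insensitive to extra instantiation. I would flag that restriction explicitly as an additional hypothesis on $P$.
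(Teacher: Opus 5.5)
The step you flag, that $C$ is still the \emph{first} succeeding clause for $A'$, is a genuine gap. No argument can close it, because the statement is false without a further hypothesis on $P$. Your own opening observation points to the problem. dGLP has a single run, so the lemma would force every proper cGLP run to reduce a given goal identity with the same clause, i.e.\ cGLP's committed choice would be schedule-independent.

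Here is a concrete counterexample. Take the \texttt{merge} clauses of Example~\ref{ex:merge} together with \texttt{s([a],[b]).}, and $G_0=\{\texttt{s(X,Y)},\ \texttt{merge(X?,Y?,Out)}\}$.
\begin{enumerate}
\item In cGLP, reduce \texttt{s(X,Y)}; this places both \texttt{X? := [a]} and \texttt{Y? := [b]} in $\sigma$.
\item Communicate only \texttt{Y? := [b]}. On \texttt{merge(X?,[b],Out)}, clause~1 suspends and clause~2 succeeds, so this proper run reduces the initial \texttt{merge} goal with clause~2.
\item dGLP applies both reader assignments at once (Remark~\ref{rem:dglp-immediate-reader}) and reduces \texttt{merge([a],[b],Out)} with clause~1. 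Hence no prefix of its unique run has a Reduce set pairwise equivalent to that of $r$.
\end{enumerate}
Run to completion, the two bind \texttt{Out} to \texttt{[b,a]} and \texttt{[a,b]} respectively, so the same example also refutes Lemma~\ref{lem:dglp-complete}.

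The example shows why your exchange argument cannot work. Only one reduction precedes the \texttt{merge} reduction, so there is nothing to permute. The discrepancy comes from cGLP's freedom to delay Communicate, not from the order of reductions. Making $A$ in $r$ as instantiated as $A'$ changes which clause $r$ selects, so you would be proving the claim for a different run. Lemma~\ref{lem:reduction-set-outcome} does not help either: it presupposes equivalent reduction sets rather than supplying them.

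The paper's proof is an induction like yours, and it passes over exactly this point. From the existence of a writer mgu of $A_i$ and the head of $C_i$, it concludes that the reduction with $C_i$ is taken at the queue head. That shows $C_i$ succeeds, not that it is the first clause to succeed. It also omits the suspension and reactivation that you do treat.

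A correct version needs your fallback, stated as: no reader instance of a goal succeeds with two distinct clauses. The key fact is that success and failure of a fixed clause are preserved by instantiating readers. A goal reader met by a head writer still succeeds once instantiated, and one met by a head reader still fails; guards must be assumed monotone in the same way. So the clause dGLP uses for $A'$ and the clause $r$ uses for $A$ both succeed on a common reader instance, and the hypothesis forces them to coincide. With that, your reactivation-and-FIFO induction goes through, but for a weaker lemma than the one stated.

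Two smaller points. First, the invariance you need is this monotonicity, not Lemma~\ref{lem:substitution-commutativity}, which concerns the disjoint domains of two reductions. Second, $\hat\sigma$ is not literally unchanged: a head writer bound to a goal reader becomes bound to that reader's value, so the two agree only up to pair completion.
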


\ifappendix
\begin{proof}
By induction on spawn order. Goals in the initial goal $G_0$ are in the initial dGLP queue with the same identifiers. For any $R_i$ reducing goal $A_i$ (by identity) with clause $C_i$: if $A_i$ was spawned by an earlier reduction $R_j$, by induction dGLP performs an equivalent reduction, so $A_i$ enters the queue with the same identity. Since $R_i$ succeeds in $r$, $A_i$ and the head of $C_i$ have a writer mgu. Since the writer mgu assigns only writers (Definition~\ref{def:writers-assignment}), this writer mgu is invariant under any reader substitutions applied to $A_i$, so an equivalent Reduce---reducing $A_i$ (by identity) with $C_i$---is enabled in dGLP. When $A_i$ reaches the queue head, the equivalent reduction succeeds with the same writer mgu.
\end{proof}
\else
\proofref
\fi

\begin{restatable}[dGLP Outcome-Completeness]{lemma}{lemdGLPComplete}
\label{lem:dglp-complete}
For every proper cGLP run $r$ with outcome $O$, there exists a dGLP run $r'$ with outcome $O$.
\end{restatable}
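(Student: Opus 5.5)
The plan is to combine Lemma~\ref{lem:dglp-same-reductions} with Lemma~\ref{lem:reduction-set-outcome}, transporting the latter through the mapping $\tau$ of Definition~\ref{def:dglp-sigma}. Outcomes of dGLP runs are read off their $\tau$-images, since $\tau$ yields the cGLP configuration $(G,\sigma_r)$ from which $(G_0$ \verb|:-| $G_n)\sigma_r^\star$ is computed.

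Given a proper cGLP run $r$ with outcome $O$ performing Reduce transitions $\{R_1,\ldots,R_k\}$, I will first invoke Lemma~\ref{lem:dglp-same-reductions} to obtain a dGLP run $r'$ performing a pairwise equivalent set of Reduce transitions. Next I will truncate or choose $r'$ appropriately: take the shortest prefix of the dGLP run in which all $k$ equivalent reductions have occurred. Then I check that this prefix performs no further reductions beyond these. This is the delicate point, because FIFO may reduce goals that $r$ never reduced. To handle it, I would instead compare against the cGLP run $\tau(r')$ itself. By Lemma~\ref{lem:dglp-live}, $\tau(r')$ is a cGLP run whose Reduce transitions are exactly those of $r'$ (Reduce, Suspend and Fail in dGLP map respectively to cGLP Reduce, stutter and stutter, with the immediate reader application accounting for Communicates per Remark~\ref{rem:dglp-immediate-reader}). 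So it suffices to arrange that the dGLP reductions coincide, as a set up to $\sim$, with those of $r$.

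The main obstacle is exactly this extra-reductions issue. My first attempt would use the fact that outcome is defined for the run at hand, and that the statement only needs \emph{some} dGLP run, allowing a prefix. Since dGLP is deterministic, the runs from $c_0$ are prefixes of a single maximal run. I will argue that the reductions of $r$ are closed under spawn ancestry. I will also argue that any goal reduced by dGLP before the last needed reduction is either one of the $R_i$ or is reduced by the same first clause, with the same writer mgu. If such extra reductions are unavoidable, I would fall back to comparing outcomes via complete runs, where outcome-completeness in Definition~\ref{def:implementation-properties} is stated. In a complete (hence live) cGLP run, every enabled goal reduction eventually occurs, and each goal's applicable clause is determined because it is the first clause and writer mgus are invariant under reader substitutions. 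So the set of reductions is forced, and the maximal dGLP run performs the same set.

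Finally, with the reduction sets matched, I will apply Lemma~\ref{lem:reduction-set-outcome} to $r$ and $\tau(r')$. Properness of $\tau(r')$ follows from Remark~\ref{rem:properness}, since dGLP uses the same renaming apart. The lemma gives that $\tau(r')$, hence $r'$, has outcome $O$.
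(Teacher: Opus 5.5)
Your opening plan is exactly the paper's two-line proof: use Lemma~\ref{lem:dglp-same-reductions} to get a dGLP run with a pairwise equivalent set of reductions, then conclude with Lemma~\ref{lem:reduction-set-outcome}. The worry you raise next, that FIFO performs reductions $r$ never performed, is genuine, and the paper does not address it. But your repair fails at its key step.

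You claim that in a complete cGLP run ``the set of reductions is forced'' because each goal's applicable clause is determined. This is false. Invariance of the writer mgu under reader substitutions only says that a clause which succeeds keeps succeeding as readers are instantiated. It does not say which clause is the \emph{first} to succeed, and that depends on which readers have been communicated when the goal is reduced. This is cGLP's committed-choice nondeterminism, the source of the different interleavings of fair merge (Example~\ref{ex:merge}). Were the reduction set forced, Lemma~\ref{lem:reduction-set-outcome} would give every initial goal a single complete outcome. Also, ``complete (hence live)'' is wrong: an infinite complete run may starve a goal that stays enabled.

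The gap cannot be closed, as your own remark that all dGLP runs are prefixes of one maximal run shows. Take the program \texttt{main(R?) :- choose(A?,B?,R), pa(A), pb(B).}\ \texttt{choose(go,\_,first).}\ \texttt{choose(\_,go,second).}\ \texttt{pa(go).}\ \texttt{pb(go).}\ with initial goal \texttt{main(Out)}.

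In cGLP there is a terminal, proper, live run: reduce \texttt{main}, reduce \texttt{pb}, communicate \texttt{B? := go}, reduce \texttt{choose} with its second clause (the first suspends on \texttt{A?}), then reduce \texttt{pa}. Its outcome is \texttt{main(second) :- true}.

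In dGLP, \texttt{choose} suspends on $\{\texttt{A?},\texttt{B?}\}$ and \texttt{pa} reactivates it. Then \texttt{pb} instantiates \texttt{B?} in the queue before \texttt{choose} returns to the head, so \texttt{choose} is reduced with its first clause. The unique maximal dGLP run ends with outcome \texttt{main(first) :- true}, and no prefix of it has outcome \texttt{main(second) :- true}.

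So the statement fails as given. By your own prefix observation it also fails for non-complete proper runs: no dGLP prefix reduces \texttt{pb} while leaving \texttt{pa} unreduced. The paper's proof does not avoid this. The failure sits inside Lemma~\ref{lem:dglp-same-reductions}, whose step ``when $A_i$ reaches the queue head, the equivalent reduction succeeds'' overlooks two things: by then an earlier clause may also succeed and be selected (here the first clause of \texttt{choose}), or $C_i$ may still suspend.
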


\begin{proof}
Let $r$ be a proper cGLP run with outcome $O = (G_0$ \verb|:-| $G_n)\sigma_n^\star$.

Let $r$ perform Reduce transitions $\{R_1, \ldots, R_k\}$.

By Lemma~\ref{lem:dglp-same-reductions}, there exists a dGLP run $r'$ that performs a pairwise equivalent set of reductions, possibly in a different order and with interleaved Suspend/reactivation steps.

By Lemma~\ref{lem:reduction-set-outcome}, since $r$ and $r'$ perform pairwise equivalent reductions, they have the same outcome $O$.
\end{proof}

\begin{theorem}[dGLP Correctly Implements cGLP]
\label{thm:dglp-implements-glp}
The implementation $(dGLP, \tau)$ of cGLP is correct and outcome-complete.
\end{theorem}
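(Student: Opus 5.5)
The theorem combines the two obligations of Definition~\ref{def:implementation-properties}, and I will discharge them separately, assembling results already established in this subsection.

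\emph{Correctness.} First I check $\tau(c'_0)=c_0$. Here $c'_0=(G_0,\emptyset,\emptyset)$ and no Reduce has occurred, so $\tau(c'_0)=(G_0,\emptyset)$. Next I check safety of $\tau(r')$ for any dGLP run $r'$, by case analysis on each dGLP transition.
\begin{itemize}
\item A \emph{Suspend} or \emph{Fail} transition only moves the head goal $A$ from $Q$ into $S$ or $F$. The multiset $G=Q\cup S\cup F$ and $\sigma_r$ are unchanged, so the image is a stutter and is removed.
\item A \emph{Reduce} transition with clause $C$ and mgu $\hat\sigma$ maps to a cGLP Reduce on the same goal with the same (first) clause, followed by Communicate steps. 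The clause choice agrees with cGLP because the same ``first succeeding clause'' rule is used. The cGLP Reduce yields $((G\setminus\{A\}\cup B)\hat\sigma,\ \sigma_r\circ\hat\sigma?)$. The Communicate steps apply, one reader occurrence at a time, the assignments of $\hat\sigma?$ to the goals containing those readers; by SO, each such reader occurs in at most one goal. After these steps the goal multiset is $(G\setminus\{A\}\cup B)\hat\sigma\hat\sigma?$, which matches $Q'\cup S'\cup F'$.
\end{itemize}
A single implementation step thus corresponds to a finite safe sequence of specification steps. I read $\tau(r')$ accordingly, with the intermediate Communicate configurations interpolated, as the paper's remark on non-preserved transitions permits.

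\emph{Liveness.} I invoke Lemma~\ref{lem:dglp-live}. The point needing care is the Reduce classes. Suppose a cGLP Reduce class of goal $A$ with clause $C$ is enabled forever. Then $A$ is not in $F$, since a failed goal cannot later succeed: reader instantiation only refines it. If $A$ sits in $S$, an enabled reduction means some blocking reader has been instantiated, which triggers reactivation and puts $A$ back in the queue. FIFO order then brings $A$ to the head after finitely many steps, where it is reduced.

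\emph{Outcome-completeness.} This is exactly Lemma~\ref{lem:dglp-complete}, which rests on Lemma~\ref{lem:dglp-same-reductions} and Lemma~\ref{lem:reduction-set-outcome}. What remains is to confirm that the dGLP run it produces is complete: I extend it deterministically to a terminal configuration or to infinity. I also need to fix the outcome of a dGLP run as the outcome of its $\tau$-image, making sure that extending the run adds no reductions beyond the cGLP run's set. If extra reductions could occur, I would instead compare against a complete extension of the cGLP run.

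\emph{Main obstacle.} I expect the main obstacle to be the safety step. It requires the interpolation argument above, and in particular showing that a goal suspended on a reader instantiated by an \emph{earlier} $\hat\sigma?$ is never missed. This amounts to an invariant: every $(G,W)\in S$ has all readers in $W$ still unassigned in $\sigma_r$. I would prove this invariant by induction on the run, using the reactivation rule.
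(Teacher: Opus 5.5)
Your decomposition is the paper's own: its proof simply cites Lemma~\ref{lem:dglp-live} and Lemma~\ref{lem:dglp-complete}. The details you add, however, do not hold for dGLP as written in Definition~\ref{def:dglp-ts}. The equality $Q'\cup S'\cup F'=(G\setminus\{A\}\cup B)\hat\sigma\hat\sigma?$ is false. Reduce applies $\hat\sigma\hat\sigma?$ only to $Q_r\cdot B\cdot R$. So a suspended goal that is not reactivated, but contains the reader of a just-assigned writer in a non-blocking position, keeps that reader unsubstituted; so does every failed goal ($F'=F$). Your invariant fails with it. Take the clauses \texttt{p(a,Z) :- q(Z?).}, \texttt{q(5).}, \texttt{setX(5).}, \texttt{setY(a).} and the queue \texttt{p(Y?,X?)}, \texttt{setX(X)}, \texttt{setY(Y)}:
\texttt{p(Y?,X?)} suspends on $\{Y?\}$; \texttt{setX} then binds $X$ without touching the suspended goal; \texttt{setY} reactivates it as \texttt{p(a,X?)}; that reduces to \texttt{q(X?)}, which suspends on $X?$ although $X?:=5$ is in $\sigma_r$. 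This dGLP run ends in a terminal configuration, hence is correct. Yet its $\tau$-image ends in a configuration where a Communicate class is enabled and never taken, so the image is not live. Citing Lemma~\ref{lem:dglp-live} does not close this, because its Communicate clause (``applied immediately during Reduce'') makes the same assumption. The repair is to let Reduce apply $\hat\sigma\hat\sigma?$, to a fixed point, to every goal in $Q$, $S$ and $F$, which is what the heap implementation does in effect. Then your equality is exact, no goal ever contains an assigned reader (so your invariant is immediate), and your liveness argument for suspended goals is sound.

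On outcome-completeness your worry is justified, but your fallback is empty, since Definition~\ref{def:implementation-properties} already starts from a complete cGLP run. The danger is also not extra reductions but different ones. By Remark~\ref{rem:dglp-deterministic}, dGLP has exactly one complete run from $c'_0$, so the obligation would force every complete cGLP run to have that run's outcome. The fair merge refutes this. Take \texttt{p1([1]).}, \texttt{p2([2]).} and the queue \texttt{p1(Xs)}, \texttt{p2(Ys)}, \texttt{merge(Xs?,Ys?,Zs)}. cGLP may reduce \texttt{p2(Ys)} first, so \texttt{merge(Xs?,[2],Zs)} takes clause~2 and the outcome binds \texttt{Zs} to \texttt{[2,1]}. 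FIFO instead reaches \texttt{merge([1],[2],Zs)}, takes clause~1, and yields \texttt{[1,2]}. The step that breaks is the claim in the proof of Lemma~\ref{lem:dglp-same-reductions} that a reduction is invariant under reader instantiation. Instantiating a reader can make an earlier clause succeed, and it turns a binding $Z:=X?$ into $Z:=T$. So this half follows neither from the cited lemmas nor from extending runs. What your repaired safety argument does yield is the converse: the $\tau$-image of every complete dGLP run is a complete cGLP run, so every dGLP outcome is a cGLP outcome.
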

\begin{proof}
\emph{Correct:} By Lemma~\ref{lem:dglp-live}. \emph{Outcome-complete:} By Lemma~\ref{lem:dglp-complete}.
\end{proof}

\begin{remark}[Step-by-Step vs.\ Outcome Equivalence]
\label{rem:dglp-outcome}
Step-by-step completeness would require dGLP to match cGLP's transitions in order, which is impossible due to cGLP's nondeterministic goal selection vs.\ dGLP's FIFO scheduling. Outcome-completeness is the appropriate notion: what matters for program correctness is the final result (the clause derived and final substitution), not the intermediate reduction order.
\end{remark}

\mypara{Bounded Tail Reduction}
Strict FIFO appends every body goal to the back of $Q$ (Definition~\ref{def:dglp-ts}), so a tail-recursive goal---one whose body is a single continuation---is enqueued and dequeued at every reduction. A workstation implementation avoids this churn by reducing the continuation in place, ahead of the queue; left unbounded, this lets one goal monopolise the engine and starve the goals already queued behind it, breaking liveness. dGLP therefore admits a \emph{bounded} tail optimisation, realised in the bytecode by the \texttt{requeue} instruction (Section~\ref{app:cf-instructions}).

\begin{definition}[$b$-Bounded Tail Schedule]
\label{def:bounded-tail}
Fix $b \geq 1$. In a Reduce of the queue head $A$, at most one goal of the body $B$ is designated the \temph{tail goal} $A'$; the remaining goals of $B$ and the reactivated goals $R$ are appended to the back of $Q$ as in Definition~\ref{def:dglp-ts}. A \temph{$b$-bounded tail schedule} relaxes the FIFO order of Remark~\ref{rem:dglp-deterministic} only as follows: the tail goal $A'$ may be reduced ahead of the goals standing before it in $Q_r$, for at most $b$ such tail reductions in succession; after $b$ consecutive tail reductions the next tail goal is appended to the back of $Q$, restoring FIFO.
\end{definition}

The bound is finite but otherwise free: its value trades scheduling overhead against the latency the queued goals see, and carries no semantic weight; the implementation fixes it (currently $b = 26$). At $b = 1$ the schedule is strict FIFO---each tail goal is reduced once and then yields---correct, but forgoing the optimisation.

\begin{corollary}[Bounded Tail Reduction Preserves Correctness]
\label{cor:bounded-tail}
For every $b \geq 1$, $(dGLP, \tau)$ confined to $b$-bounded tail schedules is a correct and outcome-complete implementation of cGLP.
\end{corollary}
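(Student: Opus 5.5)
We reduce the corollary to Theorem~\ref{thm:dglp-implements-glp} by showing that the two properties used there survive the relaxation of Definition~\ref{def:bounded-tail}. The correctness proof (Lemma~\ref{lem:dglp-live}) needed two facts: every reduction step under $\tau$ is a cGLP Reduce composed with Communicates, and FIFO eventually processes every goal entering the queue. The outcome-completeness proof (Lemmas~\ref{lem:dglp-same-reductions} and~\ref{lem:dglp-complete}) needed the existence of a dGLP run performing a pairwise-equivalent set of reductions, and that argument never used the order of reductions, only that every queued goal is eventually reduced. So the plan is to establish these facts for every $b$-bounded tail schedule, $b \ge 1$.

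\emph{Safety.} A tail reduction is still a Reduce of some goal $A'$ with its first succeeding clause. Its effect on the configuration, $(G\setminus\{A'\}\cup B')\hat\sigma'$, with $\hat\sigma'?$ applied immediately, is identical to the effect under strict FIFO. Only the choice of which queued goal is reduced next differs. Since $\tau$ forgets queue order, mapping $(Q,S,F)$ to the multiset $G$ and the accumulated readers substitution, each step still maps to a cGLP Reduce followed by the Communicates it induces. The reasoning of Lemma~\ref{lem:dglp-live} therefore applies verbatim. Suspend and Fail steps still map to stutters.

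\emph{Liveness (main obstacle).} We must show every goal entering $Q$ is eventually dequeued, since this is what FIFO gave for free. Fix a goal $A$ at position $k$ of $Q$. Each non-tail step removes the head, so it strictly decreases $A$'s position; goals appended to the back never overtake $A$. A tail step inserts $A'$ ahead of the rest, leaving $A$'s position unchanged but not increased. By Definition~\ref{def:bounded-tail}, at most $b$ tail reductions occur in succession before a tail goal is appended to the back, which is a non-tail dequeue of the head. Hence within every $b+1$ consecutive steps $A$'s position decreases by at least one, and $A$ reaches the head within at most $k(b+1)$ steps. The delicate point is making precise that a tail goal held ``in place'' counts as occupying the head position, so that $A$ is not pushed back. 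We would formalise this by treating the in-place tail goal as prepended to $Q_r$. Once every queued goal is eventually reduced, suspended, or failed, every enabled cGLP Reduce class is eventually taken. Reactivation is unchanged, because it is triggered by the same $\hat\sigma?$. Communicates are absorbed as before, so $\tau$ of every run is live.

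\emph{Outcome-completeness.} The induction of Lemma~\ref{lem:dglp-same-reductions} only requires that a goal $A_i$ spawned with the same identity eventually reaches the head. We have just established this for the bounded schedule. When $A_i$ reaches the head, it reduces with $C_i$ by the same writer-mgu invariance under reader substitutions. Lemma~\ref{lem:reduction-set-outcome} then gives equality of outcomes regardless of the permuted order, and combining the three parts yields the corollary for all $b\ge 1$.
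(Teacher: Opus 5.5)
Your proposal is correct and follows essentially the paper's route: liveness holds because the in-place continuation yields within $b$ steps, so every goal entering $Q$ reaches the head (your $k(b+1)$ position count is a precise version of the paper's bounded-delay remark), and outcome-completeness comes from Lemma~\ref{lem:reduction-set-outcome}. The only minor difference is that you re-run the equivalent-reductions induction of Lemma~\ref{lem:dglp-same-reductions} directly for the bounded schedule, whereas the paper observes that a bounded-tail run performs the strict-FIFO run's reductions in a different order and then applies Lemma~\ref{lem:reduction-set-outcome}.
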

\begin{proof}
A $b$-bounded tail schedule performs exactly the Reduce transitions of the corresponding strict-FIFO run, only in a different order: each tail goal is reduced earlier than its FIFO position, and every goal standing behind it is delayed by at most $b$ further reductions. Outcome-completeness is therefore immediate from Lemma~\ref{lem:reduction-set-outcome}, which depends on the set of reductions performed and not on their order. For correctness (liveness), $b$ is finite, so the in-place continuation always yields within $b$ steps; hence every goal that enters $Q$ reaches the head in finitely many transitions and every enabled Reduce class is eventually taken, as in Lemma~\ref{lem:dglp-live}.
\end{proof}

\fi


\section{Multiagent Transition Systems}
\label{sec:multiagent-ts}

This section recalls the multiagent transition systems and atomic transactions needed to specify maGLP, drawing on the framework of volitional multiagent atomic transactions~\cite{lewis2026volitional} and the multiagent atomic transactions framework on which it builds~\cite{shapiro2025atomic}.

We assume a potentially infinite set of \emph{agents} $\Pi$, but consider only finite subsets of it, so when we refer to a particular set of agents $P \subset \Pi$ we assume $P$ to be nonempty and finite. We use $\subset$ to denote the strict subset relation and $\subseteq$ when equality is also possible.

\ifappendix
\begin{remark}[Agent Names]
\label{rem:agent-names}
Over a real network, the name of an agent $p \in \Pi$ is its public key, held by the networking layer (Appendix~\ref{app:in-networking}); symbolic agent names are the simulation realisation's choice.
\end{remark}
\fi

We use $S^P$ to denote the set $S$ indexed by the set $P$, and if $c \in S^P$ we use $c_p$ to denote the member of $c$ indexed by $p \in P$.

\begin{definition}[Local States, Configuration, Transaction, Participants]
\label{def:transaction}
Given agents $Q \subset \Pi$ and an arbitrary set $S$ of \temph{local states}, a \temph{configuration} over $Q$ and $S$ is a member of $C := S^Q$. An \temph{atomic transaction}, or just \emph{transaction}, over $Q$ and $S$ is any pair of configurations $t = c \rightarrow c' \in C^2$ such that $c \ne c'$, with $t_p := c_p \rightarrow c'_p$ for any $p \in Q$, and with $p$ being an \temph{active participant} in $t$ if $c_p \ne c'_p$, \temph{stationary participant} otherwise.
\end{definition}

\subsection{Multiagent Transition Systems}
\label{subsec:mats}

\begin{definition}[Multiagent Transition System]
\label{def:mts}
Given agents $P \subset \Pi$ and an arbitrary set $S$ of \temph{local states} with a designated \temph{initial local state} $s_0 \in S$, a \temph{multiagent transition system} over $P$ and $S$ is a transition system $TS = (C, c_0, T, {\sim})$ with \temph{configurations} $C := S^P$, \temph{initial configuration} $c_0 := \{s_0\}^P$, \temph{transitions} $T \subseteq C^2$ being a set of transactions over $P$ and $S$, and $\sim$ an equivalence on $T$.
\end{definition}

\begin{definition}[Transaction Closure]
\label{def:closure}
Let $P \subset \Pi$, $S$ a set of local states, and $C := S^P$. For a transaction $t = (c \rightarrow c')$ over local states $S$ with participants $Q \subseteq P$, the \temph{$P$-closure of $t$}, $t{\uparrow}P$, is the set of transitions over $P$ and $S$ defined by:
$$
t{\uparrow}P := \{ t' \in C^2 : \forall q \in Q.(t_q = t'_q) \wedge \forall p \in P \setminus Q.(p \text{ is stationary in } t') \}
$$
If $R$ is a set of transactions, each $t \in R$ over some $Q \subseteq P$ and $S$, then the \temph{$P$-closure of $R$}, $R{\uparrow}P$, is the set of transitions over $P$ and $S$ defined by:
$$
R{\uparrow}P := \bigcup_{t \in R} t{\uparrow}P
$$
Given a relation ${\sim}$ on $R$, its \temph{$P$-closure} ${\sim}{\uparrow}P$ is the relation on $R{\uparrow}P$ with $\hat t \mathrel{({\sim}{\uparrow}P)} \hat t'$ iff $\hat t \in t{\uparrow}P$ and $\hat t' \in t'{\uparrow}P$ for some $t \sim t'$.
\end{definition}

\ifappendix
The closure carries an equivalence on transactions to one on the induced transitions: since distinct transactions over the same participants have disjoint closures, ${\sim}{\uparrow}P$ is an equivalence whenever ${\sim}$ is. A transaction equivalence relates only transactions with the same participants~\cite{lewis2026volitional}; the maGLP and madGLP equivalences below satisfy this.
\fi

\ifappendix
\begin{example}[Configurations, Transactions, Closure]
\label{ex:closure}
Let $P = \{p, q, o\}$ and $S = \{s_0, s_1\}$, writing a configuration as $(c_p, c_q, c_o)$. Let $t = c \rightarrow c'$ be the transaction over $Q = \{p, q\}$ with $c_p = c_q = s_0$ and $c'_p = c'_q = s_1$; both participants are active. Its $P$-closure $t{\uparrow}P$ has two transitions, one per local state of the stationary $o$: $(s_0, s_0, s_0) \rightarrow (s_1, s_1, s_0)$ and $(s_0, s_0, s_1) \rightarrow (s_1, s_1, s_1)$.
\end{example}
\fi

\begin{definition}[Transactions-Based Multiagent Transition System]
\label{def:tbmts}
Given agents $P \subset \Pi$, local states $S$ with initial local state $s_0 \in S$, a set of transactions $R$ each $t \in R$ over some $Q \subseteq P$ and $S$, and an equivalence $\sim$ on $R$, the \temph{transactions-based multiagent transition system} over $P$, $S$, $R$, and $\sim$ is the multiagent transition system $TS = (S^P, \{s_0\}^P, R{\uparrow}P, {\sim}{\uparrow}P)$.
\end{definition}

\section{Multiagent GLP and Its Implementation}\label{sec:maglp}

\subsection{Multiagent GLP}
\label{subsec:maglp}

This section recalls maGLP---the multiagent operational semantics of GLP~\cite{shapiro2025glp}---a transactions-based multiagent transition system (Section~\ref{sec:multiagent-ts}).

\mypara{From GLP to Multiagent GLP}
\ifappendix
In extending GLP to multiple agents, each agent maintains its own asynchronous resolvent as its local state. GLP's variable pairs provide natural asynchronous channels for inter-agent communication: when agent $p$ assigns a writer $X$ for which the paired reader $X?$ is held by agent $q$, the assignment $X := T$ must be communicated to $q$.

A key difference between cGLP and maGLP is in the initial state. In a multiagent transition system all agents must have the same initial local state $s_0$ (Definition~\ref{def:mts}). This precludes setting up an initial configuration in which agents share logic variables, as this would imply different initial states for different agents.

We resolve this in two steps. First, we employ only anonymous logic variables ``\verb|_|'' in the initial local states of agents: Anonymous variables are, on the one hand, syntactically identical, hence allow all initial states to be syntactically identical, and on the other hand represent unique variables, hence semantically all initial goals have unique, local, non-shared variables. The initial state of all agents is the atomic goal \verb|agent(ch(_?,_),ch(_?,_))|, providing two bidirectional channels: the first to the person operating the machine, and the second to the network. The first channel is a bidirectional stream of GLP messages between the machine and the person, mediated by runtime UI support below the GLP layer and by physical UI hardware; messages flowing toward the person are rendered as UI elements, and the person's responses flow back as GLP messages. With appropriate UI runtime support, a GLP message to the person may contain a writer, in which case the runtime presents it as a question and an eventual response from the person becomes the corresponding assignment. The second channel carries communication with other agents.

Second, the Cold-call transaction enables agents to bootstrap communication by establishing shared variables through the network infrastructure, realising the cold-call protocol for connecting previously-disconnected agents.
\else
Each agent maintains its own asynchronous resolvent as its local state. GLP's variable pairs serve as asynchronous channels between agents: when agent $p$ assigns a writer $X$ whose paired reader $X?$ is held by agent $q$, the assignment must be communicated to $q$. The initial local state of every agent is the goal \verb|agent(ch(_?,_),ch(_?,_))|, providing one channel to the person operating the machine and one to the network. Agents bootstrap communication with previously-disconnected agents via Cold-call. See~\cite{shapiro2025glp} for the design rationale and UI-runtime details.
\fi

\begin{definition}[Multiagent GLP~\cite{shapiro2025glp}]\label{definition:maGLP}
Given agents $P\subset \Pi$ and GLP program $M$, the \temph{maGLP transition system} over $P$ and $M$ is the transactions-based multiagent transition system (Definition~\ref{def:tbmts}) over $P$, local states being asynchronous resolvents $(G_p, \sigma_p)$ over $M$, initial local state $s_0 = (\{\texttt{agent(ch(\_?,\_),\allowbreak ch(\_?,\_))}\}, \emptyset)$, the following transactions $c\rightarrow c'$, and the equivalence $\sim$ given below:
\begin{enumerate}
    \item \textbf{Reduce $p$:} A unary transaction with participant $p$ where $c_p\rightarrow c'_p$ is a cGLP Reduce transition (Definition~\ref{def:cglp-ts}).

    \item \textbf{Communicate $p$ to $q$:} A transaction with participants $p,q\in P$ where $\{X?:=T\} \in \sigma_p$, $X?$ occurs in $G_q$, $c'_p=(G_p,\sigma'_p)$, $\sigma'_p = \sigma_p \setminus \{X?:=T\}$, and $c'_q=(G_q\{X?:=T\},\sigma_q)$.

    \item \textbf{Cold-call $p$ to $q$:} A binary transaction with participants $p\ne q \in P$ where the network output stream in $c_p$ has a new message \verb|msg|$(q,X)$, $c'_p$ is the result of advancing the network output stream in $c_p$, and $c'_q$ is the result of adding \verb|msg|$(q,X)$ to the network input stream in $c_q$.

    \item ${\sim}$, the \temph{maGLP transaction equivalence} on the transactions above, relates $t_1 \sim t_2$ iff both are Reduce transactions at the same agent $p$ reducing the same goal (by identity, Definition~\ref{def:goal-identity}) with the same clause, both are Communicate transactions from the same $p$ to the same $q$ applying the counterpart of the same writer assignment to the same goal (by identity), or both are Cold-call transactions from the same $p$ to the same $q$ delivering the same message.
\end{enumerate}
\end{definition}

In every transaction the non-participating agents are stationary (Definition~\ref{def:closure}).

\ifappendix
\begin{example}[Cold-call]
\label{ex:cold-call}
With $G_p$ writing $\mathtt{msg}(q,\mathtt{intro}(X?))$ on $p$'s network output stream, Cold-call $p$ to $q$ advances that stream at $p$ and appends $\mathtt{msg}(q,\mathtt{intro}(X?))$ to $q$'s network input stream; the pair $(X,X?)$ is now shared, $X$ at $p$ and $X?$ at $q$, and subsequent Communicate transactions use it.
\end{example}
\fi

Note that Communicate may be unary or binary, depending on whether $p=q$.  Communicate transfers assignments from writers to readers within an agent or between agents sharing a paired reader and writer.  Cold-call transfers a term with its variables to $q$ through the network streams established in each agent's initial configuration, enabling the creation of paired variables among previously-disconnected agents.

The worked examples below start from application-level goals such as \texttt{client1(Xs)@p} and \texttt{monitor(Xs?)@q}, reachable from the universal initial state \texttt{agent(ch(\_?,\_),\allowbreak ch(\_?,\_))} by a boot phase.

\begin{example}[A maGLP Round-Trip]
\label{ex:maglp-roundtrip}
Suppose agents $p$ and $q$ already share a variable pair $(X, X?)$, with $X$ held by $p$ and $X?$ held by $q$, and that $q$'s resolvent contains a goal that consumes $X?$ (and so suspends until $X?$ is assigned). A maGLP run proceeds: (\ia)~Reduce $p$ assigns $X := T$ in $\sigma_p$; (\ib)~Communicate $p$ to $q$ delivers the assignment to $q$, removes it from $\sigma_p$, and substitutes $T$ for $X?$ in $G_q$; (\ic)~Reduce at $q$ now proceeds. Detailed traces, including the friend-mediated introduction of two agents that eschews Cold-call, appear in \appref{app:madglp-trace}{\arxivref}.
\end{example}

\ifappendix
\begin{remark}[Cold-call as the Exception]
\label{rem:cold-call-exception}
Cold-call is the exceptional means by which agents establish paired variables: once two agents share a pair, they can communicate via Communicate indefinitely; moreover, an agent with two friends (with which it shares channels) may introduce them to each other through its existing channels, eschewing Cold-call.
\end{remark}

\begin{remark}[Transaction Degrees]
\label{rem:maglp-transaction-degrees}
Reduce is unary; Communicate is binary, unary when $p = q$; Cold-call is binary.
\end{remark}
\fi

A \emph{maGLP run} is a run of the maGLP transition system, hence safe, collapsing Definition~\ref{def:ts-basic}'s run/safe-run distinction. Correctness of a maGLP run thus reduces to liveness.

\begin{definition}[maGLP Proper Run, Outcome~\cite{shapiro2025glp}]
\label{def:maglp-proper-run}
An maGLP run is \temph{proper} if the run at each agent is proper (Definition~\ref{def:proper-run}). The \temph{outcome} of a proper maGLP run is the tuple, indexed by $P$, of the outcomes of the per-agent runs (Definition~\ref{def:proper-run}).
\end{definition}

\begin{proposition}[maGLP SO Preservation~\cite{shapiro2025glp}]
\label{lem:maglp-so-preservation}
If the initial goals of all agents satisfy SO, then every goal in every agent's resolvent throughout a proper run satisfies SO.
\end{proposition}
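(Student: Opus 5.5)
We prove the statement by induction on the length of a proper maGLP run, with a slightly strengthened invariant. Single-agent SO is not preserved by itself: a Communicate from $p$ to $q$ inserts a term $T$ into $G_q$, and $T$ could clash with variables already in $G_q$. The invariant we carry is \emph{global SO}: the multiset consisting of all goals $G_p$ together with all pending assignment right-hand sides $T$ (for $\{X? := T\} \in \sigma_p$), over all $p \in P$, satisfies SO jointly. It also requires that each reader $X?$ occurs in at most one place, either in some $G_q$ or as the domain of a pending assignment. Per-agent SO of every goal follows immediately from global SO.

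\emph{Base case.} Every agent's initial goal is \verb|agent(ch(_?,_),ch(_?,_))|. Anonymous variables denote distinct fresh variables, so the initial configuration satisfies global SO, and all $\sigma_p$ are empty.

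\emph{Inductive step,} by cases on the transaction.

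\textbf{Reduce $p$.} The clause is renamed apart, and properness ensures the fresh variables do not occur anywhere in the run history. Since every agent's per-agent run is proper, these variables are fresh globally as well (we take renaming apart to draw from a globally fresh supply). The clause satisfies SRSW, so $B$ together with the head $H$ satisfies SO. Term matching under SO assigns each writer $X$ of $A$ a subterm of $H$, and each writer of $H$ a subterm of $A$ (the remark before term matching). The substitution $\hat\sigma$ therefore moves subterms without duplicating them:
\begin{itemize}
\item a writer of $A$ disappears, and its value $T$ reappears exactly once, in the pending assignment $X? := T$ added to $\sigma_p$;
\item a body variable bound by the mgu takes a subterm of $A$, which leaves $A$ as $A$ is consumed.
\end{itemize}
A counting argument then shows that each variable still occurs at most once. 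The key fact is that $A$ and $H$ jointly satisfy SO, so every variable occurrence in $(G_p\setminus\{A\}\cup B)\hat\sigma \cup \mathrm{rng}(\hat\sigma?)$ traces back to a unique occurrence before the step. A guarded clause whose guard implies groundness may repeat a reader, but only one bound to a ground term, which contributes no variables.

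\textbf{Communicate $p$ to $q$.} The pending $T$ is removed from $\sigma_p$ and substituted for the unique occurrence of $X?$ in $G_q$. Occurrences are moved, not copied, so global SO is preserved. The uniqueness of $X?$ follows from the invariant. Note that maGLP removes the assignment from $\sigma_p$, unlike cGLP; this removal is exactly what prevents duplication.

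\textbf{Cold-call.} A message term is moved from $p$'s output stream to $q$'s input stream. Its variables simply change agents, so the multiset of occurrences is unchanged.

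The main obstacle is the Reduce case: showing that no variable duplicates under $\hat\sigma$, and that cross-agent freshness follows from per-agent properness. For the first point, we appeal to the term-matching table: each table entry binds a variable to the opposite side's subterm, and since $A$ is deleted, every occurrence has exactly one destination. For the second, if per-agent properness proves insufficient, we read properness as requiring renaming apart from the whole configuration.
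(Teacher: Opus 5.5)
Your strategy is the natural one: an invariant over all agents' goals together with all pending right-hand sides, where Communicate and Cold-call move occurrences rather than copying them. (The paper gives no proof of this proposition; it only cites the companion GLP paper.) However, the step you identify as the crux fails as written: global freshness does not follow from per-agent properness. Definition~\ref{def:maglp-proper-run} asks only that each agent's run be proper, and Definition~\ref{def:proper-run} only forbids a variable absent from $G_i$ from having occurred in an earlier $G_j$ of the same agent. Nothing stops $p$ and $q$ from each renaming a clause into the same pair $(Z,Z?)$, which is new to each agent's own history. If $p$ then assigns $X := f(Z?)$ while $X?$ and $q$'s own $Z?$ both occur in $G_q$, Communicate from $p$ to $q$ leaves $Z?$ twice in $G_q$. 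Even at a single agent, properness mentions only the goals, and Definition~\ref{def:glp-reduction} renames apart only from $A$. So a renamed variable may coincide with one in $G_i\setminus\{A\}$ or in a pending right-hand side of $\sigma_p$.

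Your fallback is therefore not optional. State as a standing hypothesis that renaming apart draws variables never used before at any agent, in goals or substitutions. Drop the claim that this follows from properness, and present the proposition as proved under this convention. With it, the Reduce case becomes clean. $\hat\sigma$ is the identity on $(G_p\setminus\{A\})\cup B$, since its domain consists of writers of $A$ and of $H$. It is head writers, not body variables, that receive subterms of $A$. Every variable occurrence of $A$ or $H$, other than an assigned writer, lands in exactly one right-hand side of $\hat\sigma?$.

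The invariant itself also needs three repairs.

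\emph{Composition.} maGLP Reduce sets $\sigma'_p=\sigma_p\circ\hat\sigma?$, and genuine composition rewrites pending right-hand sides. In \texttt{merge}, suppose $Out? := [X?|Zs?]$ is still pending when the next reduction assigns $Zs := [X'?|Zs'?]$. The composition then contains both $Out? := [X?,X'?|Zs'?]$ and $Zs? := [X'?|Zs'?]$, which breaks your joint SO. You should say explicitly that $\sigma_p$ is a set of assignments and that $\circ$ on disjoint domains simply adds $\hat\sigma?$'s assignments. Only then is $\sigma'_p$ a readers substitution in the sense of Definition~\ref{def:writers-assignment}. Alternatively, count only bindings whose domain reader still occurs somewhere.

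\emph{Reader clause.} Your clause that each reader occurs ``either in some $G_q$ or as the domain of a pending assignment'' fails whenever a Communicate is enabled, since Communicate requires both. What you actually need is that pending domains are pairwise distinct across agents. This holds because a writer is assigned only when the unique goal containing it is reduced, and by joint SO and freshness it never reappears.

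\emph{Ground-guarded readers.} A ground-guarded repeated reader is still a variable until its binding is delivered. In the \texttt{value} clause of Program~\ref{program:monitor}, $Sum?$ ends up both in the body and in the pending $V? := Sum?$. Since maGLP Communicate consumes $Sum? := n$ on first delivery, the duplicate need never disappear, so ``contributes no variables'' is not accurate. Either treat such readers as replaced by their ground value at reduction time, or restrict the proposition to strictly SRSW programs.
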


\subsection{Implementing Multiagent GLP with Deterministic Agents}
\label{subsec:madglp}

\mypara{Motivation}
madGLP serves as the formal specification for AI to implement maGLP on smartphones communicating peer-to-peer, without reliance on central servers. The discipline is: from the mathematical definition, AI derives an English+code specification, and from this specification derives Dart code.

This section specifies madGLP, an agent-deterministic transition system that implements maGLP by replacing each cross-agent shared variable pair with two local pairs---the reader of one linked to the writer of the other by a \emph{global link}, realised through a global writers table and message passing, with forwarding handled by spawned GLP goals. A value crosses a link once, to the reader's current holder, and the holder's acknowledgement releases the link.

\mypara{Realising the UI Channel}
The maGLP initial state provides two bidirectional channels (Section~\ref{subsec:maglp}): one to the person operating the machine, one to the network. madGLP implements the second via the global-link machinery described below. The first is realised by the Dart runtime below the madGLP layer, which bridges GLP messages and the physical UI hardware: outgoing GLP messages on this channel are rendered as UI elements, and the person's responses, captured by event handlers, flow back as GLP messages. When a message to the person contains a writer, the runtime presents it as a prompt and treats the person's response as the corresponding assignment, performed on the agent's local state.

\mypara{Implementing Shared Variables via Local Pairs}
A maGLP shared variable pair $(X, X?)$ with writer $X$ at agent $p$ and reader $X?$ at agent $q$ can be implemented by two local pairs, the reader of $p$'s pair linked to the writer of $q$'s pair by a \emph{global link}:

\begin{itemize}
\item At agent $p$: a local pair $(X_p, X_p?)$ where both variables remain in $p$'s resolvent
\item At agent $q$: a local pair $(X_q, X_q?)$ where both variables remain in $q$'s resolvent
\item A global link connecting the reader $X_p?$ to the writer $X_q$, realised as a \texttt{global\_send} goal at $p$ that monitors $X_p?$ and an entry in $q$'s global writers table that maps the global name to $X_q$
\end{itemize}

When $X_p$ is assigned a term $T$ at agent $p$, the value becomes available via $X_p?$. A spawned \texttt{global\_send} goal detects this and sends the assignment message to $q$, where $T$ is \emph{localized} (Definition~\ref{def:localize}). Upon receipt, $q$ looks up the target writer in its global writers table, assigns $X_q := T_q^\downarrow$, and removes the entry. This makes the value available through $X_q?$ in $q$'s resolvent. Here $T^\uparrow$ and $T_q^\downarrow$ denote the \emph{globalized} and \emph{localized} forms of a term---local variables replaced by global names, and conversely. Beside its global writers table, each agent $p$ keeps a set $U_p$ of \emph{imported-writer records}: a record $(Y, o, k, s)$ marks the local writer $Y$ as the sending end of the link anchored by $\_w(o, k)$, with $s$ the agent the name came from.

\begin{definition}[Globalize]\label{def:globalize}
Given agent $p$, remote agent $q$, and term $T$, the \temph{globalization by $p$}, written $T_p^\uparrow$, may update the global writers table $W'_p$ and the imported-writer records $U'_p$, and spawn or remove goals in $p$'s resolvent, as follows. For each variable $Y$ occurring in $T$:
\begin{enumerate}
\item If $Y$ is a writer: allocate the next index $i$, create entry $(Y, q)$ at index $i$ in $W'_p$, and replace $Y$ in $T_p^\uparrow$ with $\_w(p, i)$.
\item If $Y?$ is a reader: allocate the next index $i$, replace $Y?$ in $T_p^\uparrow$ with $\_r(p, i)$, and spawn goal $\texttt{global\_send}(Y?, \_r(p,i), q)$ into $p$'s resolvent.
\item If $Y?$ is the reader of an imported pair---its paired writer carries a global writers table entry $(Y, o, k)$ with anchor $o$---then instead replace $Y?$ with the original name $\_r(o, k)$ and remove the entry: the name is \emph{forwarded}, and no goal is spawned.
\item If $Y$ is the writer of an imported pair---$U_p$ has the record $(Y, o, k, s)$ with anchor $o$---then instead replace $Y$ in $T_p^\uparrow$ with the original name $\_w(o, k)$, remove the record and the \texttt{global\_send} goal watching $Y?$, and create no entry: the name is \emph{forwarded}, and $p$ leaves the link. Single occurrence puts this case out of reach once the value has been sent: an assigned writer has no further occurrence to export.
\end{enumerate}
\end{definition}

Globalize and Localize follow a uniform principle: a global writers table entry is created at the agent that will receive an assignment on the link, and a \texttt{global\_send} goal is spawned at the agent that will send an assignment. Every global link has exactly one entry on one end and exactly one \texttt{global\_send} goal on the other. An imported end is marked---a reader by its entry, a writer by its record---so re-exporting it forwards the original name instead of anchoring a new link at the exporter.

\begin{definition}[Localize]\label{def:localize}
Given agent $q$, remote agent $p$, and globalized term $T_p^\uparrow$, the \temph{localization by $q$}, written $T_q^\downarrow$, may update the global writers table $W'_q$ and the imported-writer records $U'_q$, and spawn goals into $q$'s resolvent, as follows. For each global name in $T_p^\uparrow$:
\begin{enumerate}
\item If $\_w(p, i)$: create fresh local pair $(Y_q, Y_q?)$, replace $\_w(p, i)$ with $Y_q$ (the writer) in $T_q^\downarrow$, add the record $(Y_q, p, i, s)$ to $U'_q$ with $s$ the sending agent, and spawn goal $\texttt{global\_send}(Y_q?, \_w(p,i), p)$ into $q$'s resolvent.\ifappendix{} If the sending agent is not $p$, the name was forwarded, and the value that goal produces is held (\Cref{app:held-links}).\fi
\item If $\_r(p, i)$: create fresh local pair $(Z_q, Z_q?)$, allocate the next index $k$ in $W'_q$, add entry $(Z_q, p, i)$, and replace $\_r(p, i)$ with $Z_q?$ (the reader) in $T_q^\downarrow$. If the sending agent is not $p$, the name was forwarded, and $q$ adds the request $(\texttt{req}(\_r(p,i)), p)$ to its outgoing messages---the request is how the anchor $p$ learns the reader's current holder.\ifappendix{} The request is held (\Cref{app:held-links}).\fi
\end{enumerate}
\end{definition}

This design keeps all variables in the resolvent, enabling correct behaviour even when both ends of a variable pair are exported to the same or different agents. A re-exported reader or writer travels under its original name, so a link keeps one anchor however far its end moves.


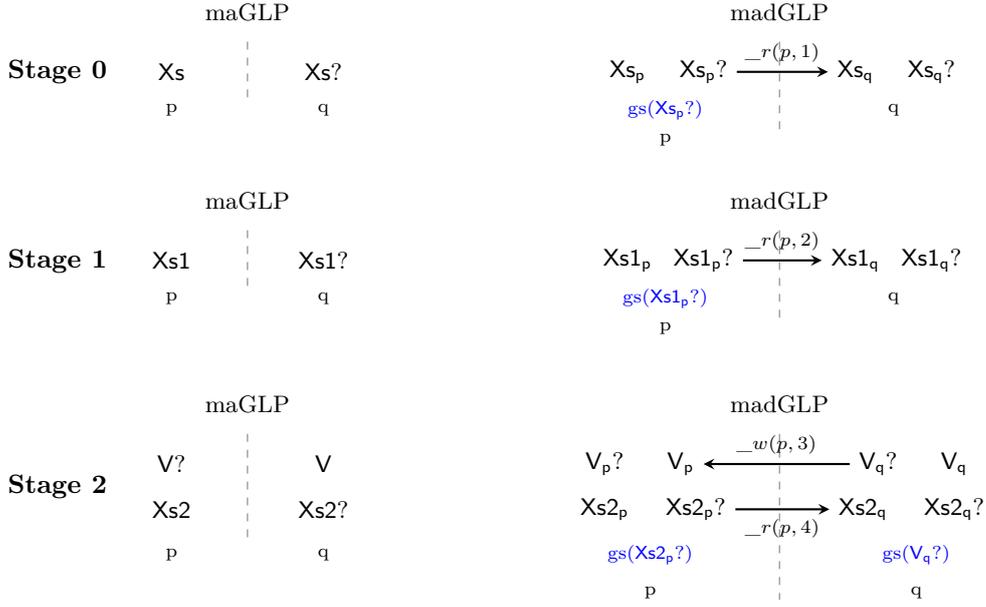
\begin{figure*}[t]
\centering
\resizebox{0.75\textwidth}{!}{%
\begin{tikzpicture}[
    var/.style={font=\small},
    gvar/.style={font=\scriptsize, text=blue},
    link/.style={->, >=stealth, thick},
    stage/.style={font=\bfseries},
    label/.style={font=\small},
    bound/.style={font=\small\itshape, text=gray},
    goal/.style={font=\scriptsize, text=blue},
    sep/.style={dashed, gray}
]

\node[stage] at (-6, 0) {Stage 0};

\node[label] at (-3.5, 0.8) {maGLP};
\draw[sep] (-3.5, 0.4) -- (-3.5, -0.4);
\node[var] (p0mX) at (-4.5, 0) {$\mathsf{Xs}$};
\node[var] (q0mX) at (-2.5, 0) {$\mathsf{Xs?}$};
\node[label] at (-4.5, -0.5) {\scriptsize p};
\node[label] at (-2.5, -0.5) {\scriptsize q};

\node[label] at (3.5, 0.8) {madGLP};
\draw[sep] (3.5, 0.4) -- (3.5, -0.8);
\node[var] (p0dXw) at (1.5, 0) {$\mathsf{Xs_p}$};
\node[var] (p0dXr) at (2.5, 0) {$\mathsf{Xs_p?}$};
\node[var] (q0dXw) at (4.5, 0) {$\mathsf{Xs_q}$};
\node[var] (q0dXr) at (5.5, 0) {$\mathsf{Xs_q?}$};
\draw[link] (p0dXr) -- node[above, font=\scriptsize] {$\_r(p,1)$} (q0dXw);
\node[goal] at (2, -0.5) {gs($\mathsf{Xs_p?}$)};
\node[label] at (2, -0.9) {\scriptsize p};
\node[label] at (5, -0.5) {\scriptsize q};

\node[stage] at (-6, -2.5) {Stage 1};

\node[label] at (-3.5, -1.7) {maGLP};
\draw[sep] (-3.5, -2.1) -- (-3.5, -2.9);
\node[var] (p1mX) at (-4.5, -2.5) {$\mathsf{Xs1}$};
\node[var] (q1mX) at (-2.5, -2.5) {$\mathsf{Xs1?}$};
\node[label] at (-4.5, -3) {\scriptsize p};
\node[label] at (-2.5, -3) {\scriptsize q};

\node[label] at (3.5, -1.7) {madGLP};
\draw[sep] (3.5, -2.1) -- (3.5, -3.3);
\node[var] (p1dXw) at (1.5, -2.5) {$\mathsf{Xs1_p}$};
\node[var] (p1dXr) at (2.5, -2.5) {$\mathsf{Xs1_p?}$};
\node[var] (q1dXw) at (4.5, -2.5) {$\mathsf{Xs1_q}$};
\node[var] (q1dXr) at (5.5, -2.5) {$\mathsf{Xs1_q?}$};
\draw[link] (p1dXr) -- node[above, font=\scriptsize] {$\_r(p,2)$} (q1dXw);
\node[goal] at (2, -3) {gs($\mathsf{Xs1_p?}$)};
\node[label] at (2, -3.4) {\scriptsize p};
\node[label] at (5, -3) {\scriptsize q};

\node[stage] at (-6, -5.5) {Stage 2};

\node[label] at (-3.5, -4.4) {maGLP};
\draw[sep] (-3.5, -4.8) -- (-3.5, -6.2);
\node[var] (p2mV) at (-4.5, -5.2) {$\mathsf{V?}$};
\node[var] (p2mX) at (-4.5, -5.8) {$\mathsf{Xs2}$};
\node[var] (q2mV) at (-2.5, -5.2) {$\mathsf{V}$};
\node[var] (q2mX) at (-2.5, -5.8) {$\mathsf{Xs2?}$};
\node[label] at (-4.5, -6.4) {\scriptsize p};
\node[label] at (-2.5, -6.4) {\scriptsize q};

\node[label] at (3.5, -4.4) {madGLP};
\draw[sep] (3.5, -4.8) -- (3.5, -7);
\node[var] (p2dVr) at (1.2, -5.2) {$\mathsf{V_p?}$};
\node[var] (p2dVw) at (2.2, -5.2) {$\mathsf{V_p}$};
\node[var] (q2dVr) at (4.8, -5.2) {$\mathsf{V_q?}$};
\node[var] (q2dVw) at (5.8, -5.2) {$\mathsf{V_q}$};
\node[var] (p2dXw) at (1.2, -5.8) {$\mathsf{Xs2_p}$};
\node[var] (p2dXr) at (2.4, -5.8) {$\mathsf{Xs2_p?}$};
\node[var] (q2dXw) at (4.6, -5.8) {$\mathsf{Xs2_q}$};
\node[var] (q2dXr) at (5.8, -5.8) {$\mathsf{Xs2_q?}$};
\draw[link] (q2dVr) -- node[above, font=\scriptsize] {$\_w(p,3)$} (p2dVw);
\draw[link] (p2dXr) -- node[below, font=\scriptsize] {$\_r(p,4)$} (q2dXw);
\node[goal] at (1.8, -6.4) {gs($\mathsf{Xs2_p?}$)};
\node[goal] at (5.3, -6.4) {gs($\mathsf{V_q?}$)};
\node[label] at (1.8, -6.9) {\scriptsize p};
\node[label] at (5.3, -6.9) {\scriptsize q};

\end{tikzpicture}%
}%
\caption{maGLP vs madGLP variable linking through three stages. Left: maGLP shared variable pairs. Right: madGLP local pairs with global links (arrows from reader to writer, labelled with global names). Dashed lines separate agents $p$ and $q$. ``gs'' denotes \texttt{global\_send} goals watching readers. In Stage 2, the V link arrow points from $q$ to $p$: when $q$ assigns $\mathsf{V_q}$, the goal gs($\mathsf{V_q?}$) sends the value back to $p$.}
\label{fig:madglp-example}
\end{figure*}

\begin{definition}[Global Link]\label{def:global-link-detail}
A \temph{global link} between agents $p$ and $q$ implements a maGLP shared variable pair $(X, X?)$ whose writer $X$ is held by $p$ and reader $X?$ by $q$; it consists of:
\begin{itemize}
\item At agent $p$: a local variable pair $(X_p, X_p?)$ where both occur in $p$'s resolvent, and a spawned goal $\texttt{global\_send}(X_p?, \_r(p,i), q)$
\item At agent $q$: a local variable pair $(X_q, X_q?)$ where both occur in $q$'s resolvent, and an entry $(X_q, p, i)$ at index $i$ in $q$'s global writers table
\end{itemize}
The maGLP writer $X$ corresponds to $X_p$ and the reader $X?$ to $X_q?$; the inner ends $X_p?$ and $X_q$ have no maGLP counterpart. The mirror case, with writer held by $q$ and reader by $p$, has a local pair $(Y_p, Y_p?)$ and entry $(Y_p, q)$ at $p$, and a local pair $(Y_q, Y_q?)$, the record $(Y_q, p, i, p)$ in $U_q$, and spawned goal $\texttt{global\_send}(Y_q?, \_w(p,i), p)$ at $q$.
\end{definition}

A sent value addressed by a reader name is retained by its sender as \emph{pending} until acknowledged: the receiver, upon assigning the entry's writer, sends an acknowledgement, which releases the pending value and closes the link, and records the acknowledging holder as the holder of every link whose reader name the value carried; a request for a pending value redirects it to the requester. Writer-name and serializer values are not acknowledged---their entries are fixed. Traffic matching no entry or pending value is stale (indices are never reused) and is dropped.

\ifappendix
\mypara{Held Links}
A link end whose other end is at an agent this runtime did not exchange the name with is \emph{held}: nothing is sent on it, no assignment arriving on it is applied, and the runtime reports the link on the agent's network input stream. Four cases arise. At the holder of a forwarded reader, the request it creates is held, and at the holder of a forwarded writer, the value its \texttt{global\_send} goal produces is held. At the anchor, an arriving request is held, and so is an assignment arriving from an agent the anchor did not export the name to. The system predicate \texttt{authorise\_link/2} answers: authorise, and the runtime releases what it held; refuse, and it drops the link. The code obtains the channel first, by a \emph{warm call} to the anchor---a third form of introduction beside the cold call and the friend-mediated introduction---specified in the Secure GLP paper~\cite{keidar2026secure}. A forwarded link is held at both ends, so two people are asked: the holder before anything is sent on the link, and the anchor when the forwarded traffic reaches it. \Cref{app:held-links} gives the definitions.
\fi

Global links parallel the encodings of synchronous into asynchronous communication in the $\pi$-calculus~\cite{milner1992calculus,honda1991object,boudol1992asynchrony}\ifappendix; Remark~\ref{rem:global-link-design} records the design alternative and the cost\fi.
\ifappendix
\begin{remark}[Global-Link Design and Cost]
\label{rem:global-link-design}
The alternative we developed first and eliminated is variable migration---moving one end of a shared pair to the other end's agent---whose defects are recorded in \Cref{app:dev-history}. A global link costs one \texttt{global\_send} goal at one end and one global writers table entry at the other; the goal and entry are released when the value is acknowledged. Message traffic per reader-name assignment is the value and its acknowledgement, plus one request after a re-export; a writer-name assignment is one message, however far the writer has been forwarded.
\end{remark}
\fi

\mypara{madGLP Local States and Transitions}
Each madGLP agent executes its local resolvent with FIFO scheduling, tracking three kinds of goals: \emph{active} (ready to reduce), \emph{suspended} (waiting on readers), and \emph{failed}. A reduction succeeds, suspends on a set $W$ of readers, or fails (Definition~\ref{def:reduction-suspend-fail}); when a reader in $W$ is later instantiated, the suspended goal is reactivated (Definition~\ref{def:reactivation-set}).

madGLP defines three unary transactions: \emph{Reduce} performs a local reduction step at one agent; \emph{Send} places a queued outgoing message on a communication channel; and \emph{Receive} applies an incoming message at its destination---a \emph{value} is assigned to the local writer recorded for it in the receiving agent's global writers table, reactivating any goals suspended on the corresponding reader, and is acknowledged; a \emph{request} directs a pending value to its sender; an \emph{acknowledgement} releases the pending value. Cold-calls---messages between agents with no prior shared variables---are realised uniformly through a reserved index-0 serializer that lets multiple senders append to an agent's network input stream, so that all three transaction kinds remain unary. We assume \emph{fair message delivery}: every message placed on a communication channel is eventually delivered to its recipient. Two madGLP transitions are $\sim$-equivalent when they perform the same Reduce (same agent, goal by identity, clause), the same Send (same agent, queued message), or the same Receive (same agent, incoming message). The formal Reduce, Send, and Receive transactions and this equivalence are defined in \appref{def:madglp-ts}{\arxivref}.

\mypara{Example: Client-Monitor Communication}
We illustrate madGLP's variable linking with a client-monitor scenario.

\Program{Concurrent Monitor}\label{program:monitor}
\begin{verbatim}
monitor(Reqs) :- monitor(Reqs?,0).
monitor([add|Reqs],Sum) :-
    Sum1 := Sum? + 1, monitor(Reqs?,Sum1?).
monitor([subtract|Reqs],Sum) :-
    Sum1 := Sum? - 1, monitor(Reqs?,Sum1?).
monitor([value(Sum?)|Reqs],Sum) :-
    integer(Sum?) |  monitor(Reqs?,Sum?).
monitor([],_).
\end{verbatim}

The notation \texttt{integer(Sum?) | Body} is a \emph{guard}: the body is executed only when the guard succeeds, i.e., when \texttt{Sum?} is instantiated to an integer.  The notation \texttt{goal@p} designates that \texttt{goal} runs at agent~$p$.

Consider the initial goal \texttt{client1(Xs)@p, monitor(Xs?)@q}, where \texttt{client1} stands for an arbitrary client process, external to the program, emitting \texttt{add}, \texttt{subtract}, and \texttt{value(\_)} messages on the stream \texttt{Xs}. The initial goal establishes a shared pair with writer $\mathsf{Xs}$ at $p$ and reader $\mathsf{Xs}?$ at $q$. The correspondence between maGLP shared variables and madGLP local pairs connected by global links proceeds through three stages (Figure~\ref{fig:madglp-example}):
\begin{itemize}
\item \textbf{Stage 0:} At boot time, the coordinator globalizes the initial goal, creating local pairs at each agent connected by global links. Agent $p$ gets $\texttt{client1}(\mathsf{Xs_p})$ and spawns $\texttt{global\_send}(\mathsf{Xs_p}?, \_r(p,1), q)$; agent $q$ gets $\texttt{monitor}(\mathsf{Xs_q}?)$ and creates entry $(\mathsf{Xs_q}, p, 1)$.
\item \textbf{Stage 1:} After $p$ assigns $\mathsf{Xs_p} := [\mathtt{add}|\mathsf{Xs1_p?}]$, the \texttt{global\_send} goal at $p$ fires, sending the value to $q$. A new link is created for $\mathsf{Xs1}$ with its own \texttt{global\_send} goal at $p$ and entry at $q$.
\item \textbf{Stage 2:} After $p$ assigns $\mathsf{Xs1_p} := [\mathtt{value}(\mathsf{V_p})|\mathsf{Xs2_p?}]$. Note that $p$ sends writer $\mathsf{V}$, so $p$ creates entry $(\mathsf{V_p}, q)$ and $q$ spawns $\texttt{global\_send}(\mathsf{V_q?}, \_w(p,k), p)$. When monitor assigns $\mathsf{V_q}$, the value flows $q{\to}p$.
\end{itemize}

Each delivered reader-name value is acknowledged, closing its link.\ifappendix{} Had $q$ passed $\mathsf{Xs?}$ on to a third agent before a value arrived, the new holder's request would have redirected the pending value to it, and the stream would continue to the new holder directly.\fi

\ifappendix
\section{madGLP Specification}
\label{app:madglp-spec}

This section contains the complete madGLP specification: full definitions, remarks, and correctness proofs. It serves as the formal specification from which AI derives the English+code spec and Dart implementation.

\subsection{Global Variable Names}
\label{app:global-variable-names}

\begin{definition}[Global Variable Name]\label{def:global-variable-name}
A \temph{global variable name} is a term of the form $\_w(p, i)$ or $\_r(p, i)$, where $p \in \Pi$ is an agent identifier and $i \in \mathbb{N}$ is an index allocated by $p$ during globalization. The name $\_w(p, i)$ denotes a writer globalized at $p$, and $\_r(p, i)$ denotes a reader globalized at $p$.
\end{definition}

A global name has two presentations. In a message between agents it stands in the place of a variable and is encoded as one (\S\ref{app:cf-terms}); it never stands for a variable in a resolvent. To a program it is a ground term---the 2-ary structures $\_r(P, I)$ and $\_w(P, I)$, the functor giving the polarity and the first argument the anchor---which is the form carried by \texttt{global\_send} goals, by \texttt{send\_to\_net} (Remark~\ref{rem:network-output-processing}), and by the reports and answers of \S\ref{app:held-links}. Global names identify the \emph{anchor} of a global link and enable message routing; an unbound imported reader, and an imported writer, are forwarded under their original name (Definition~\ref{def:globalize}), so a link's anchor is fixed however far its end travels. Over a real network, the agent name $p$ in $\_w(p, i)$ and $\_r(p, i)$ is the public key of $p$ (Remark~\ref{rem:agent-names}); the simulation realisation uses symbolic agent names.

\subsection{Global Writers Table}
\label{app:global-writers-table}

The global writers table tracks local writers that await incoming assignments from remote agents. Each entry maps a global name to the local writer that will be assigned when a message arrives.

\begin{definition}[Global Writers Table Entry]\label{def:global-writers-table-entry}
A \temph{global writers table entry} at agent $p$ is either:
\begin{itemize}
\item $(X, q)$ for entries created by Globalize: $X \in \calV$ is a local writer that will be assigned when a value arrives on the entry's name, and $q \in \Pi$ is the agent the name was exported to. A value from any other agent reaches the entry only after a forwarding, and is held (\S\ref{app:held-links})
\item $(X, q, i)$ for entries created by Localize: $X \in \calV$ is a local writer, $q \in \Pi$ is the remote agent, and $i \in \mathbb{N}$ is the index in $q$'s global name (needed to match incoming messages)
\end{itemize}
\end{definition}

\begin{definition}[Global Writers Table]\label{def:global-writers-table}
The \temph{global writers table} $W_p$ of agent $p$ is an array of global writers table entries. For entries created by Globalize at index $i$, the index $i$ is the index in the global name $\_w(p, i)$. For entries created by Localize, the entry stores the remote index explicitly.
\end{definition}

\begin{definition}[Imported-Writer Records]\label{def:imported-writer-records}
The \temph{imported-writer records} $U_p$ of agent $p$ is a set of quadruples $(Y, o, k, s)$, where $Y \in \calV$ is a local writer created by Localize from the global name $\_w(o, k)$, $o \in \Pi$ is the link's anchor, $k \in \mathbb{N}$ its index at $o$, and $s \in \Pi$ the agent the name came from. The record marks $Y$ as the sending end of that link. It is removed when the value is sent, and when $Y$ is re-exported and the name forwarded (Definition~\ref{def:globalize}).
\end{definition}

\ifappendix
\begin{remark}[What the Global Writers Table Stores]
\label{rem:global-writers-table-stores}
The table contains only writers that await incoming assignments. When agent $p$ globalizes a writer $Y$, it creates an entry for $Y$ because $p$ will receive the assignment: the agent the name was exported to holds the writer and sends the value back, or, after a forwarding, whichever agent then holds it. When agent $q$ localizes a reader global name $\_r(p, i)$, it creates an entry for its local writer because $q$ will receive the assignment. No entries are created for outgoing links---those are handled by \texttt{global\_send} goals; an outgoing link whose writer was imported also carries a record (Definition~\ref{def:imported-writer-records}).
\end{remark}
\fi

\subsection{The global\_send Predicate}
\label{app:global-send}

Outgoing communication is handled by spawned goals rather than by table entries. This uniform approach correctly handles all cases, including forwarding when both ends of a variable pair are exported. The system predicate \texttt{global\_send/3} takes a reader $T$ whose value will be sent once known, a global variable name $G$ (either $\_w(p,i)$ or $\_r(p,i)$) identifying the link, and a destination agent $Q$.

\begin{definition}[global\_send Predicate]\label{def:global-send}
The system predicate \texttt{global\_send/3} is defined as:
\begin{verbatim}
global_send(T, G, Q) :- known(T?) | '_send'(T?, G?, Q?).
\end{verbatim}
\end{definition}

The guard \texttt{known(T?)} succeeds when $T$ is assigned a non-variable term; the builtin \texttt{'\_send'(T?, G?, Q?)} globalizes $T$ and adds message $(G := T^\uparrow, Q)$ to the agent's outgoing message set. A sent reader-name value remains with its sender, \emph{pending}, until acknowledged (\S\ref{app:requests-acks}), so a later request can redirect it to the reader's current holder. The full guard and body-kernel tables, including the row for \texttt{'\_send'}, are catalogued in the companion GLP paper~\cite{shapiro2025glp}.

\ifappendix
\begin{remark}[Forwarding via global\_send]
\label{rem:forwarding-global-send}
When an agent exports both ends of a variable pair (e.g., sending $[X, X?]$ to another agent, or $X$ to one agent and $X?$ to another), the Globalize operation spawns a \texttt{global\_send} goal for the exported reader. If a value arrives on one global link and is assigned to a local writer $X$, then $X?$ becomes known, triggering any \texttt{global\_send} goal watching $X?$. This automatically forwards the value without requiring special forwarding logic in the Receive transaction.
\end{remark}
\fi

\subsection{Networking Seam Predicates}
\label{app:seam-predicates}

Four system predicates give GLP code the networking layer's services. Their effects lie outside the madGLP transition system, as those of \texttt{sign} and \texttt{self\_module} do: the transactions of \S\ref{app:madglp-local-states-transitions} do not mention them, and $\pi$ does not see them.

\begin{definition}[Seam Predicates]\label{def:seam-predicates}
The four seam predicates are defined as:
\begin{verbatim}
peer_address(P, A) :- ground(P?) | '_peer_address'(P?, A).
punch_udp(A)       :- ground(A?) | '_punch_udp'(A?).
place_declare(P, R, E) :-
    ground(P?), ground(R?) | '_place_declare'(P?, R?, E).
place_remove(P)    :- ground(P?) | '_place_remove'(P?).
\end{verbatim}
\end{definition}

\texttt{peer\_address} assigns $A$ the address at which the layer observes peer $P$; \texttt{punch\_udp} opens a path to address $A$ and returns nothing; \texttt{place\_declare} declares a place $P$ --- the region of radius $R$ about the declaring agent's location at the time of the call --- and assigns $E$ a stream of that agent's own \texttt{entered} and \texttt{exited} events for $P$, carrying also \texttt{unobservable} where the layer stops reporting crossings and \texttt{observable} where it resumes, between the two seeing no crossing and delivering none late; \texttt{place\_remove} ends the declaration of $P$, and does nothing where $P$ is not declared.

A place event is bridged below the madGLP layer, as the person's input on the UI channel is (\S\ref{subsec:madglp}): the layer reports the event, and the runtime assigns the tail writer of the named place's stream, reactivating whatever suspended on its reader. The stream is fed serializer-fashion, like the network input stream (Definition~\ref{def:index-0-serializer}): the runtime retains the new tail rather than closing the stream, so one declaration yields one stream however many entries and exits follow. The stream is closed, its tail assigned \texttt{[]}, in exactly two cases: \texttt{place\_remove}, and a further declaration of the same place, which supersedes the earlier one. An event for a place removed or superseded is dropped. A declaration the layer refuses is not one of the two: $E$ receives \texttt{unobservable}, the declaration stands, and \texttt{observable} follows if the layer later begins reporting. A refusal is the layer watching nothing, which is what \texttt{unobservable} already says; closing the stream instead would make a refusal permanent, which no refusal is. A refusal answers one declaration, so it is dropped where that declaration has been removed or superseded, as an event for such a place is. Neither an event nor the assignment it causes is a transaction, and neither has a maGLP counterpart; the correctness results concern the transition system, which external input does not enter.

The layer functions behind these predicates are the networking layer's, specified with the seam contract (\S\ref{app:in-networking}).

\subsection{Requests and Acknowledgements}
\label{app:requests-acks}

Three message kinds cross agent boundaries: a \emph{value} $G := T^\uparrow$, a \emph{request} $\texttt{req}(G)$, and an \emph{acknowledgement} $\texttt{ack}(G)$, the latter two carrying reader names only.
\begin{itemize}
\item A request is sent by an agent that localizes a forwarded reader name (Definition~\ref{def:localize}): the anchor did not itself export the name to that agent, and the request is how the anchor learns the reader's current holder.
\item On receiving $\texttt{req}(\_r(p, i))$, the anchor $p$ records the requester as the link's holder; if the value is pending, $p$ redirects it to the requester.
\item On receiving a value $\_r(p, i) := T^\uparrow$, the holder assigns the entry's writer, removes the entry, and acknowledges to the sender. Writer-name values are not acknowledged---a writer-name value always finds its fixed anchor entry---and neither are serializer values, the index-0 entry being permanent. A forwarded writer needs no request either: $\_w(o,k)$ names its destination, so the value reaches the anchor's entry from whichever agent holds the name.
\item On receiving $\texttt{ack}(\_r(p, i))$ from $s$, the anchor records $s$ as the holder of every link whose own reader name occurs in the acknowledged value---$s$ localized them---and re-addresses their pending values to $s$; it then removes the acknowledged value: the link is closed.
\item A value, request, or acknowledgement matching no entry and no pending value is dropped. Indices are never reused, so such traffic is stale: a value reaching a holder that forwarded the reader on, a request or acknowledgement for a closed link.
\end{itemize}
Single occurrence gives a reader one holder at any time, so a link's holders form a chain and exactly one holder acknowledges its value.

\subsection{Index-0 Serializer for Cold-Calls}
\label{app:index-0-serializer}

Cold-calls---messages between agents with no prior shared variables---are implemented using a reserved global name at index 0. Each agent $p$ has a well-known global reader $\_r(p, 0)$ that serves as a \emph{serializer} for incoming cold-call messages.

\begin{definition}[Index-0 Serializer]\label{def:index-0-serializer}
Index 0 in each agent's global writers table is reserved for the \emph{network input serializer}:
\begin{itemize}
\item At boot time, agent $p$ creates a permanent entry $(N_p, *)$ at index 0, where $N_p$ is the local writer for $p$'s network input stream
\item The global name $\_r(p, 0)$ is the serializer reader; any agent can send to $\_w(p, 0)$
\item Index allocation for regular global links starts at 1
\end{itemize}
\end{definition}

\ifappendix
\begin{remark}[Cold-Call via Serializer]
\label{rem:cold-call-serializer}
To send a cold-call message $T$ to agent $q$, agent $p$ spawns the goal $\texttt{global\_send}(T, \_w(q,0), q)$. When $T$ becomes known, the \texttt{'\_send'} builtin detects that index is 0 and uses serializer semantics:
\begin{enumerate}
\item The message sent is $\_w(q,0) := [T^\uparrow \mid \_w(q,0)]$, wrapping $T$ in a list cell with the serializer writer as tail
\item When $q$ receives this message, it assigns $N_q := [T_q^\downarrow \mid N'_q]$ where $N'_q$ is a fresh local writer
\item The entry at index 0 is updated to $(N'_q, *)$, not removed
\end{enumerate}
This allows multiple agents to send cold-calls to $q$ concurrently; messages are serialized onto $q$'s network input stream in an arbitrary order determined by message arrival.
\end{remark}

\begin{remark}[Network Output Processing]
\label{rem:network-output-processing}
Each agent spawns a \texttt{send\_to\_net} goal to process its network output stream:
\begin{verbatim}
send_to_net([msg(Q, T) | In]) :- ground(Q?) |
    global_send(msg(Q?, T?), '_w'(Q?, 0), Q?), send_to_net(In?).
send_to_net([]).
\end{verbatim}
When the agent writes \texttt{msg(q, T)} to its network output stream, the \texttt{send\_to\_net} goal observes this and spawns $\texttt{global\_send}(\texttt{msg}(q,T), \_w(q,0), q)$, forwarding the whole message to $q$'s index-0 serializer; the groundness-implying guard \texttt{ground(Q?)} licenses the multiple occurrences of \texttt{Q}~\cite{shapiro2025glp}. The serializer thus appends \texttt{msg}$(q,T)$, localized, to $q$'s network input stream, matching the maGLP Cold-call (Definition~\ref{definition:maGLP}).
\end{remark}

\begin{remark}[Serializer vs Regular Global Links]
\label{rem:serializer-vs-regular}
For index $i > 0$, the \texttt{'\_send'} builtin sends $G := T^\uparrow$ directly and the receiver removes the entry after assignment and acknowledges. For index 0 (serializer), the builtin wraps the term in a list cell $[T^\uparrow \mid G]$ reusing the serializer writer, and the entry is updated rather than removed. This distinction is the only special case for cold-calls; all other machinery (globalization, localization, message routing) is shared with regular global links.
\end{remark}

\begin{remark}[Potential Generalisation]
\label{rem:potential-generalization}
The many-to-one serializer mechanism at index 0 implements an efficient multiway merge for network input. This mechanism could be generalised as a language extension allowing programs to create additional serializers, but we leave such extensions to future work.
\end{remark}
\fi

\subsection{Held Links}
\label{app:held-links}

An agent \emph{exchanges} a global name with the agent it sends that name to, or receives it from. A link end at agent $p$ is \emph{held} when the agent at its other end is not the one $p$ exchanged the name with. While a link is held, $p$ places nothing on the communication channel for it and applies no assignment arriving on it. A held outgoing message remains in $M_p$, marked \emph{held}, and a held incoming assignment is retained keyed by its global name.

A held link is \emph{reported} to the program, and the runtime acts on it only once the program answers. The report is $\texttt{pending\_link}(G, S)$, with $G$ the link's global name in its ground presentation (\S\ref{app:global-variable-names}) and $S$ the agent the name or the traffic came from; it is appended to $p$'s network input stream through the index-0 serializer entry (Definition~\ref{def:index-0-serializer}), within the transaction that would otherwise have acted. $G$ names the anchor and $S$ the agent $p$ exchanged the end with, so the program has both and the runtime retains nothing further. The report is appended last in that transaction, after anything the transaction itself appends, so a link is reported only once the term carrying it can be read. The report records the module of the goal it reaches, and the record is released with the held link.

Four cases arise:
\begin{enumerate}
\item At the holder of a forwarded reader, the request of Definition~\ref{def:localize} is held. A request is created exactly when the name was forwarded, so every request a holder creates is held.
\item At the holder of a forwarded writer, the value its \texttt{global\_send} goal produces is held.
\item At the anchor, an arriving request is held. The agent the anchor exported the name to sends no request, so every request the anchor receives is held.
\item At the anchor, an assignment arriving from an agent it did not export the name to is held.
\end{enumerate}

The program answers a report with the system predicate \texttt{authorise\_link/2}, which takes the global name $L$ of a reported link and an answer $A$, either \texttt{authorise} or \texttt{refuse}.

\begin{definition}[authorise\_link Predicate]\label{def:authorise-link}
The system predicate \texttt{authorise\_link/2} is defined as:
\begin{verbatim}
authorise_link(L, A) :- known(A?) | '_authorise_link'(L?, A?).
\end{verbatim}
\end{definition}

On \texttt{authorise}, the builtin \texttt{'\_authorise\_link'} releases what was held: an outgoing message becomes unsent, and a retained assignment is applied. On \texttt{refuse} it drops the link: the held message is removed from $M_p$, a retained assignment is discarded, and the link's entry, record, and \texttt{global\_send} goal at $p$ are released. A refusal is accepted in every case, the program refusing for reasons the runtime does not see.

A global name is a ground term (\S\ref{app:global-variable-names}), so a program can write one it was never given, and a term arriving from another agent can carry one: the name is not a secret and cannot be made one. A kernel taking a global name from the program therefore acts only on a name the runtime reported to that goal's module, and does nothing otherwise. \texttt{authorise\_link/2} is the only such kernel at present; any later one inherits the rule. The check matters once an agent runs a module it adopted from another (\S\ref{app:program-shipment}): adopted code shares the runtime, and without it could answer for links reported to its host.

A forwarded link is held at both of its ends, so two people are asked and each decides independently. The holder is asked before anything is sent on the link. The anchor is asked only when the forwarded traffic reaches it, after the holder has sent it. When a link may be authorised, and the \emph{warm call} by which the code obtains the channel---a third form of introduction beside the cold call and the friend-mediated introduction---are specified in the Secure GLP paper~\cite{keidar2026secure}; madGLP fixes the mechanism only.

\begin{remark}[Authorised Runs]\label{rem:authorised-runs}
A refusal drops a link, so the value it would have carried never arrives and the maGLP Communicate it implements is never taken. The correctness results below concern \temph{authorised runs}---runs in which every reported link is authorised---as they concern runs with fair message delivery.
\end{remark}

\subsection{madGLP Local States and Transitions}
\label{app:madglp-local-states-transitions}

Each madGLP agent executes its local resolvent with FIFO scheduling, tracking three kinds of goals: \emph{active} (ready to reduce), \emph{suspended} (waiting on readers), and \emph{failed}. A reduction succeeds, suspends on a set $W$ of readers, or fails (Definition~\ref{def:reduction-suspend-fail}); when a reader in $W$ is later instantiated, the suspended goal is reactivated (Definition~\ref{def:reactivation-set}).

\begin{definition}[madGLP Local State]\label{def:madglp-local-state}
The local state of agent $p \in \Pi$ is a tuple $s_p = (R_p, W_p, U_p, M_p)$ where:
\begin{enumerate}
\item $R_p = (A_p, S_p, F_p)$ is a \emph{deterministic resolvent}: $A_p$ is a queue of active goals, $S_p$ contains suspended goals paired with blocking readers, and $F_p$ contains failed goals
\item $W_p$ is a \emph{global writers table}
\item $U_p$ is a set of \emph{imported-writer records} (Definition~\ref{def:imported-writer-records})
\item $M_p$ is a set of \emph{outgoing messages}; a sent reader-name value remains in $M_p$, marked \emph{pending}, until acknowledged, and a held message remains marked \emph{held} until authorised (\S\ref{app:held-links})
\end{enumerate}
\end{definition}

\begin{definition}[madGLP Transition System]\label{def:madglp-ts}
The \emph{madGLP transition system} over agents $P \subset \Pi$ and a GLP program is the multiagent transition system $\textrm{madGLP} = (C, c_0, T, {\sim})$ where $C$ is the set of madGLP configurations, $c_0$ is the initial configuration with empty resolvents except for the initial \texttt{agent} goal and index-0 serializer entry, $T$ is the set of Reduce, Send, and Receive transactions defined below, and two transitions are $\sim$-equivalent when they perform the same Reduce (same agent, goal by identity, clause), the same Send (same agent, queued message), or the same Receive (same agent, incoming message).
\end{definition}

Entries in the global writers table $W_p$ route incoming assignments to local writers: $(X, q)$ at index $i$ binds $X$ to assignments arriving via $\_w(p,i)$ from $q$, and $(X_q, p, i)$ binds $X_q$ to assignments arriving via $\_r(p,i)$ from $p$.  Index~0 holds the serializer entry $(N_p, *)$, which is updated (not removed) on each cold-call so that multiple senders can append to the network input stream.

\begin{definition}[madGLP Reduce Transaction]\label{def:madglp-reduce}
The \emph{unary} Reduce transaction for agent $p$ where $A_p = A \cdot A_r$ transitions $s_p \rightarrow s'_p$ with exactly one of the following holding:
\begin{enumerate}
\item \textbf{Reduce:} The GLP reduction of $A$ with the first applicable clause $C$ succeeds with $(B, \hat\sigma)$. Let $R = \mathrm{reactivate}(S_p, \hat\sigma?)$. Then:
\[
A'_p = (A_r \cdot B \cdot R)\hat\sigma\hat\sigma?, \quad S'_p = S_p \setminus \{(G, W) : G \in R\}, \quad F'_p = F_p
\]
\item \textbf{Suspend:} No clause succeeds, but the GLP reduction of $A$ with at least one clause suspends. Let $W = \bigcup_{C} W_C$ where $W_C$ is the suspension set for clause $C$. Then:
\[
A'_p = A_r, \quad S'_p = S_p \cup \{(A, W)\}, \quad F'_p = F_p
\]
\item \textbf{Fail:} No clause succeeds and no clause suspends. Then:
\[
A'_p = A_r, \quad S'_p = S_p, \quad F'_p = F_p \cup \{A\}
\]
\end{enumerate}
\end{definition}

\begin{definition}[madGLP Send Transaction]\label{def:madglp-send}
The \emph{unary} Send transaction for agent $p$ is enabled when $(m, q) \in M_p$ is unsent and not held (\S\ref{app:held-links}).  It places $m$ in the communication channel to $q$; a reader-name value remains in $M_p$, marked \emph{pending}, until acknowledged (\S\ref{app:requests-acks}), and every other message is removed from $M_p$ when sent.
\end{definition}

\begin{definition}[madGLP Receive Transaction]\label{def:madglp-receive}
The \emph{unary} Receive transaction processes a message $m$ from the communication channel, with exactly one of the following holding:
\begin{enumerate}
\item \textbf{Serializer ($m = (\_w(q, 0) := [T^\uparrow \mid \_w(q,0)])$):} The receiving agent is $q$, with entry $(N_q, *)$ at index~0 in $W_q$.  Let $N'_q$ be a fresh local writer, $\hat\sigma = \{N_q := [T_q^\downarrow \mid N'_q?]\}$, and $R = \mathrm{reactivate}(S_q, \hat\sigma?)$.  Then:
\[
A'_q = (A_q \cdot R)\hat\sigma\hat\sigma?, \quad S'_q = S_q \setminus \{(G, W) : G \in R\}, \quad F'_q = F_q
\]
The entry at index~0 in $W'_q$ becomes $(N'_q, *)$.
\item \textbf{Writer ($m = (\_w(p, i) := T^\uparrow)$, $i > 0$):} The receiving agent is $p$, with entry $(X, q)$ at index $i$ in $W_p$; the sender need not be $q$, the name having been forwarded (Definition~\ref{def:globalize}); an assignment from an agent other than $q$ is held and applied on authorisation (\S\ref{app:held-links}).  Let $\hat\sigma = \{X := T_p^\downarrow\}$ and $R = \mathrm{reactivate}(S_p, \hat\sigma?)$.  Then:
\[
A'_p = (A_p \cdot R)\hat\sigma\hat\sigma?, \quad S'_p = S_p \setminus \{(G, W) : G \in R\}, \quad F'_p = F_p
\]
The entry at index $i$ is removed from $W'_p$.
\item \textbf{Reader ($m = (\_r(p, i) := T^\uparrow)$):} The receiving agent $q$ is the one that localized $\_r(p,i)$, with entry $(X_q, p, i)$ in $W_q$.  Let $\hat\sigma = \{X_q := T_q^\downarrow\}$ and $R = \mathrm{reactivate}(S_q, \hat\sigma?)$.  Then:
\[
A'_q = (A_q \cdot R)\hat\sigma\hat\sigma?, \quad S'_q = S_q \setminus \{(G, W) : G \in R\}, \quad F'_q = F_q
\]
The entry matching $(p, i)$ is removed from $W'_q$, and $(\texttt{ack}(\_r(p,i)), s)$ is added to $M'_q$, where $s$ is the agent that sent $m$.
\item \textbf{Request ($m = \texttt{req}(\_r(p, i))$):} The receiving agent is $p$, the anchor. The request is held (\S\ref{app:held-links}); on authorisation $p$ records the requester as the holder of the link $\_r(p,i)$, and if a pending value addressed by $\_r(p,i)$ is in $M_p$, it is re-addressed to the requester and becomes unsent. A request for a closed link is dropped.
\item \textbf{Acknowledgement ($m = \texttt{ack}(\_r(p, i))$):} The receiving agent is the sender of the acknowledged value; let $s$ be the acknowledging agent and $V$ the pending value addressed by $\_r(p,i)$. The receiver records $s$ as the holder of every link whose reader name anchored at the receiver occurs in $V$, re-addressing any pending value of such a link to $s$; it then removes $V$ from its $M$. An acknowledgement for a closed link is dropped.
\end{enumerate}
\end{definition}

\begin{remark}[Fair Message Delivery]
\label{rem:fair-message-delivery}
All madGLP transactions are unary. Cold-calls, which in maGLP require binary transactions, are implemented using the index-0 serializer, decomposing into unary Send and Receive transactions. We assume \temph{fair message delivery}---a standard assumption for distributed systems with reliable channels: every message placed in a communication channel is eventually delivered to its recipient.
\end{remark}

\ifappendix
\begin{remark}[Globalize-Localize Correspondence]
\label{rem:globalize-localize-correspondence}
The pairing between Globalize and Localize ensures correct dataflow:
\begin{itemize}
\item \textbf{Writer globalized at $p$:} Globalize creates entry $(Y, q)$ at $p$. Localize records $(Y_q, p, i, p)$, spawns $\texttt{global\_send}(Y_q?, \_w(p,i), p)$ at $q$, and puts $Y_q$ (writer) in $q$'s term. When $q$ assigns $Y_q$, the spawned goal fires and sends the value to $p$, where the entry routes it to $Y$, making it available via $Y?$.
\item \textbf{Reader globalized at $p$:} Globalize spawns $\texttt{global\_send}(Y?, \_r(p,i), q)$ at $p$. Localize adds entry $(Z_q, p, i)$ at $q$ and puts $Z_q?$ (reader) in $q$'s term. When $p$ assigns $Y$, the spawned goal fires and sends the value to $q$, where the entry routes it to $Z_q$, making it available via $Z_q?$.
\end{itemize}
\end{remark}
\fi

\ifappendix
\subsection{madGLP Remarks}
\label{app:madglp-remarks}

\begin{remark}[Outgoing Messages via global\_send]
\label{rem:outgoing-messages}
The Reduce transaction does not directly generate outgoing messages. Instead, when a writer $X$ is assigned, $X?$ becomes known, which may trigger a \texttt{global\_send} goal watching $X?$. That goal's reduction (via the \texttt{'\_send'} builtin) adds the message to $M_p$. This uniform approach handles both direct sends and forwarding.
\end{remark}

\begin{remark}[Automatic Forwarding]
\label{rem:automatic-forwarding}
The Receive transaction assigns the local writer, removes the entry, and acknowledges. If the assigned writer's reader ($X?$) is being watched by a \texttt{global\_send} goal (because it was also exported), that goal will fire on a subsequent Reduce, automatically forwarding the value. No special forwarding logic is needed in Receive.
\end{remark}

\begin{remark}[Transaction Degrees]
\label{rem:madglp-transaction-degrees}
All madGLP transactions are unary: Reduce operates on a single agent's resolvent, Send removes a message from one agent's outbox, and Receive processes an incoming message at one agent. Cold-calls, which in maGLP require binary transactions, are implemented in madGLP using the index-0 serializer mechanism, which decomposes into unary Send and Receive transactions.
\end{remark}

\begin{remark}[Determinism]
\label{rem:madglp-determinism}
madGLP is deterministic at the agent level: given a local state, exactly one Reduce transition is enabled (FIFO goal selection, first-clause matching). However, madGLP is nondeterministic at the system level due to communication asynchrony: the order in which messages arrive at an agent depends on network timing, and different arrival orders may lead to different computations.
\end{remark}

\begin{remark}[Correspondence to maGLP]
\label{rem:correspondence-maglp}
The maGLP transactions are implemented in madGLP as follows:
\begin{itemize}
\item \textbf{Communicate:} The maGLP binary Communicate transaction, which atomically transfers an assignment from one agent's writer to another agent's reader, is implemented by the sequence: Reduce (assigns writer, triggering \texttt{global\_send}) $\rightarrow$ Send $\rightarrow$ Receive (applies assignment).
\item \textbf{Cold-call:} The maGLP binary Cold-call transaction is implemented using the index-0 serializer: the sender spawns $\texttt{global\_send}(T, \_w(q,0), q)$, which fires when $T$ is known, followed by Send $\rightarrow$ Receive (serializer appends to $q$'s network input stream).
\end{itemize}
The correctness of this implementation is proved below.
\end{remark}

\begin{remark}[Global Writers Table Lifecycle]
\label{rem:global-writers-lifecycle}
An entry is added to the global writers table when a global link is established with the agent as the receiver: either when globalizing a writer (expecting an incoming assignment from the remote agent that received the writer) or when localizing a reader global name (expecting an incoming assignment from the globalizer). The entry is removed when the assignment arrives and the writer is assigned. The table tracks only those writers awaiting incoming assignments from remote agents. An imported-writer record has the mirror lifecycle: added when a writer global name is localized, removed when the value is sent, or when the writer is re-exported and its name forwarded.
\end{remark}

\begin{remark}[Message Routing]
\label{rem:message-routing}
Messages use global names to enable routing:
\begin{itemize}
\item Message $\_w(p, i) := T$ is sent to $p$ (the original globalizer who exported a writer and created the entry); $p$ has an entry at index $i$ in its global writers table
\item Message $\_r(p, i) := T$ is sent to whoever localized $\_r(p, i)$; that agent searches for an entry with remote agent $p$ and remote index $i$
\end{itemize}
\end{remark}
\fi

\subsection{Canonical Encoding}
\label{app:canonical-encoding}

The Send transaction places assignment messages in communication channels (Definition~\ref{def:madglp-send}); over a network, a message travels as a byte string. Signatures and hashes of terms and artefacts are computed over these bytes independently at different agents and must agree byte for byte; the encoding is therefore part of the specification. The protocols that sign and hash shipped content are specified in the Secure GLP paper~\cite{keidar2026secure}.

\begin{definition}[Canonical Encoding]\label{def:canonical-encoding}
A \temph{canonical encoding} is an injective function $e$ from globalized terms and assignment messages to byte strings.
\end{definition}

Injectivity makes every message decodable, as localization requires. The canonical encoding has the following properties:
\begin{enumerate}
\item \emph{Hardware-independent:} integer widths and byte order are pinned; a term yields the same byte string on every machine.
\item \emph{Address-free:} in a globalized term every variable has been replaced by a global name (Definition~\ref{def:globalize}), so a variable appears in the encoding as its global name---agent and index---with its reader/writer polarity; no heap address appears.
\item \emph{Globalizer's names:} the encoding is of the globalized term exactly per Definition~\ref{def:globalize}---names are the allocating agent's: a forwarded reader keeps its original anchor's name.
\item \emph{Canonical on ground terms:} globalization is the identity on ground terms, so a ground term $T$ has one encoding $e(T)$, identical at every agent---the byte string over which terms are signed and hashed.
\end{enumerate}
\ifappendix The byte-level layout of $e$---primitive encodings, term and message grammar, the instruction encoding, the artefact, and the loader---is given in Appendix~\ref{app:code-format}.\fi

\subsection{Compiled Programs as Shippable Values}
\label{app:program-shipment}

madGLP ships code the way it ships terms: a compiled program is a value of the primitive type Module. A GLP program is a hierarchy of modules~\cite{shapiro2026types}; the sender links it into a single flat program and compiles it to an artefact, carried by a Module constant. One canonical encoding (Definition~\ref{def:canonical-encoding}) thus serves terms and code. The compiled program ships inside an assignment message like any other value, and the receiving agent loads it at runtime; a goal is posted to a loaded module by \texttt{run/3}, whose type discipline is the module system's~\cite{shapiro2026types}. The correctness theorems of this paper fix the program; runtime loading sits outside them, treated where composition is treated, in the Secure GLP paper~\cite{keidar2026secure}.

The engine follows the sequential abstract machine of Flat Concurrent Prolog~\cite{houri1989sequential}: a program is compiled to a string of instructions that the engine interprets directly. GLP departs only where the languages differ---the instruction set is GLP's (it matches rather than unifies and pairs each reader with a writer), and the byte order is fixed to one canonical order rather than the running machine's, so an artefact's bytes are read the same way on every machine and are what its compiled identity hashes. A recipient whose machine uses a different order inverts the bytes locally before executing and inverts them again before shipping the program onward, leaving the shipped, hashed bytes unchanged (Appendix~\ref{app:code-format}). Which programs are shipped when, the identities and hashes that name them, and trust in shipped code are specified in the Secure GLP paper~\cite{keidar2026secure}.

\subsection{Correctness Proofs}
\label{app:madglp-correctness-proofs}

\begin{lemma}[Globalize-Localize Correspondence]\label{lem:globalize-localize}
Let $T$ be a term with variables at agent $p$, let $T_p^\uparrow$ be its globalization by $p$ for remote agent $q$, and let $T_q^\downarrow$ be the localization by $q$. Then:
\begin{enumerate}
\item For each writer $Y$ in $T$: globalization creates entry $(Y, q)$ at index $i$ in $p$'s global writers table; localization creates fresh pair $(Y_q, Y_q?)$ at $q$ and spawns $\texttt{global\_send}(Y_q?, \_w(p,i), p)$. These form a global link implementing shared pair $(Y, Y?)$.

\item For each reader $Y?$ in $T$: globalization spawns $\texttt{global\_send}(Y?, \_r(p,i), q)$ at $p$; localization creates fresh pair $(Z_q, Z_q?)$ at $q$ with entry $(Z_q, p, i)$. These form a global link implementing shared pair $(Y, Y?)$.

\item The structure of $T$ is preserved: $T_q^\downarrow$ has the same functor/arity structure as $T$, with each variable replaced by its corresponding local variable at $q$.

\item For each unbound imported reader $Y?$ in $T$ with entry $(Y, o, k)$ at $p$: globalization by $p$ emits the original name $\_r(o, k)$ and removes $p$'s entry; localization at $q$ creates a fresh pair with an entry recording $(o, k)$ and sends $\texttt{req}(\_r(o,k))$ to $o$. The link's anchor is unchanged and its holder is now $q$.

\item For each imported writer $Y$ in $T$ with record $(Y, o, k, s)$ at $p$: globalization by $p$ emits the original name $\_w(o, k)$ and removes $p$'s record and the \texttt{global\_send} goal watching $Y?$; localization at $q$ creates a fresh pair with the record $(Y_q, o, k, p)$ and spawns $\texttt{global\_send}(Y_q?, \_w(o,k), o)$. The link's anchor is unchanged and its sending end is now $q$.
\end{enumerate}
\end{lemma}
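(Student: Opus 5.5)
The proof is a direct unfolding of Definitions~\ref{def:globalize} and~\ref{def:localize}, variable by variable, followed by a check against Definition~\ref{def:global-link-detail} that the resulting pieces form a global link. By the SO invariant (Proposition~\ref{lem:maglp-so-preservation}) each variable occurs in $T$ at most once. Globalize and Localize therefore act independently on distinct occurrences, and each occurrence falls into exactly one case. The cases are: a fresh writer (clause 1 of Globalize), a fresh reader (clause 2), an imported reader (clause 3), and an imported writer (clause 4). Which case applies is decided by whether $p$ holds an entry $(Y,o,k)$ or a record $(Y,o,k,s)$ for the variable. This gives the case split for items 1, 2, 4 and 5. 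Item 3 is a separate structural induction.

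\emph{Items 1 and 2.} For a writer $Y$ with no record, Globalize clause 1 creates $(Y,q)$ at a fresh index $i$ and emits $\_w(p,i)$. Localize clause 1, with sender $p$, creates $(Y_q,Y_q?)$, records $(Y_q,p,i,p)$ and spawns $\texttt{global\_send}(Y_q?,\_w(p,i),p)$. This is exactly the mirror case of Definition~\ref{def:global-link-detail}: the maGLP writer corresponds to $Y_q$ at $q$ and the reader to $Y?$ at $p$. For a reader $Y?$ with no entry, Globalize clause 2 spawns $\texttt{global\_send}(Y?,\_r(p,i),q)$. Localize clause 2, with sender $p$ and hence no request, creates $(Z_q,Z_q?)$ and entry $(Z_q,p,i)$. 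This matches the primary case of Definition~\ref{def:global-link-detail}, with outer ends $Y$ and $Z_q?$. Index freshness (indices are never reused) gives the uniqueness of names. This justifies the ``exactly one entry, exactly one \texttt{global\_send}'' invariant.

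\emph{Item 3.} I will argue by induction on the structure of $T$. Globalization replaces only variable leaves by ground name terms and leaves functors and arities unchanged. Localization replaces only name leaves by variables. Composing the two maps each variable leaf to a fresh local variable at $q$ of the same polarity. For fresh readers and imported readers that variable is $Z_q?$. For writers it is $Y_q$. Every other node is fixed.

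\emph{Items 4 and 5.} For an imported reader, Globalize clause 3 emits $\_r(o,k)$, deletes $p$'s entry and spawns nothing. On the receiving side the sender is $p\neq o$, so Localize clause 2 creates an entry recording $(o,k)$ and queues $\texttt{req}(\_r(o,k))$ to $o$. The anchor remains $o$. By the Request case of Definition~\ref{def:madglp-receive}, $o$ records $q$ as the holder once the request is received and authorised. For an imported writer, Globalize clause 4 emits $\_w(o,k)$ and removes the record and the watching goal. Localize clause 1 records $(Y_q,o,k,p)$ and spawns $\texttt{global\_send}(Y_q?,\_w(o,k),o)$. The anchor is unchanged, and the sending end is now $q$.

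I expect the main obstacle to be the precise reading of ``holder is now $q$'' in item 4. At the moment of localization only the request has been queued, and $o$'s knowledge of $q$ arrives later, through an authorised Receive. I would state item 4 as a claim about the link's structure: entry at $q$, anchor $o$, and the unique pending request. Its effect at $o$ I would defer to the authorised-run assumption of Remark~\ref{rem:authorised-runs}. A second subtle point is that clause 3 presupposes the reader is still unbound. If a value has already arrived, the entry is gone, and the reader falls under clause 2 as a fresh reader. I would note this explicitly so that the case split stays exhaustive.
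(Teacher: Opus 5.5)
Your skeleton is the paper's: a case-by-case unfolding of Definitions~\ref{def:globalize} and~\ref{def:localize}, with item~3 handled by structural traversal. The difference is in how you discharge the ``implementing'' clauses.

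\emph{Items 1 and 2.} You check statically that the pieces produced are an instance of Definition~\ref{def:global-link-detail}. Item~2 is its primary case, with outer ends $Y$ and $Z_q?$. Item~1 is its mirror case, with $Y_q$ as the maGLP writer and $Y?$ at $p$ as its reader. The paper instead traces the dataflow. Once the outer writer end is assigned, the \texttt{global\_send} goal fires and Send places the message. Receive at the other agent then finds the entry and assigns the inner writer, so the value appears at the outer reader end. Your version makes explicit which case of the link definition each item instantiates, which the paper leaves implicit. The paper's version supplies the behavioural content that Propositions~\ref{prop:communicate-implementation} and~\ref{prop:cold-call-implementation} and Corollary~\ref{cor:global-links-correct} cite this lemma for. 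You should therefore add the short trace rather than let ``implementing'' rest on the wording of Definition~\ref{def:global-link-detail}.

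\emph{Items 4 and 5.} The same contrast holds. The paper goes past the structural claim. For item~4 it uses fair delivery to bring the request to $o$ and the redirected pending value to $q$, and single occurrence to get exactly one closing acknowledgement. For item~5 it uses single occurrence to get a unique sending end whose value reaches $o$'s entry in one message. Conversely, your appeal to authorised runs is more faithful to the held-link rules than the paper's proof. The paper never mentions that the request is held, both at $q$ before it is sent (case~1 of \S\ref{app:held-links}) and at $o$ on arrival.

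\emph{The step $p\neq o$.} You assert that the sender $p$ differs from the anchor $o$. This matters because Localize issues a request only in that case, and it does not follow from the hypotheses. Suppose $o$ exports $\_r(o,k)$ to some $r$, and $r$ forwards it back unbound. Then $o$ holds an entry $(Z,o,k)$ anchored at itself. Re-exporting $Z?$ to $q$ falls under the forwarding clause of Globalize. But Localize at $q$ sees the sender equal to the anchor and sends no request. Item~4 therefore needs $p\neq o$ as a hypothesis, or an argument excluding names that return to their anchor. The paper's proof is silent on this too.
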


\begin{proof}
By the definitions of Globalize (Definition~\ref{def:globalize}) and Localize (Definition~\ref{def:localize}):

For (1): When $Y$ is a writer in $T$, globalization replaces $Y$ with $\_w(p, i)$ and creates entry $(Y, q)$ at index $i$ in $p$'s global writers table. Localization at $q$ sees $\_w(p, i)$, creates fresh pair $(Y_q, Y_q?)$, spawns $\texttt{global\_send}(Y_q?, \_w(p,i), p)$, and replaces $\_w(p, i)$ with $Y_q$ (writer) in $T_q^\downarrow$. When $q$ assigns $Y_q$, the value becomes available via $Y_q?$, the \texttt{global\_send} goal fires and sends message $\_w(p,i) := T^\uparrow$ to $p$. Agent $p$'s Receive finds entry $(Y, q)$ at index $i$ and assigns $Y := T^\downarrow$, making the value available via $Y?$. This matches maGLP semantics where the exported writer is assigned by the remote agent.

For (2): When $Y?$ is a reader in $T$, globalization replaces $Y?$ with $\_r(p, i)$ and spawns $\texttt{global\_send}(Y?, \_r(p,i), q)$ at $p$. Localization at $q$ creates fresh pair $(Z_q, Z_q?)$, creates entry $(Z_q, p, i)$, and replaces $\_r(p, i)$ with $Z_q?$ (reader) in $T_q^\downarrow$. When $p$ assigns $Y$, the value becomes available via $Y?$, the \texttt{global\_send} goal fires and sends $\_r(p,i) := T^\uparrow$ to $q$. Agent $q$'s Receive finds entry $(Z_q, p, i)$ and assigns $Z_q := T^\downarrow$, making the value available via $Z_q?$. This implements the semantics where the exported reader receives the value from the globalizer.

For (3): Both globalization and localization traverse $T$ structurally, replacing only variables while preserving all functors and their arities.

For (4): By the reader-forwarding case of Definition~\ref{def:globalize} and by Definition~\ref{def:localize}, the entry moves with the reader occurrence while the name and its anchor are fixed. The request reaches $o$ by fair delivery; $o$ redirects the pending value, or sends it once produced, to $q$; $q$'s entry consumes it and $q$ acknowledges, closing the link (\S\ref{app:requests-acks}). Single occurrence makes the holder unique at any time, so exactly one acknowledgement closes the link.

For (5): By the writer-forwarding case of Definition~\ref{def:globalize} and case 1 of Definition~\ref{def:localize}, the record and its goal move with the writer occurrence while the name and its anchor are fixed. Single occurrence makes the sending end unique at any time, and $\_w(o,k)$ names its destination, so the value $q$ produces reaches $o$'s entry in one message, without request or acknowledgement.
\end{proof}

\ifappendix
\begin{lemma}[Both Ends Exported]\label{lem:both-ends}
When agent $p$ globalizes term $T$ containing both $X$ and $X?$ for agent $q$, the resulting madGLP configuration correctly implements the maGLP semantics where both ends of the pair are available to $q$.
\end{lemma}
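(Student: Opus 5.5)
The plan is a case analysis on Definition~\ref{def:globalize} applied to a term $T$ at $p$ that contains both $X$ and $X?$. Each occurrence is globalized separately, since the definition treats every variable occurrence independently and SO gives exactly one of each. The writer $X$ falls under case~1: it yields $\_w(p,i)$ and the entry $(X,q)$ at index $i$ of $W_p$. The reader $X?$ falls under case~2: it yields $\_r(p,j)$ and spawns $\texttt{global\_send}(X?,\_r(p,j),q)$ at $p$. If either end was itself imported, cases~3 and~4 apply instead and forward the original names. Lemma~\ref{lem:globalize-localize}(4,5) then gives the same picture with anchors $o$ in place of $p$, so the main argument treats the fresh case.

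Next we apply Localize at $q$. By Lemma~\ref{lem:globalize-localize}(1), $\_w(p,i)$ becomes a writer $Y_q$, together with the record $(Y_q,p,i,p)$ and the goal $\texttt{global\_send}(Y_q?,\_w(p,i),p)$. By Lemma~\ref{lem:globalize-localize}(2), $\_r(p,j)$ becomes a reader $Z_q?$, together with the entry $(Z_q,p,j)$. The madGLP configuration therefore contains a chain
$Y_q \leadsto Y_q? \to X \leadsto X? \to Z_q \leadsto Z_q?$.
Here $\leadsto$ is local availability within an agent, and $\to$ is a global link, realised by a \texttt{global\_send} goal on one side and a table entry on the other. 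In the refinement map, $Y_q$ plays the maGLP writer $X$ at $q$ and $Z_q?$ plays the maGLP reader $X?$ at $q$. The local pair $(X,X?)$ at $p$ is an inner node with no maGLP counterpart, so $\tau$ erases it in the same way as the inner ends of Definition~\ref{def:global-link-detail}.

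Then we check the dataflow, i.e.\ simulation of the single maGLP Communicate on $(X,X?)$, which is now local to $q$. Suppose $q$ reduces and assigns $Y_q:=T'$. The gs goal at $q$ fires and sends $\_w(p,i):=T'^\uparrow$. Receive at $p$ finds the entry $(X,q)$ and assigns $X:=T'^\downarrow$, so $X?$ becomes known. By Remark~\ref{rem:automatic-forwarding}, the gs goal watching $X?$ then fires and globalizes the value back for $q$. This step uses Lemma~\ref{lem:globalize-localize}(3) together with the observation that globalize-then-localize is structure-preserving up to the variable renaming tracked by $\tau$. Receive at $q$ finds $(Z_q,p,j)$ and assigns $Z_q$, and $q$ sends an ack. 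The whole Reduce/Send/Receive/Reduce/Send/Receive sequence maps under $\tau$ to one maGLP Reduce at $q$ followed by one Communicate $q$ to $q$; every other step is a stutter. Liveness comes from FIFO at each agent together with fair message delivery and authorised runs (Remark~\ref{rem:authorised-runs}). These guarantee that the two gs goals and two messages are eventually processed, so a maGLP Communicate class that stays enabled is eventually taken.

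The main obstacle is making $\tau$ precise on intermediate configurations. While the value sits at $p$ or in transit, $\tau$ must report it as an undelivered reader assignment in $\sigma_q$, not as a delivered one. The same holds for goals suspended on $Z_q?$, which must map to suspended maGLP goals on $X?$. The first thing to try is defining $\tau$ to collapse the whole chain. The assignment to $X$ counts as recorded in $\sigma_q$ as soon as $Y_q$ is assigned, and Communicate is taken exactly when $Z_q$ is assigned. This makes every intermediate step a stutter, and disjoint substitution commutativity (Lemma~\ref{lem:substitution-commutativity}) handles interleaving with unrelated reductions at $q$ and $p$.
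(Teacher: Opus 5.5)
Your proposal is correct and follows the paper's proof. The paper applies Globalize cases~1 and~2 at $p$, then Localize cases~1 and~2 at $q$, which gives the two independent pairs $(Y_q,Y_q?)$ and $(Z_q,Z_q?)$, and then traces an assignment to $Y_q$ relayed through $p$'s local pair $(X,X?)$ to $Z_q?$; your refinement-map and liveness discussion goes beyond that, since the paper's proof stops at the trace. Your aside on imported ends (Globalize cases~3 and~4) is vacuous rather than needed: under SO an imported reader's writer exists only in its table entry, and an imported writer's reader occurs only in its \texttt{global\_send} goal, so neither can occur in $T$ together with its partner.
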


\begin{proof}
Globalization of $T = [\ldots, X, \ldots, X?, \ldots]$ at $p$ for $q$ proceeds as follows:
\begin{itemize}
\item For writer $X$: entry $(X, q)$ at index 1, no spawn, term gets $\_w(p,1)$
\item For reader $X?$: spawns $\texttt{global\_send}(X?, \_r(p,2), q)$, no entry, term gets $\_r(p,2)$
\end{itemize}

Localization at $q$ creates:
\begin{itemize}
\item For $\_w(p,1)$: pair $(Y_q, Y_q?)$, record $(Y_q, p, 1, p)$, spawns $\texttt{global\_send}(Y_q?, \_w(p,1), p)$, term gets $Y_q$ (writer)
\item For $\_r(p,2)$: pair $(Z_q, Z_q?)$, entry $(Z_q, p, 2)$, term gets $Z_q?$ (reader)
\end{itemize}

The term at $q$ is $[\ldots, Y_q, \ldots, Z_q?, \ldots]$ with two independent pairs. Agent $q$ does not know they are connected---that connection exists only through $p$'s local pair $(X, X?)$.

When $q$ assigns $Y_q := T$:
\begin{enumerate}
\item $Y_q?$ becomes known
\item $\texttt{global\_send}(Y_q?, \_w(p,1), p)$ fires, sends $\_w(p,1) := T^\uparrow$ to $p$
\item $p$'s Receive finds entry $(X, q)$ at index 1 and assigns $X := T^\downarrow$
\item $X?$ becomes known at $p$
\item $\texttt{global\_send}(X?, \_r(p,2), q)$ fires, sends $\_r(p,2) := T^\uparrow$ to $q$
\item $q$'s Receive finds entry $(Z_q, p, 2)$ and assigns $Z_q := T^\downarrow$
\item $Z_q?$ becomes known at $q$
\end{enumerate}

The assignment to $Y_q$ flows through $p$'s local pair to $Z_q?$. The same analysis applies when $X$ and $X?$ are sent to different agents.
\end{proof}

\fi

\mypara{The Implementation Mapping}

\begin{definition}[Reconstructed Resolvent]\label{def:reconstructed-resolvent}
Given a madGLP local state $(R_p, W_p, U_p, M_p)$ where $R_p = (A_p, S_p, F_p)$, the \temph{reconstructed resolvent} at agent $p$ is:
$$G_p = A_p \cup \{A : (A, W) \in S_p\} \cup F_p$$
\end{definition}

\begin{definition}[Reconstructed Readers Substitution]\label{def:reconstructed-sigma}
Given a madGLP configuration $c'$, the \temph{reconstructed readers substitution} $\sigma_p$ for agent $p$ is derived from undelivered values: for each value produced at $p$ whose link is still open---queued, pending, or in a communication channel, its addressed entry not yet consumed---include the corresponding reader assignment in $\sigma_p$ according to the variable correspondence.
\end{definition}

\begin{definition}[Implementation Mapping $\pi$]\label{def:madglp-pi-full}
The implementation mapping $\pi$ from madGLP configurations to maGLP configurations is defined as follows. For each agent $p$:
$$\pi_p(R_p, W_p, U_p, M_p) = (G_p, \sigma_p)$$
where $G_p$ is the reconstructed resolvent (Definition~\ref{def:reconstructed-resolvent}) with \texttt{global\_send} goals removed and each local variable replaced by its corresponding maGLP variable according to the variable correspondence (Definition~\ref{def:global-link-detail}), and $\sigma_p$ is the reconstructed readers substitution (Definition~\ref{def:reconstructed-sigma}).

The global mapping $\pi: C' \rightarrow C$ applies this construction to each agent's local state.
\end{definition}

The mapping $\pi$ forgets:
\begin{itemize}
\item The partition of goals into active, suspended, and failed
\item The suspension sets $W$ for suspended goals
\item The global writers tables $W_p$ and the imported-writer records $U_p$ (routing information)
\item The message queues $M_p$ (except as they contribute to $\sigma$)
\item The \texttt{global\_send} goals (implementation mechanism)
\item The pairing of local variables---corresponding variables at different agents map to the same maGLP variable
\end{itemize}

\mypara{Transaction Correspondence}

\begin{proposition}[Reduce Implementation]\label{prop:reduce-implementation}
The madGLP Reduce transaction for agent $p$ projects via $\pi$ to a maGLP Reduce transaction for $p$.
\end{proposition}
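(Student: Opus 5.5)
The plan is to split the madGLP Reduce transaction of Definition~\ref{def:madglp-reduce} into its three cases, Reduce, Suspend and Fail, and track each through $\pi$ (Definition~\ref{def:madglp-pi-full}). Within the Reduce case there is a further split: whether the reduced goal $A$ is an ordinary program goal or a \texttt{global\_send} goal.

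\textbf{Suspend and Fail.} These two cases only move $A$ between $A_p$, $S_p$ and $F_p$. The reconstructed resolvent (Definition~\ref{def:reconstructed-resolvent}) is the union of these three parts, so it is unchanged. Neither case touches $W_p$, $U_p$ or $M_p$, so the readers substitution of Definition~\ref{def:reconstructed-sigma} is unchanged too. The transition therefore projects to a stutter. I would argue that this is consistent with the statement, since stutters are removed by Definition~\ref{def:implementation}. The same reasoning covers a reduction of a \texttt{global\_send} goal: $\pi$ erases both that goal and its effect, which is to enqueue a message in $M_p$. The value it carries was already counted in $\sigma_p$ from the moment the watched writer was bound, because the link is open. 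So this step is also a stutter, or belongs to Communicate in Proposition-level bookkeeping later.

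\textbf{Program-goal Reduce.} Here $A$ is reduced with the first applicable clause $C$, giving $(B,\hat\sigma)$. Let $\rho$ be the variable correspondence. I would show three things in turn.

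First, $A\rho$ lies in $G_p$, and the reduction of $A\rho$ with $C$ succeeds with writer mgu $\hat\sigma\rho$. The reason is that $\rho$ is a GLP renaming on local variables, mapping outer ends to maGLP variables. Term matching (Definition~\ref{def:term-matching}) commutes with renaming, and $C$ is still the \emph{first} succeeding clause because failure and suspension are also invariant under renaming.

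Second, the new reconstructed goal is $(G_p\setminus\{A\rho\}\cup B\rho)\hat\sigma\rho$. The point to check is that madGLP's immediate application of $\hat\sigma?$ to the queue matches maGLP's deferred application. In maGLP, $\hat\sigma?$ is added to $\sigma_p$ instead. Single occurrence (Proposition~\ref{lem:maglp-so-preservation}) means each reader of an assigned writer occurs at most once. I would then argue that the immediate local application corresponds to the maGLP Reduce followed by local Communicate steps. Alternatively, I would identify the image with a configuration in which those assignments sit in $\sigma_p$.

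Third, $\sigma'_p=\sigma_p\circ\hat\sigma?\rho$ for readers whose counterpart is remote. Readers watched by \texttt{global\_send} correspond to shared pairs. Binding their inner writer opens an undelivered value, and that value now contributes exactly the reader assignment $X?:=T$ to $\sigma_p$.

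\textbf{Main obstacle.} The hardest step is the discrepancy in the second point. madGLP applies $\hat\sigma?$ locally and at once, while a single maGLP Reduce leaves $\sigma'_p=\sigma_p\circ\hat\sigma?$ with readers uninstantiated. I would handle it as in Lemma~\ref{lem:dglp-live}. The idea is to have the reconstructed readers substitution record only cross-agent assignments, and to let $\pi$'s image of one madGLP Reduce be a maGLP Reduce followed by the local Communicate steps applying $\hat\sigma?$.

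This makes the image a short composite computation rather than a single transition. Definition~\ref{def:implementation} explicitly allows one implementation step to correspond to several specification steps, so the statement has to be read that way. If that reading is too weak, the fallback is to define $\pi$ so that the reader counterparts of locally bound pairs are already applied. In that case the projection is literally a single maGLP Reduce with $\sigma'_p$ restricted to exported readers.
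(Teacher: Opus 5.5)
Your case split is the paper's. Suspend and Fail only move the goal among $A_p$, $S_p$, $F_p$, so they project to stutters, and a successful reduction is sent to a maGLP Reduce. The paper's proof stops at asserting that applying $\hat\sigma\hat\sigma?$ to local goals is mapped by $\pi$ ``to the maGLP Reduce result''. It says nothing about \texttt{global\_send} goals or about immediate reader application.

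Your diagnosis of the reader-application discrepancy is correct and sharper than the paper's. Because $\pi$ builds $\sigma_p$ only from values on open links, consider reducing $p(X)$ with the unit clause $p(a)$ from $(\{p(X),q(X?)\},\emptyset)$, with $X$ never exported. The madGLP step has image $(\{q(a)\},\emptyset)$, while the maGLP Reduce target is $(\{q(X?)\},\{X?:=a\})$. Only a further unary Communicate from $p$ to $p$ reaches the image. So the proposition holds only in your composite sense, which is also how the paper treats dGLP's immediate reader application in Lemma~\ref{lem:dglp-live}. Note, though, that the correctness obligation stated after Definition~\ref{def:implementation-properties} requires consecutive distinct configurations of $\pi(r')$ to be single maGLP transitions. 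The composite therefore rests only on the informal remark after Definition~\ref{def:implementation}.

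Two steps do not go through as written.

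First, the fallback is not a proof. The target of a maGLP Reduce is fixed by Definition~\ref{definition:maGLP}: readers in $G_p$ are untouched and $\sigma'_p=\sigma_p\circ\hat\sigma?$ in full. No choice of $\pi$ makes an image with instantiated readers and a restricted $\sigma'_p$ ``literally a single maGLP Reduce''. Drop it and state that the literal claim fails.

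Second, the composite correspondence needs the reader $X?$ of a locally bound writer to lie in $A_r\cdot B$ or in a goal reactivated by $\hat\sigma?$. Definition~\ref{def:madglp-reduce} does not apply $\hat\sigma?$ to the remaining goals of $S_p$, nor to $F_p$. A suspended goal can hold $X?$ outside its suspension set, at an argument matched only against head writers. In that case the image keeps $X?$ uninstantiated while $\sigma_p$ lacks $X?:=T$, since there is no link. That is neither a Reduce nor a Reduce followed by Communicate. The paper does not handle this either. You need one of two repairs:
\begin{itemize}
\item read bindings as heap-global, as the Dart engine does; or
\item let $\pi$ retain locally produced reader assignments whose reader still occurs, computed along the run as dGLP's $\tau$ is.
\end{itemize}

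Smaller points:
\begin{itemize}
\item $\rho$ is not a GLP renaming: inner ends have no image, and in the both-ends case two local pairs map onto one maGLP pair. Injectivity and polarity preservation on program-goal variables do suffice for term matching.
\item The writer bound at $p$ is the outer end $X_p$, not an ``inner writer''.
\item Treating the \texttt{global\_send} step as a stutter depends on counting a value in $\sigma_p$ from the binding of its writer. The parenthetical in Definition~\ref{def:reconstructed-sigma} (``queued, pending, or in a communication channel'') suggests counting starts at the \texttt{global\_send} step instead. That reading would move the change of $\sigma_p$ there and leave the program-goal Reduce adding nothing remote. Fix one reading explicitly.
\end{itemize}
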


\begin{proof}
madGLP Reduce selects the head of the active queue and attempts reduction with the first applicable clause. If reduction succeeds with $(B, \hat\sigma)$, the substitutions $\hat\sigma$ and $\hat\sigma?$ are applied to local goals; $\pi$ maps the result to the maGLP Reduce result. If Suspend or Fail, the goal is reclassified but remains in $G_p = \pi(R_p)$.
\end{proof}

\begin{proposition}[Communicate Implementation]\label{prop:communicate-implementation}
maGLP's binary Communicate transaction from $p$ to $q$ is correctly implemented by madGLP's mechanism: a \texttt{global\_send} goal fires (via Reduce) at the writer-owner, then Send, then Receive at the reader-owner.
\end{proposition}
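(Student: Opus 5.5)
I will trace the three madGLP steps through $\pi$ (Definition~\ref{def:madglp-pi-full}) and show that, up to stutters, they project to exactly one maGLP Communicate from $p$ to $q$. Fix a global link implementing the shared pair $(X, X?)$, with writer $X$ at $p$ (local pair $(X_p, X_p?)$, goal $\texttt{global\_send}(X_p?, \_r(p,i), q)$) and reader $X?$ at $q$ (local pair $(X_q, X_q?)$, entry $(X_q, p, i)$), as in Definition~\ref{def:global-link-detail}. The mirror case, with a writer name $\_w(p,i)$ and an entry $(Y, q)$, is symmetric. I treat the direct case first and then the forwarded cases.

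\emph{Step 1 (Reduce at $p$ assigning $X_p := T$).} By Proposition~\ref{prop:reduce-implementation}, this projects to a maGLP Reduce at $p$. The readers counterpart $X? := T$ enters $\sigma_p$ only through Definition~\ref{def:reconstructed-sigma}, once a value is queued. The subsequent Reduce of the \texttt{global\_send} goal globalizes $T$ and puts $(\_r(p,i) := T^\uparrow, q)$ into $M_p$. Under $\pi$ this step is a stutter on $G_p$, because \texttt{global\_send} goals are erased. On $\sigma_p$ it realises the assignment $X? := T$. So I will align the definitions: the assigning Reduce together with the \texttt{global\_send} firing corresponds to the maGLP Reduce that adds $X? := T$ to $\sigma_p$. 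The globalization side effects (new entries at $p$, spawned goals) are invisible to $\pi$, by the forgetting list.

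\emph{Step 2 (Send).} The message moves from ``queued'' to ``in channel'' and stays pending. Definition~\ref{def:reconstructed-sigma} counts both states, so $\pi$ is unchanged and Send is a stutter.

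\emph{Step 3 (Receive at $q$, Reader case of Definition~\ref{def:madglp-receive}).} The entry $(X_q,p,i)$ is consumed. Applying $\{X_q := T_q^\downarrow\}$ and its readers counterpart substitutes into the goals holding $X_q?$. Under the variable correspondence these are exactly the goals holding $X?$, so $G_q$ becomes $G_q\{X? := T\}$. At the same moment the link is no longer open, so $X? := T$ leaves $\sigma_p$. This is precisely maGLP Communicate $p$ to $q$. Here I invoke Lemma~\ref{lem:globalize-localize}(2,3): $T_q^\downarrow$ has the same structure as $T$, and its fresh variables map to the same maGLP variables as $T$'s. The reactivation set, the acknowledgement added to $M_q$, and the later Receive of the ack at $p$ are all forgotten by $\pi$, so they are stutters.

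The main obstacle is keeping $\pi$ well defined and consistent across forwarding, where a reader is re-exported before its value arrives. In that case the value reaches the current holder only after a request redirects it, and both the request and any held authorisation must be stutters. I would rely on Lemma~\ref{lem:globalize-localize}(4,5), together with the single-occurrence fact that the holders form a chain and exactly one entry consumes the value. That consumption is therefore the unique non-stutter point, at which the maGLP reader's current owner $q$ receives the substitution. Stale traffic is dropped and changes nothing, so it is again a stutter. Within a single agent, the case $p=q$ is the unary Communicate and is absorbed by immediate application in Reduce, as in Remark~\ref{rem:dglp-immediate-reader}. Finally, fair message delivery (Remark~\ref{rem:fair-message-delivery}) together with FIFO scheduling ensures that each enabled Communicate class is eventually taken, restricted to authorised runs (Remark~\ref{rem:authorised-runs}).
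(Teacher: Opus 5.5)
Your proposal is essentially the paper's own argument. It uses the same decomposition into Reduce, \texttt{global\_send} firing, Send and Receive, the same appeal to Lemma~\ref{lem:globalize-localize} for the structure and variable correspondence of $T_q^\downarrow$ (you need part~(1) as well as (2,3), since $T$ may carry writers), and the same request/redirect and stale-drop account of forwarding, only with the $\pi$-stutters tracked explicitly. Your Step~1 ``alignment'' is best stated outright as reading Definition~\ref{def:reconstructed-sigma} so that a value counts in $\sigma_p$ from the moment its writer is assigned; this makes the \texttt{global\_send} firing a genuine stutter rather than a second non-stutter step that no single maGLP transition matches, a point the paper's coarser proof never reaches.
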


\begin{proof}
maGLP Communicate$(p,q)$ transfers assignment $\{X? := T\}$ from $\sigma_p$ to $G_q$. In madGLP: (1) the writer-owner's Reduce assigns the local writer; (2) a subsequent Reduce fires the \texttt{global\_send} goal watching the paired reader, adding message $(G := T^\uparrow)$ to the outbox; (3) Send moves the message to the channel; (4) Receive at the reader's current holder finds the entry in its global writers table, localizes, applies the assignment, and acknowledges, closing the link. If the reader was forwarded, the holder's request precedes or redirects the send (\S\ref{app:requests-acks}); a value reaching a past holder is dropped and the pending value is redirected to the current one. By Lemma~\ref{lem:globalize-localize}, $\pi$ maps the result to the maGLP Communicate result.
\end{proof}

\begin{proposition}[Cold-call Implementation]\label{prop:cold-call-implementation}
maGLP's binary Cold-call transaction from $p$ to $q$ is correctly implemented by madGLP's Send and Receive transactions via the index-0 serializer.
\end{proposition}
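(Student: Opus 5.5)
I would argue as in Proposition~\ref{prop:communicate-implementation}, tracing the madGLP steps that realise one maGLP Cold-call and showing that $\pi$ maps them to that single transaction, with every other step a stutter. The maGLP Cold-call from $p$ to $q$ is enabled when $p$'s network output stream carries a new message \verb|msg|$(q,X)$. Its effect is that the stream is advanced at $p$ and \verb|msg|$(q,X)$ is appended to $q$'s network input stream.

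In madGLP the corresponding sequence has four steps, each taken from the definitions already given.
\begin{enumerate}
\item A Reduce of the \texttt{send\_to\_net} goal at $p$ (Remark~\ref{rem:network-output-processing}) consumes the list cell. It spawns $\texttt{global\_send}(\texttt{msg}(q,X),\_w(q,0),q)$ and the continuation \texttt{send\_to\_net(In?)}.
\item A Reduce of that \texttt{global\_send} goal, whose guard \texttt{known} succeeds since the message term is a non-variable, calls \texttt{'\_send'}. Because the index is $0$, it globalizes the term and places $\_w(q,0):=[\texttt{msg}(q,X)^\uparrow\mid\_w(q,0)]$ in $M_p$ (Remark~\ref{rem:cold-call-serializer}).
\item A Send puts the message on the channel to $q$. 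It is removed from $M_p$, since this is not a reader-name value.
\item By fair message delivery, a Receive at $q$ takes the serializer case of Definition~\ref{def:madglp-receive}. It assigns $N_q:=[\texttt{msg}(q,X)_q^\downarrow\mid N'_q?]$, reactivates goals suspended on $N_q?$, and updates the index-0 entry to $(N'_q,*)$.
\end{enumerate}

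Next I would check the projection under $\pi$ (Definition~\ref{def:madglp-pi-full}). Steps 1 and 2 change $p$'s reconstructed resolvent only by consuming the output cell and adding or removing \texttt{global\_send} goals, which $\pi$ erases. I would therefore assign the advance of $p$'s output stream to step 1. Steps 2 and 3 change only $M_p$ and channels, which contribute nothing to $\sigma_p$ here: the serializer is not a reader-name link, so per Definition~\ref{def:reconstructed-sigma} the only candidate contribution would be a pending reader assignment, and there is none. Hence steps 2 and 3 are stutters. Step 4 extends $q$'s input stream by the localized message, and $\pi$ maps $N_q$ and $N'_q$ to the maGLP stream tail. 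By Lemma~\ref{lem:globalize-localize}, parts 1–3, the localized term has the same structure as the original, and its variables correspond to $X$'s variables via fresh global links, so $\pi$ of the result is $q$'s input stream with \verb|msg|$(q,X)$ appended.

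The delicate point is that the maGLP transaction is atomic, while here $p$'s half (step 1) and $q$'s half (step 4) occur at different times, so the projected run passes through a configuration where the message has left $p$ but not reached $q$. I would handle this by treating the in-flight message like an undelivered value. Concretely, I would adjust the bookkeeping in $\pi$ so that $p$'s output stream counts as advanced only at step 4, or equivalently so that steps 1–3 map to stutters, making step 4 the unique non-stutter step, mapped to the binary Cold-call transaction. Interleaved steps at other agents commute with this, by unarity and disjoint substitution commutativity (Lemma~\ref{lem:substitution-commutativity}).

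Liveness then follows. FIFO scheduling eventually reduces both spawned goals, Send is a single $\sim$-class that stays enabled until taken, and fair delivery ensures the Receive, so every enabled maGLP Cold-call class is eventually taken in the image. Concurrent cold-calls to $q$ are serialized in arrival order, which matches some interleaving of maGLP Cold-calls.
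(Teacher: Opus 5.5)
Your four-step trace (the \texttt{send\_to\_net} and \texttt{global\_send} reductions at $p$, then Send, then the serializer Receive at $q$), together with the appeal to Lemma~\ref{lem:globalize-localize}, is the paper's proof spelled out in more detail. The paper argues only that $\pi$ maps the configuration reached after the whole sequence to the Cold-call's result, and does not examine the intermediate configurations. You do examine them, and there you propose to adjust $\pi$. But the proposition and Theorem~\ref{thm:madglp-implements-maglp} concern the mapping of Definition~\ref{def:madglp-pi-full}, and your reason for needing a change misreads Definition~\ref{def:reconstructed-sigma}. That definition counts every value produced at $p$ that is queued, pending, \emph{or} in a communication channel with its addressed entry not yet consumed, not only pending reader-name values. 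The serializer value $\_w(q,0) := [\texttt{msg}(q,X)^\uparrow \mid \_w(q,0)]$ qualifies until $q$'s Receive. So under the given $\pi$ it already keeps $p$'s output unadvanced while it is queued or in transit, and the Receive is the step whose image is the Cold-call. This is the bookkeeping you wanted, with no change to the mapping.

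Your closing claim, that arrival order at $q$ ``matches some interleaving of maGLP Cold-calls'', holds for cold-calls from distinct senders but fails for two from the same sender. $M_p$ is a set from which Send (Definition~\ref{def:madglp-send}) may take any unsent message, and delivery is unordered. So two messages written in order on $p$'s output stream can be received in the opposite order. maGLP's Cold-call advances $p$'s stream, consuming its messages in order. Hence the Receive of the later-written message cannot project to a Cold-call, and no maGLP interleaving yields the resulting order. The proposition concerns a single Cold-call, and your argument for it does not need this claim, so drop it. Note, though, that it marks an obstacle to whole-run correctness that neither your argument nor the paper's addresses.
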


\begin{proof}
maGLP Cold-call$(p,q)$ transfers \texttt{msg}$(q,X)$ from $p$'s network output stream to $q$'s network input stream, establishing shared variable pairs. In madGLP: $p$ globalizes \texttt{msg}$(q,X)$ (Remark~\ref{rem:network-output-processing}); Send at $p$ places the serializer message in the channel; Receive at $q$ localizes it and appends it to $q$'s network input stream; global links for the variables of $X$ are established. By Lemma~\ref{lem:globalize-localize}, $\pi$ maps the result to the maGLP Cold-call result.
\end{proof}

\begin{lemma}[$\pi$ is Correct]\label{lem:madglp-live}
The implementation mapping $\pi$ is correct.
\end{lemma}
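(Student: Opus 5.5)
We must show that $\pi$ maps every correct madGLP run $r'$ (authorised, with fair message delivery, per Remark~\ref{rem:authorised-runs}) to a correct maGLP run. By Definition~\ref{def:implementation-properties} this has two obligations: safety of the non-stutter image $\pi(r')$, and its liveness with respect to the maGLP equivalence $\sim$ of Definition~\ref{definition:maGLP}. The argument mirrors Lemma~\ref{lem:dglp-live}, with the extra complication that communication is split into several unary steps.

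\emph{Safety.} I would go by case analysis on each madGLP transition $c' \rightarrow d'$ in $r'$, showing that $\pi(c') \rightarrow \pi(d')$ is either a stutter or a maGLP transition. The cases are as follows.
\begin{itemize}
\item A successful Reduce of a non-\texttt{global\_send} goal maps to a maGLP Reduce, by Proposition~\ref{prop:reduce-implementation}. Its fresh $\hat\sigma?$ enters $\sigma_p$ via Definition~\ref{def:reconstructed-sigma}.
\item Local reader application $\hat\sigma?$ inside Reduce is the composite of a Reduce with the unary Communicates it enables. Since these commute by Lemma~\ref{lem:substitution-commutativity}, the image is a short safe maGLP computation, which is allowed because one implementation step may map to several specification steps.
\item Suspend, Fail, a \texttt{global\_send} reduction, Send, and Request and Acknowledgement receipts change only data that $\pi$ forgets: the partition, the tables, $M_p$ beyond open-link membership, and \texttt{global\_send} goals. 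They are therefore stutters.
\item A value Receive closes an open link. It removes $X?:=T$ from $\sigma_p$ and substitutes into $G_q$ under the variable correspondence, so by Proposition~\ref{prop:communicate-implementation} and Lemma~\ref{lem:globalize-localize} it is a maGLP Communicate.
\item A serializer Receive is a Cold-call, by Proposition~\ref{prop:cold-call-implementation}.
\end{itemize}
For forwarded links I would invoke items (4)--(5) of Lemma~\ref{lem:globalize-localize}. They give that each link keeps a unique current holder and is consumed exactly once. Dropped stale values never change the image, so each maGLP Communicate is realised exactly once.

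\emph{Liveness.} Suppose some maGLP class $[t]$ is enabled in every configuration of a suffix of $\pi(r')$, yet never taken there. I would derive a contradiction separately for each of the three kinds of class.
\begin{itemize}
\item \emph{Reduce.} A maGLP Reduce of goal $A$ at $p$ with clause $C$ is persistently enabled. Its writer mgu is invariant under reader instantiation, so in madGLP $A$ is either active, or suspended on a reader whose value is in flight. Agent-level FIFO guarantees every active goal is eventually dequeued, as in Lemma~\ref{lem:dglp-live}. A suspended goal is reactivated once its blocking reader is assigned, by Definition~\ref{def:reactivation-set}, so $A$ eventually reduces with $C$.
\item \emph{Communicate.} Persistent enabledness means an open link carries a produced value. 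Its \texttt{global\_send} goal is reduced by FIFO, Send is eventually taken by madGLP liveness, and the message is delivered by fair delivery (Remark~\ref{rem:fair-message-delivery}). Held messages are released because the run is authorised. If the reader was forwarded, the request redirects the pending value, and this happens only finitely often because the holder chain is finite. The Receive therefore occurs, which is the Communicate.
\item \emph{Cold-call.} The argument is the same, via the serializer.
\end{itemize}

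The main obstacle is the Communicate liveness case under forwarding, together with the stutter-freeness bookkeeping for $\sigma_p$. One must argue that redirect/ack races terminate and that exactly one holder consumes the value. To do so I would first prove an invariant: every open link has a unique current holder; the anchor's pending value is addressed to a holder at most finitely many steps behind it; and each request strictly advances that holder. Termination then follows by induction on the number of forwardings the reader undergoes before delivery. That number is finite in any particular run at the moment the value is produced, since only finitely many forwardings can have occurred by then.
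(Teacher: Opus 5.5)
Your plan follows the paper's route for liveness: FIFO for Reduce, and for Communicate a value that stays pending at its producer until acknowledged, together with fair delivery and authorisation. Your safety case analysis adds detail that the paper leaves to Propositions~\ref{prop:reduce-implementation}--\ref{prop:cold-call-implementation}.

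\textbf{The gap: acknowledgements also update the holder.} In the Communicate case under forwarding, you let the anchor learn a reader's current holder only from its own export or from a request, and you treat acknowledgements purely as stutters. The paper's proof names a third source that your argument cannot do without: the acknowledgement of the enclosing value when that value was redirected. Here is a concrete failure. Let anchor $p$ hold a pending value $V$ on $\_r(p,1)$ addressed to $q$. Let $V$ carry a fresh reader name $\_r(p,2)$, whose \texttt{global\_send} goal at $p$ was spawned with destination $q$.
\begin{enumerate}
\item $q$ forwards its reader before $V$ arrives. The new holder $r$ sends a request, and $p$ redirects $V$ to $r$.
\item $r$ localizes $\_r(p,2)$ from the anchor itself, so Localize issues no request (Definition~\ref{def:localize}).
\item $q$ never localized $\_r(p,2)$. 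When $p$ produces the link-2 value and sends it to $q$, it matches no entry there and is dropped as stale.
\end{enumerate}
In the image, the assignment is still in $\sigma_p$ (the value is pending) and $r$'s goal holds the reader. So the maGLP Communicate class from $p$ to $r$ stays enabled. Yet no request for link~2 will ever come, and your invariant (``each request strictly advances that holder'') never moves its address to $r$.

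What closes this case is the Acknowledgement clause of Definition~\ref{def:madglp-receive}. When $r$ acknowledges $V$, $p$ records $r$ as the holder of every link whose reader name occurs in $V$, and re-addresses their pending values to $r$. Your invariant needs this second way of advancing the holder, and your liveness argument needs fair delivery of that acknowledgement.

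\textbf{A smaller point: termination.} Your termination argument is misstated. The number of forwardings before delivery is not bounded by those that occurred before the value was produced, because the holder may keep forwarding afterwards. What you need is this: a persistently enabled maGLP Communicate class fixes the destination agent and the goal (by identity) throughout the suffix, so the reader is not forwarded there. From then on, only finitely many deliveries are needed: a request or acknowledgement, the redirected value, and its receipt.

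Unlike the paper, you do not appeal to Lemma~\ref{lem:madglp-same-reductions}. That lemma concerns the existence of runs, so nothing is lost in the liveness of a given run.
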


\begin{proof}
Each agent's madGLP execution is deterministic at the local level (FIFO scheduling). For any correct madGLP run $r'$, every madGLP Reduce that becomes enabled is eventually taken. Each such Reduce projects to a maGLP Reduce via $\pi$. A produced value stays pending at its producer until acknowledged; the current holder's entry exists from its localization, and the producer learns the holder from its own export, from the holder's request when the name was forwarded (fair delivery), or from the acknowledgement of the enclosing value when that value was redirected (Definition~\ref{def:madglp-receive}); the value is then sent to, applied at, and acknowledged by that holder---so every maGLP Communicate class that stays enabled is eventually taken. Since madGLP can perform any set of reductions that maGLP can (by Lemma~\ref{lem:madglp-same-reductions}), and, since $r'$ is correct, every enabled equivalence class is eventually taken, $\pi(r')$ is a correct maGLP run.
\end{proof}

\mypara{Outcome-Completeness}

\begin{lemma}[madGLP Performs Equivalent Reductions]
\label{lem:madglp-same-reductions}
For every proper maGLP run $r$ performing reductions $\{R_1, \ldots, R_k\}$, there exists a madGLP run $r'$ performing equivalent reductions.
\end{lemma}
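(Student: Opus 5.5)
The plan mirrors the proof of Lemma~\ref{lem:dglp-same-reductions}, lifted to the multiagent setting. I would argue by induction on spawn order, across all agents, over the reductions $R_1,\ldots,R_k$ of $r$. For each $R_i$, which reduces goal $A_i$ (by identity) at agent $p_i$ with clause $C_i$, I claim that $r'$ can be built so that $A_i$ is present, by identity, in $p_i$'s deterministic resolvent. Goals of the initial \texttt{agent} goal and its boot phase are present at every agent with matching identifiers. A goal spawned by an earlier $R_j$ at the same agent enters $p_i$'s active queue once the equivalent madGLP reduction has been performed, which holds by the induction hypothesis. Goals never migrate between agents in maGLP: only terms and assignments cross boundaries. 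So presence of $A_i$ at $p_i$ depends only on local reductions, once the needed values have arrived.

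The central step is to show that when $A_i$ reaches the head of $p_i$'s queue, the reduction with $C_i$ succeeds with the same writer mgu. In $r$, $R_i$ succeeds on some instance $A_i\theta$, where $\theta$ collects the reader assignments delivered to $A_i$ by Communicate (and Cold-call) transactions before $R_i$. Each of these assignments originates in an earlier reduction $R_j$, possibly at another agent. By induction, $R_j$ is performed in $r'$. By Proposition~\ref{prop:communicate-implementation}, the resulting value then flows through Reduce of the \texttt{global\_send} goal, Send, and Receive to the current holder of the reader. This uses fair message delivery, the request and acknowledgement redirection of \S\ref{app:requests-acks}, and the assumption that the run is authorised (Remark~\ref{rem:authorised-runs}). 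Cold-calls are handled in the same way through Proposition~\ref{prop:cold-call-implementation}.

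Choose a madGLP run in which all of these deliveries precede the point where $A_i$ is selected. A suspended goal is reactivated when its blocking readers are assigned (Definition~\ref{def:reactivation-set}), and FIFO eventually selects it. At that point $A_i$'s local instance, read under the variable correspondence of Definition~\ref{def:global-link-detail} and Lemma~\ref{lem:globalize-localize}(3), is at least as instantiated as $A_i\theta$. It may be more instantiated, since madGLP may have delivered further values. The writer mgu assigns only writers and is therefore invariant under reader instantiation. The first-clause choice must also agree, so the same argument has to show that no earlier clause succeeds.

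The first-clause requirement is the main obstacle. An earlier clause might succeed on a further-instantiated instance while it only suspended on $A_i\theta$, which would change which clause is chosen. I would try first to avoid this by scheduling. madGLP's nondeterminism lies in message arrival order, so I would choose an $r'$ whose deliveries to $p_i$ mirror the order of Communicate transactions in $r$ relative to $R_i$. Concretely, I would delay every Receive that $r$ performs only after $R_i$, and argue that this delay is allowed because channels are asynchronous and any finite delay is consistent with fair delivery. Under this scheduling the instance at selection time is exactly $A_i\theta$ modulo renaming, so the same first clause $C_i$ applies.

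A second, smaller worry is that the goal must not be selected too early, before all of $\theta$ has arrived. In that case it would suspend, which is harmless, or fail. Failure cannot happen, because $A_i\theta$ has a matching clause, and if a reduction fails on a term it also fails on every instance of that term. Finally, properness of $r$ together with index freshness in madGLP ensures that the renamings apart introduce no clashes. Collecting the equivalent reductions built for $i=1,\ldots,k$ gives $r'$.
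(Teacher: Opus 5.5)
Your skeleton is the paper's: induction on spawn order, plus the remark that a writer mgu binds only writers. You also look further than the paper's proof, which just asserts that the reduction with $C_i$ succeeds once $A_i$ reaches the head of $p_i$'s queue. You correctly notice that first-clause selection depends on how instantiated $A_i$ is when it is selected. Your repair, however, fails in both directions.

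\emph{Too early.} Selecting $A_i$ before all of $\theta$ has arrived need not suspend or fail: a later clause can succeed on the less instantiated instance. In the paper's own merge program, \texttt{merge(Xs?,[],Zs)} is reduced by the third clause while the first merely suspends. You also cannot always make $\theta$ arrive first. madGLP's Reduce always takes the head of the FIFO queue (Definition~\ref{def:madglp-reduce}). So if a value $A_i$ needs is produced, directly or through other agents, by a goal standing behind $A_i$ at $p_i$, no ordering of Receives delivers it in time.

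\emph{Too late.} Delaying Receives controls only values from other agents. A value produced at $p_i$ itself is applied at once by the $\hat\sigma\hat\sigma?$ of madGLP's Reduce. maGLP, by contrast, may postpone the unary Communicate ($p=q$) past $R_i$. So an earlier clause can fire in madGLP where it only suspended in $r$.

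No cleverer schedule fixes this, because the statement itself fails. Take the clauses \texttt{agent(\_,\_) :- p(X?), r(X).}, \texttt{r(X?) :- q(X).}, \texttt{q([1]).}, \texttt{p([\_|\_]).} and \texttt{p(\_).} A maGLP run can reduce \texttt{agent}, \texttt{r} and \texttt{q}, communicate the resulting assignments to the \texttt{p} goal, and then reduce \texttt{p([1])} by its first clause. In madGLP, whatever the order of the two body goals, the \texttt{p} goal is selected before \texttt{q} is reduced. This is because \texttt{q} is spawned by \texttt{r} and appended to the back of the queue. At that point \texttt{p}'s argument is still a reader, so its first clause suspends and its second succeeds. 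No message ever reaches the agent, so this local computation is forced. Hence no madGLP run performs an equivalent set of reductions. The same example refutes Lemma~\ref{lem:dglp-same-reductions}, whose proof the paper's mirrors. A provable version needs either a hypothesis on the program, for instance that clause selection is insensitive to further reader instantiation, or a weaker conclusion.
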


\begin{proof}
By induction on spawn order. Goals in each agent's initial goal are in the initial queue. For any $R_i$ reducing $A_i$ at agent $p$ with $C_i$: if $A_i$ was spawned by an earlier reduction $R_j$, by induction madGLP performs $R_j$, so $A_i$ enters $p$'s queue. Since $R_i$ succeeds in $r$, $A_i$ and the head of $C_i$ have a writer mgu. Since the writer mgu assigns only writers (Definition~\ref{def:writers-assignment}), it is invariant under any reader substitutions applied to $A_i$ in the intervening run. When $A_i$ reaches the queue head, the reduction succeeds.
\end{proof}

\begin{lemma}[madGLP Outcome-Completeness]
\label{lem:madglp-complete}
For every proper maGLP run with outcome $O$, there exists a madGLP run with outcome $O$.
\end{lemma}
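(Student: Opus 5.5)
The plan mirrors the proof of Lemma~\ref{lem:dglp-complete}, lifted to the multiagent setting. Fix a proper maGLP run $r$ with outcome $O$. By Definition~\ref{def:maglp-proper-run}, $O$ is the $P$-indexed tuple of the per-agent outcomes $(G_{0,p}$ \verb|:-| $G_{n,p})\sigma_{n,p}^\star$. Let $\{R_1,\ldots,R_k\}$ be the Reduce transactions of $r$. Lemma~\ref{lem:madglp-same-reductions} gives a madGLP run $r'$ that performs a pairwise equivalent set of reductions: same agent, same goal by identity, same clause. The order may differ, and Suspend, Send and Receive steps are interleaved.

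We then take $r'$ to be complete and authorised with fair message delivery, extending it if necessary. Its $\pi$-image is a maGLP run by Propositions~\ref{prop:reduce-implementation}, \ref{prop:communicate-implementation} and~\ref{prop:cold-call-implementation}. We restrict $r'$ (and its image) to the reductions of $r$ and to the communications those reductions induce; the equivalence of reduction sets is taken to hold for this restricted run.

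Next we argue agent by agent. At each agent $p$, the reductions performed in $\pi(r')$ are pairwise equivalent to those of $r$ at $p$. Each writer mgu assigns only writers of the reduced goal, and by Proposition~\ref{lem:maglp-so-preservation} SO holds throughout. Hence Lemma~\ref{lem:substitution-commutativity} applies per agent: the writer mgus have disjoint domains and commute. The argument of Lemma~\ref{lem:reduction-set-outcome} then shows that the composed substitutions and the final goal multiset at $p$ agree in $r$ and $\pi(r')$. That lemma treats Communicate as outcome-neutral, and we need the same here. A maGLP Communicate removes an assignment from $\sigma_p$ and applies it to $G_q$, and pair completion $\sigma^\star$ is computed per agent. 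So we must check that the readers substitution recorded at $p$ agrees between the two runs, up to the assignments already delivered. This is where the variable correspondence of $\pi$ matters. By Lemma~\ref{lem:globalize-localize}, each global link maps its two local pairs to a single maGLP pair. Inner ends and \texttt{global\_send} goals are erased, and imported and forwarded ends keep their anchor. Consequently an assignment delivered through a link appears in the image exactly as the corresponding maGLP Communicate would make it.

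The main obstacle is this Communicate and substitution bookkeeping across agents. The per-agent $\sigma_{n,p}$ depends on which assignments are still undelivered at the end of the run, and that depends on delivery order. My first approach is to make both runs end with all enabled communications of the chosen reductions delivered. In $r'$ this is guaranteed by fair delivery and authorisation (the liveness argument of Lemma~\ref{lem:madglp-live}). We may likewise choose the comparison run $r$ to be complete, so that its outcome is unaffected. Then $\sigma_{n,p}$ in both runs consists only of assignments to readers that occur in no goal, and these are determined by the set of reductions alone. Equality of the per-agent outcomes follows, and with it equality of the tuple $O$.

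Finally, properness of $r'$ is inherited from fresh renaming apart (Remark~\ref{rem:properness}) together with fresh local pairs in Localize. Since indices are never reused, the variables introduced are fresh at every agent.
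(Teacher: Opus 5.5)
\textbf{The skeleton matches the paper; the added normalization is where it breaks.} Your skeleton is exactly the paper's proof, which consists of nothing more than three citations: Lemma~\ref{lem:madglp-same-reductions} to obtain $r'$, then Lemmas~\ref{lem:substitution-commutativity} and~\ref{lem:reduction-set-outcome} to equate outcomes. The gap is in the normalization you add to dispose of what you call the main obstacle. It fails in two places.

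First, $r$ is universally quantified, so you cannot ``choose the comparison run $r$ to be complete''. Completing it also does not leave its outcome unaffected. Completing a non-terminal $r$ in general requires further Reduce transactions, which change $G_{n,p}$ and $\sigma_{n,p}$. Even extra Communicate transactions alone change the outcome: a maGLP Communicate $p$ to $q$ sets $\sigma'_p = \sigma_p \setminus \{X? := T\}$ (Definition~\ref{definition:maGLP}), which alters $\sigma_{n,p}^\star$, and it also instantiates $G_q$. A madGLP run matching the completed run therefore witnesses the completion's outcome, not $O$.

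Second, extending $r'$ to a complete run makes its FIFO agents eventually reduce every reducible goal they dequeue. This includes goals that $r$ leaves unreduced whenever $r$ is not itself complete. The two reduction sets then cease to be pairwise equivalent, and Lemma~\ref{lem:reduction-set-outcome} needs equivalence in both directions. Deleting intermediate transitions from a run (your ``restriction'') does not produce a madGLP run. Stipulating that the equivalence ``is taken to hold'' for the restricted run assumes what is to be proved. Your closing claim, that each $\sigma_{n,p}$ consists only of assignments to readers occurring in no goal and is fixed by the reduction set, rests entirely on these two moves.

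\textbf{How to repair it.} The concern that motivated the normalization is legitimate. The proof of Lemma~\ref{lem:reduction-set-outcome} uses the fact that Communicate leaves $\sigma$ unchanged. That holds in cGLP, where $\sigma' = \sigma$, but not in maGLP, and the paper's one-line appeal to the lemma passes over this. The repair must keep $r$ as given and adapt $r'$ to it:
\begin{itemize}
\item Take $r'$ to be the finite run supplied by Lemma~\ref{lem:madglp-same-reductions}, not a completion of it.
\item Then show that the reconstructed readers substitution of Definition~\ref{def:reconstructed-sigma} at the end of $r'$ agrees at each agent with $\sigma_{n,p}$.
\item The freedom for arranging this lies in $r'$, not in $r$: in when its Send and Receive steps occur and where it stops. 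You want the links still open at the end of $r'$ to be exactly those whose assignments $r$ leaves undelivered.
\end{itemize}
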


\begin{proof}
Let $r$ be a proper maGLP run with outcome $O$, performing reductions $\{R_1, \ldots, R_k\}$.

By Lemma~\ref{lem:madglp-same-reductions}, some madGLP run $r'$ performs equivalent reductions.

By Lemma~\ref{lem:substitution-commutativity} and Lemma~\ref{lem:reduction-set-outcome}, since $r$ and $r'$ perform equivalent reductions (with disjoint substitutions that commute), they have the same outcome $O$.
\end{proof}

\fi
\mypara{Correctness}
Global links implement maGLP shared variable pairs via message passing; Definition~\ref{def:global-link-detail} treats the case of a writer at $p$ and reader at $q$ (Section~\ref{subsec:madglp}), with the mirror case alongside. The implementation mapping $\pi$ from madGLP configurations to maGLP configurations reconstructs the resolvent and readers substitution while forgetting implementation details (goal classification, global writers tables, message queues, \texttt{global\_send} goals).

Correctness rests on \emph{disjoint substitution commutativity}, a consequence of the single-occurrence invariant: reductions at distinct goals assign disjoint variables and therefore commute, so the agent-deterministic, message-passing execution reaches the same outcomes as maGLP's nondeterministic shared-variable semantics.\ifappendix\else{} The full proof appears in \arxivref.\fi

\begin{theorem}[madGLP Correctly Implements maGLP]
\label{thm:madglp-implements-maglp}
\ifappendix For authorised runs (Remark~\ref{rem:authorised-runs}), the\else The\fi\ implementation $(\textrm{madGLP}, \pi)$ of maGLP is correct and outcome-complete.
\end{theorem}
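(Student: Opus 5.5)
The theorem splits into its two conjuncts, following Definition~\ref{def:implementation-properties}, in the same pattern as Theorem~\ref{thm:dglp-implements-glp}. Correctness is assembled from Lemma~\ref{lem:madglp-live}. Outcome-completeness is assembled from Lemma~\ref{lem:madglp-complete}. Throughout, the restriction to authorised runs (Remark~\ref{rem:authorised-runs}) and the fair-delivery assumption (Remark~\ref{rem:fair-message-delivery}) are treated as standing hypotheses on the madGLP runs considered.

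\emph{Correctness.} Fix a correct, authorised madGLP run $r'$. The first step is to show that $\pi(r')$ is a maGLP run. I would proceed by induction along $r'$, checking each madGLP transition against $\pi$.
\begin{itemize}
\item A Reduce of an ordinary goal projects to a maGLP Reduce at the same agent (Proposition~\ref{prop:reduce-implementation}).
\item Suspend and Fail steps project to stutters, since $\pi$ forgets the active/suspended/failed partition.
\item A Reduce of a \texttt{global\_send} goal, a Send, and the request and acknowledgement cases of Receive all project to stutters. Each of them changes only routing data, the outbox, or the removed \texttt{global\_send} goals; a value that is queued, pending, or in transit stays in the reconstructed $\sigma_p$.
\item A value Receive projects to a maGLP Communicate (Proposition~\ref{prop:communicate-implementation}). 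A serializer Receive projects to a Cold-call (Proposition~\ref{prop:cold-call-implementation}).
\end{itemize}
Lemma~\ref{lem:globalize-localize} guarantees that the variable correspondence used by $\pi$ is preserved when names are forwarded. The base case is that $\pi(c'_0)=c_0$, which holds because every agent starts from the \texttt{agent} goal with only the index-0 entry.

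The second step is liveness of $\pi(r')$, which is exactly Lemma~\ref{lem:madglp-live}. The argument runs as follows. Any maGLP Reduce class that stays enabled corresponds to a goal in some agent's FIFO queue or suspended set. If it stays enabled, its blocking readers have been instantiated, so the goal is reactivated and eventually reduced. Any maGLP Communicate class that stays enabled corresponds to an open link. Authorisation guarantees that nothing on that link remains held. Fair delivery, together with request and acknowledgement redirection, guarantees that the value reaches the current holder's entry.

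\emph{Outcome-completeness.} Given a complete proper maGLP run with outcome $O$, Lemma~\ref{lem:madglp-complete} provides a madGLP run performing pairwise equivalent reductions. Lemmas~\ref{lem:substitution-commutativity} and~\ref{lem:reduction-set-outcome}, applied per agent, give it the same outcome. I must additionally check two things: that this run can be extended to a complete authorised run without performing extra reductions, and that the outcome is read through $\pi$ agentwise (Definition~\ref{def:maglp-proper-run}).

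\emph{Main obstacle.} I expect the hardest step to be the stutter claim for the message-level steps, namely that $\sigma_p$ under $\pi$ is exactly invariant across Send, request, redirect and acknowledgement. The risk is that a redirected pending value could be counted twice, or dropped, while a stale copy is still in a channel. I would first try to establish an invariant stating that each open link has exactly one live value, located in the producer's outbox or in transit to the holder. Single occurrence supplies the unique holder, and index non-reuse ensures stale copies match no entry. With that invariant, Definition~\ref{def:reconstructed-sigma} counts each value exactly once.
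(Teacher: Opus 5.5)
Your top-level reduction is exactly the paper's proof: the theorem is the conjunction of Lemma~\ref{lem:madglp-live} and Lemma~\ref{lem:madglp-complete}. The problem is the safety argument you add yourself, having read Lemma~\ref{lem:madglp-live} as covering liveness only. It does not go through, because the restriction to authorised runs is not harmless for $\pi$.

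By \S\ref{app:held-links}, every request reaching an anchor is held. Every held link is reported by appending $\texttt{pending\_link}(G,S)$ to the agent's network input stream through the index-0 serializer, within the very transaction that would otherwise have acted. The same happens in the Receive that localizes a forwarded reader name. It also happens in a value Receive at an anchor from an agent it did not export the name to, which moreover applies nothing: the assignment is applied later, by the reduction of the program's \texttt{authorise\_link} call. Since $\pi$ (Definition~\ref{def:madglp-pi-full}) keeps every goal except \texttt{global\_send} goals, such a step instantiates the network-input reader in the reconstructed resolvent (some goal must be reading it if the report is ever to be answered). That change matches no maGLP transition; it is not a Cold-call, since no agent's network output stream advances. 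So request Receives are not stutters, and a value Receive need not be a Communicate. Authorisation releases what was held but does not retract the report.

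The same mechanism defeats the extension you plan for outcome-completeness. Suppose the maGLP run has $q$ pass on to $r$ a reader imported from $p$ that is still unassigned, say because $p$ never assigns its writer. Then every madGLP counterpart forwards the name, and $r$ holds and reports the request. The report shows up in the outcome's binding of $r$'s network input stream, and in an authorised run $r$'s program must perform an extra \texttt{authorise\_link} reduction. You would have to extend $\pi$ and the outcome to erase reports and their handling, or restrict to runs in which no link is held. The paper's proof does not address this either, since it cites Lemma~\ref{lem:madglp-live} wholesale.

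A second case your analysis omits is the relay of Lemma~\ref{lem:both-ends} (\S\ref{app:friend-intro}). When the owner $p$ of a local pair exports the writer to $q$ and the reader to $r$, the maGLP pair (writer at $q$, reader at $r$) is realised by two links chained through $p$. The value Receive at $p$ consumes $p$'s entry, and $p$'s \texttt{global\_send} then re-sends the value. Under Definition~\ref{def:reconstructed-sigma} this moves the reader assignment from $\sigma_q$ to $\sigma_p$ (or drops it until the re-send is queued), which no maGLP transition does, so that Receive is not a Communicate. The Communicate from $q$ to $r$ must be the Receive at $r$, with the hop through $p$ a stutter. That requires attributing an in-flight value to the holder of the maGLP writer along the whole chain, so your one-live-value invariant must be stated per maGLP pair rather than per link.
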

\begin{proof}
\emph{Correct:} By Lemma~\ref{lem:madglp-live}. \emph{Outcome-complete:} By Lemma~\ref{lem:madglp-complete}.
\end{proof}

\begin{corollary}[Global Links Correctly Implement Shared Variables]
\label{cor:global-links-correct}
Two local pairs joined by a global link together with its \texttt{global\_send} goal correctly implement a maGLP shared variable pair\ifappendix, including the case where both ends of a pair are exported\fi.
\end{corollary}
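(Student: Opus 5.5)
The corollary is a specialisation of Theorem~\ref{thm:madglp-implements-maglp} to a single shared pair, so the plan is to isolate one global link and read off correctness and outcome-completeness from the theorem's two halves. Fix a maGLP shared pair $(X, X?)$ with writer at $p$ and reader at $q$, and its madGLP realisation per Definition~\ref{def:global-link-detail}. This consists of the local pairs $(X_p, X_p?)$ and $(X_q, X_q?)$, the goal $\texttt{global\_send}(X_p?, \_r(p,i), q)$ at $p$, and the entry $(X_q, p, i)$ in $W_q$. The mirror case, with its entry at $p$, record in $U_q$ and \texttt{global\_send} at $q$, is handled symmetrically.

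First, I will check that $\pi$ (Definition~\ref{def:madglp-pi-full}) collapses the link to the shared pair. Because $\pi$ deletes \texttt{global\_send} goals and renames outer ends via the variable correspondence, $X_p \mapsto X$ and $X_q? \mapsto X?$. The inner ends $X_p?$ and $X_q$ occur only in the deleted goal and the forgotten table, so they leave no trace in $\pi(c')$.

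Second, I will trace a single assignment along the link. An assignment $X_p := T$ made by Reduce at $p$ projects to a maGLP Reduce (Proposition~\ref{prop:reduce-implementation}). It places $X? := T$ into the reconstructed $\sigma_p$ (Definition~\ref{def:reconstructed-sigma}) for as long as the link stays open. The subsequent Reduce of \texttt{global\_send}, the Send, and the pending and redirect steps are all stutters under $\pi$. The Receive at the current holder is the single step that removes the assignment from $\sigma_p$ and instantiates $X?$ in $G_q$, and this is exactly maGLP Communicate (Proposition~\ref{prop:communicate-implementation}, using Lemma~\ref{lem:globalize-localize}(2)). Liveness then follows from Lemma~\ref{lem:madglp-live}: under fair delivery and authorisation the value reaches the holder and is acknowledged.

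Third, I will cover the case where both ends of a pair are exported. I will invoke Lemma~\ref{lem:both-ends}, which composes two links through $p$'s local pair. I will also use parts (4)--(5) of Lemma~\ref{lem:globalize-localize} for forwarded ends, where the anchor stays fixed and single occurrence keeps the holder unique.

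Outcome-completeness restricted to runs involving the pair comes directly from Lemma~\ref{lem:madglp-complete}. The main obstacle is the forwarded-reader case. There a value may be sent to a stale holder and dropped, and I must argue that the projection takes Communicate exactly once and never duplicates it. I would handle this by showing that the reconstructed $\sigma_p$ retains the assignment until the unique acknowledging Receive. That uniqueness follows from the chain-of-holders remark in \S\ref{app:requests-acks}, and every intermediate step is a stutter.
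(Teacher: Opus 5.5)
Your proposal is correct and follows essentially the paper's route: the paper's proof simply cites Theorem~\ref{thm:madglp-implements-maglp} together with Lemma~\ref{lem:globalize-localize} and Lemma~\ref{lem:both-ends}, and these are exactly the ingredients you assemble, with the theorem entering through its two halves, Lemmas~\ref{lem:madglp-live} and~\ref{lem:madglp-complete}. The extra bookkeeping you sketch---$\pi$ erasing the inner ends, the intermediate steps projecting to stutters, and the single Receive at the current holder projecting to Communicate---is what the paper leaves inside Proposition~\ref{prop:communicate-implementation}, so it adds detail rather than a different argument.
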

\begin{proof}
Follows from Theorem~\ref{thm:madglp-implements-maglp} together with Lemma~\ref{lem:globalize-localize}\ifappendix\ and Lemma~\ref{lem:both-ends}\fi.
\end{proof}


\section{AI Development Methodology}
\label{sec:ai-methodology}

This paper offers a precise interface between the human designer and the AI programmer: formal operational semantics serve as a specification language, constraining the space of legal implementations. The emerging discipline~\cite{fowler2025sdd,mundler2025type,blinn2024typed} we advocate and employ is for the human designer and AI to jointly develop and agree upon (\ia)~formal semantics (this paper), (\ib)~an informal (English+code) specification derived by AI from the math, and only then let AI attempt to write (\ic)~Dart code that complies with both.

\ifappendix
For example, the spec defines outgoing communication between madGLP agents through a system predicate:
\begin{verbatim}
global_send(T, G, Q) :- known(T?) | '_send'(T?, G?, Q?).
\end{verbatim}
The guard \texttt{known(T?)} succeeds once $T$ is assigned; the builtin \texttt{'\_send'} globalizes $T$ and queues the message $(G := T^\uparrow, Q)$ for delivery to $Q$.
\fi

The AI used throughout is Claude (Anthropic). While authority flows math$\to$spec$\to$Dart, the three layers were harmonised via extensive back-and-forth: running the Dart implementation and deriving the specification surfaced defects, several of which proved to be at the mathematical level rather than the code level, and human review caught a conceptual error the formalism had absorbed from an earlier draft. \ifappendix The remarks in this paper record design rationale developed during spec derivation. The English+code specifications served the derivation; their content is now absorbed into this paper's appendices.\else The English+code specifications served the derivation; their content is now absorbed into the appendices of \arxivref.\fi \ifappendix \Cref{app:dev-history} records the substantive revisions and the diagnostics that drove them.\else Several concrete revisions to the mathematical definitions were driven by problems discovered during this process---eliminating variable migration, unifying cold-calls with regular communication, correcting the Localize definition, simplifying the global writers table, and a cold-call polarity error in the abstract semantics that running the implementation exposed; these are detailed in \arxivref.\fi

\ifappendix
\Cref{app:dev-history} gives the full account.
\fi

\section{madGLP is Grassroots}
\label{sec:grassroots}

A protocol is \temph{grassroots}~\cite{lewis2026volitional} if any two disjoint groups of agents can each run it independently---every interleaving of their independent correct runs is a correct run of the combined group (\emph{obliviousness})---yet when brought together the combined group can produce behaviours that neither group could on its own (\emph{interactivity}).  The maGLP protocol is grassroots~\cite{shapiro2025glp}. By the \emph{madGLP protocol} we mean the family with one madGLP transition system per finite set of agents (\appref{def:madglp-ts}{\arxivref}), and likewise for maGLP.

\begin{theorem}[madGLP is Grassroots]\label{thm:madglp-grassroots}
The madGLP protocol is grassroots.
\end{theorem}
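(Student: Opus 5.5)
The plan is to verify the two defining conditions of grassroots separately, for disjoint nonempty finite $P_1, P_2 \subset \Pi$ with $P = P_1 \cup P_2$, using the madGLP transition systems of Definition~\ref{def:madglp-ts}.

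\emph{Obliviousness.} Every madGLP transaction is unary (Remark~\ref{rem:madglp-transaction-degrees}), so every transition of $\textrm{madGLP}(P_1)$ is the $P_1$-closure of a transaction over a single agent of $P_1$. First I would show that the local state space and initial local state do not depend on the agent set. The index-0 serializer entry and the \texttt{agent} goal are the same at every agent. Global names are indexed by agent identity, not by $P$, so a transaction enabled at $p \in P_1$ in $\textrm{madGLP}(P_1)$ is enabled at $p$ in $\textrm{madGLP}(P)$ whenever $p$'s local state agrees. Second, I would check that in a run confined to $P_1$ no message is ever addressed to an agent of $P_2$. Messages arise only from \texttt{global\_send} goals whose destinations come from globalization for agents that exported or received names, and from \texttt{send\_to\_net} with an explicitly named $q$. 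A cold call to $q \in P_2$ would be placed in a channel and never delivered, exactly as in the isolated system. Hence, by the closure construction (Definition~\ref{def:closure}), any interleaving of a run of $P_1$ and a run of $P_2$ is safe in $\textrm{madGLP}(P)$, with the other group's agents stationary. Liveness needs more care. I would argue that the $\sim$-classes of $\textrm{madGLP}(P)$ restricted to a configuration reachable by such an interleaving are exactly the union of those of the two components: the same Reduce, the same Send, or the same Receive, at the same agent. A class enabled forever without being taken in the combined run is therefore so in one component, contradicting that component's correctness. Fair message delivery transfers because channels between the groups remain empty.

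\emph{Interactivity.} I would exhibit a behaviour of $P$ not achievable by $P_1$ or $P_2$ alone. Pick $p \in P_1$ and $q \in P_2$. A run in which $p$'s network output stream emits \texttt{msg}$(q,\cdot)$, realised via \texttt{send\_to\_net} and the serializer (Remark~\ref{rem:network-output-processing}), leads to a Receive at $q$ that appends a message from $p$ to $q$'s network input stream. No run of $\textrm{madGLP}(P_2)$ has $q$'s local state contain a term localized from a message sent by $p \notin P_2$. Alternatively, and more simply, I would transfer the claim from maGLP's interactivity~\cite{shapiro2025glp} via Theorem~\ref{thm:madglp-implements-maglp}: outcome-completeness yields a madGLP run realising the interactive maGLP outcome, while correctness of $\pi$ shows that any $P_2$-only madGLP run maps to a $P_2$-only maGLP run, which cannot have that outcome.

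The main obstacle I expect is the liveness half of obliviousness. The run of $P_1$ may contain messages in channels to agents of $P_2$. In the combined system these are deliverable, so a Receive class at $q \in P_2$ becomes enabled even though it was not part of the $P_2$ run. I would handle this by exploiting the volitional setting of~\cite{lewis2026volitional}, under which cold calls require the recipient's participation. Concretely, I would restrict to the case where the undelivered cross-group messages are treated as awaiting authorisation (held links, Section~\ref{app:held-links}), and rely on authorised runs as in Remark~\ref{rem:authorised-runs}. Failing that, I would observe that a message to an absent agent is not enabled in the component system and define obliviousness, as in~\cite{lewis2026volitional}, on the transitions among the group's own agents.
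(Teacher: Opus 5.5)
Your safety argument and your main interactivity argument follow the paper. The liveness half of obliviousness, however, is left open, and none of your proposed remedies closes it. You are also inconsistent on this point: you say fair delivery transfers because cross-group channels stay empty, after saying that a cold call from $p\in P_1$ to $q\in P_2$ sits undelivered in a channel.

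The remedies fail for the following reasons.
\begin{itemize}
\item Cold calls go to the index-0 serializer, whose entry $(N_q,*)$ accepts any sender (Definition~\ref{def:index-0-serializer}). The Serializer case of Definition~\ref{def:madglp-receive} holds nothing, and none of the four cases of Section~\ref{app:held-links} applies, so these messages are never held.
\item Restricting to authorised runs (Remark~\ref{rem:authorised-runs}) forces held traffic to be released. It therefore adds deliveries rather than suppressing them, and the theorem is not restricted to such runs anyway.
\item The grassroots definitions here are explicitly the guard-free, machine-layer specialisation of~\cite{lewis2026volitional}, so volition is unavailable.
\item Redefining obliviousness to discount some transitions changes the statement being proved.
\end{itemize}

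What you are missing is that the obstacle does not arise in the formal model, which is why the paper settles obliviousness in one line. A madGLP configuration is an element of $S^P$ whose components are the local states $(R_p,W_p,U_p,M_p)$ of Definition~\ref{def:madglp-local-state}. There is no separate channel component, and Reduce, Send and Receive are all unary.

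By the closure construction (Definition~\ref{def:closure}), a class of unary transactions at $q$ is enabled in a configuration of $\textrm{madGLP}(P_1\cup P_2)$ exactly when $q$'s local state is the pre-state of one of its members. In an interleaving, $q$'s local state is its state in the $P_2$ run. So no class spans the two groups, the condition of Proposition~\ref{prop:tb-oblivious} holds vacuously, and that proposition gives obliviousness. Theorem~\ref{thm:interactive-grassroots} then needs only interactivity.

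Your sentence ``a class enabled forever without being taken in the combined run is therefore so in one component'' was already the whole argument. The worry about a Receive at $q$ being enabled by traffic from $P_1$ comes from reasoning about channel contents that the configuration does not contain.

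For interactivity, make the witness a component of $q$'s local state, as the paper does. There $p$ emits $\texttt{msg}(q,X?)$, so localizing the reader name $\_r(p,i)$ creates the entry $(Z_q,p,i)$ recording $p$. In $\textrm{madGLP}(P_2)$, by contrast, every Localize-created entry records an agent of $P_2$. ``A term localized from a message sent by $p$'' is not a property of $q$'s state, and a ground payload creates no entry for the argument to use.

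Your alternative route through maGLP does not work as stated. maGLP's interactivity supplies a run that is not an interleaving; it does not supply an outcome that no $P_2$-only run attains. Outcome-completeness and the correctness of $\pi$ (the latter only for authorised runs) transfer outcomes and liveness, not the property of failing to be an interleaving.
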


\ifappendix

\subsection{Protocols and the Grassroots Property}
\label{subsec:protocols}

A protocol is a family of multiagent transition systems, one for each finite set of agents $P \subset \Pi$, sharing an underlying set of local states with a designated initial state.

\begin{definition}[Local-States Function]\label{def:local-states-function}
A \temph{local-states function} $S: 2^\Pi \to 2^\calS$ maps every finite set of agents $P \subset \Pi$ to a set of local states $S(P) \subset \calS$ that includes $s_0$ and satisfies $P \subset P' \implies S(P) \subset S(P')$.
\end{definition}

Given a local-states function $S$, we use $C(P) := S(P)^P$ for configurations over $P$ and $c_0(P) := \{s_0\}^P$ for the initial configuration.

\begin{definition}[Protocol]\label{def:protocol}
A \temph{protocol} $\calF$ over a local-states function $S$ is a family of multiagent transition systems with exactly one transition system $\calF(P) = (C(P), c_0(P), T(P), {\sim(P)})$ for every finite $P \subset \Pi$.
\end{definition}

A \emph{correct} run of a transition system is safe and live (Definition~\ref{def:ts-basic}); the runs here are safe by construction, so correctness coincides with liveness.

To define grassroots formally, we first define the interleaving of runs of two disjoint groups.

\begin{definition}[Interleaving]\label{def:interleaving}
Let $P, P' \subset \Pi$ be disjoint nonempty sets of agents, $r = c_0, c_1, \ldots$ a run of $\calF(P)$, and $r' = d_0, d_1, \ldots$ a run of $\calF(P')$.  An \temph{interleaving} of $r$ and $r'$ is a sequence $e_0, e_1, \ldots$ of configurations in $C(P \cup P')$ for which there exist non-decreasing sequences of indices $(i_k)_{k \geq 0}$ and $(j_k)_{k \geq 0}$ with $i_0 = j_0 = 0$ such that for every $k \geq 0$:
\begin{enumerate}
\item $(e_k)_p = (c_{i_k})_p$ for every $p \in P$,
\item $(e_k)_q = (d_{j_k})_q$ for every $q \in P'$,
\item if $e_{k+1}$ exists, then exactly one of:
  (a) $i_{k+1} = i_k + 1$ and $j_{k+1} = j_k$ (a $P$-step), or
  (b) $i_{k+1} = i_k$ and $j_{k+1} = j_k + 1$ (a $P'$-step).
\end{enumerate}
Moreover, if $r$ is finite of length $n$ then $i_k = n$ for some $k$, and if $r$ is infinite then for every $m \ge 0$ there is a $k$ with $i_k = m$; likewise for $r'$ and $(j_k)$.
\end{definition}

An interleaving is well-defined: since $P \subset P \cup P'$, the local-states function ensures $S(P) \subset S(P \cup P')$, and similarly for $P'$, so each $e_k$ is a valid configuration in $C(P \cup P')$.  Also, $e_0 = c_0(P \cup P')$, since $(e_0)_p = (c_0)_p = s_0$ for all $p \in P$ and $(e_0)_q = (d_0)_q = s_0$ for all $q \in P'$.

\begin{definition}[Oblivious, Interactive, Grassroots]\label{def:grassroots}
A protocol $\calF$ is:
\begin{enumerate}
\item \temph{oblivious} if for every disjoint nonempty $P, P' \subset \Pi$,
    every interleaving of a correct run of $\calF(P)$ and a correct run of $\calF(P')$ is a correct run of $\calF(P \cup P')$.
\item \temph{interactive} if for every disjoint nonempty $P, P' \subset \Pi$, there exists a correct run $\hat{r}$ of $\calF(P \cup P')$ such that for every correct run $r$ of $\calF(P)$, every correct run $r'$ of $\calF(P')$, and every interleaving $e$ of $r$ and $r'$, $\hat{r} \neq e$.
\item \temph{grassroots} if it is oblivious and interactive.
\end{enumerate}
\end{definition}

In an interleaving, each step changes the local states of agents in only one group; therefore, any transaction whose active participants span both groups yields a step that cannot occur in any interleaving. The interleaving and obliviousness definitions are the machine-layer specialisation, without guards, of the volitional grassroots framework of~\cite{lewis2026volitional}; interactivity is stated via runs that are not interleavings, as in~\cite{shapiro2025glp}, since all madGLP transactions are unary and none is an interaction in the sense of~\cite{lewis2026volitional}.

\subsection{Transactions-Based Grassroots Protocols}
\label{subsec:tb-protocols}

\begin{definition}[Transactions Over a Local-States Function]\label{def:transactions-lsf}
Let $S$ be a local-states function. A set of transactions $R$ is \temph{over $S$} if every transaction $t \in R$ is a transition over participants $Q$ and local states $S(Q')$ for some $Q \subseteq Q' \subset \Pi$. Given such $R$ and $P \subset \Pi$:
$$R(P) := \{t \in R : t \text{ has participants } Q \subseteq P \text{ and is over } S(Q') \text{ for some } Q' \subseteq P\}$$
\end{definition}

\begin{definition}[Transactions-Based Protocol]\label{def:tb-protocol}
Let $S$ be a local-states function and $R$ a set of transactions over $S$ with equivalence $\sim$. The \temph{protocol $\calF$ over $R$, $S$, and $\sim$} is defined by $\calF(P) := (C(P), c_0(P), R(P){\uparrow}P, {\sim}{\uparrow}P)$ for each $P \subset \Pi$, where $R(P){\uparrow}P$ and ${\sim}{\uparrow}P$ are the $P$-closures (Definition~\ref{def:closure}) of $R(P)$ and of ${\sim}$ restricted to $R(P)$.
\end{definition}

Since liveness applies to all equivalence classes, any class $[t]$ in $R(P \cup P')/\!\sim$ whose transactions have participants spanning both $P$ and $P'$ could obstruct obliviousness if enabled in an interleaving.  The following proposition identifies the condition under which this does not occur.

\begin{proposition}[Obliviousness Condition]\label{prop:tb-oblivious}
A transactions-based protocol is oblivious provided that for every disjoint nonempty $P, P' \subset \Pi$, no equivalence class whose transactions have participants spanning both $P$ and $P'$ is ever enabled in any interleaving of correct runs of $\calF(P)$ and $\calF(P')$.
\end{proposition}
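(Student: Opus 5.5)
We must show that an interleaving $e$ of correct runs $r$ of $\calF(P)$ and $r'$ of $\calF(P')$ is a correct run of $\calF(P \cup P')$, given that no class spanning both groups is ever enabled in $e$. Correctness splits into safety and liveness, and the plan is to check each separately.

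\emph{Safety.} First we check that $e$ starts at $c_0(P\cup P')$; this was noted after Definition~\ref{def:interleaving}. Now consider a step $e_k \to e_{k+1}$, and suppose it is a $P$-step. Then $c_{i_k} \to c_{i_k+1}$ lies in $t{\uparrow}P$ for some $t \in R(P)$. The agents of $P'$ are stationary in this step, so $e_k \to e_{k+1}$ agrees with $t$ on its participants $Q \subseteq P$ and is stationary elsewhere. Hence it lies in $t{\uparrow}(P\cup P')$. It remains to show $t \in R(P\cup P')$, which follows from monotonicity of $R(\cdot)$ in Definition~\ref{def:transactions-lsf}: $Q'\subseteq P \subseteq P\cup P'$. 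The case of a $P'$-step is symmetric. Therefore $e$ is a run of $\calF(P\cup P')$.

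\emph{Liveness.} Suppose, for contradiction, that some class $[\hat t]$ of ${\sim}{\uparrow}(P\cup P')$ is enabled throughout a suffix of $e$ in which no member of it occurs. Since classes relate only transactions with the same participants, $[\hat t]$ comes from a class $[t]$ of $R(P\cup P')$ with a fixed participant set $Q$. By the hypothesis, $Q$ cannot span both groups, so without loss of generality $Q\subseteq P$. Then $t\in R(P)$; this is the one point needing care, since we must show $t$ is over $S(Q')$ for some $Q'\subseteq P$. I expect to handle it by observing that enabledness at an $e_k$ whose $P$-components lie in $S(P)$ forces the relevant local states into $S(P)$, and then choosing $Q'=Q$ or $Q' \subseteq P$ accordingly.

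Next, enabledness of $[\hat t]$ depends only on the $Q$-components of the configuration. Because the $P$-components of $e_k$ are those of $c_{i_k}$, the class $[t]{\uparrow}P$ is enabled at $c_{i_k}$ for all $k$ in the suffix. Two subcases remain. If $r$ is infinite, the indices $i_k$ cover a suffix of $r$, and $P$-steps in $e$ are exactly the steps of $r$; so $r$ has a suffix in which $[t]{\uparrow}P$ is enabled but never taken, contradicting the liveness of $r$. If $r$ is finite, then $i_k=n$ eventually, and $[t]{\uparrow}P$ is enabled at $c_n$. Every nonempty run suffix where the class is enabled throughout and never taken violates liveness, so this contradicts $r$ being live (a live finite run must end terminal with respect to all classes).

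The main obstacle is the bookkeeping in this transfer between $R(P\cup P')$ and $R(P)$: showing that a non-spanning class of the combined system restricts to a class of $\calF(P)$ with the same enabledness and the same occurrences. Everything else is a direct unfolding of the closure definitions.
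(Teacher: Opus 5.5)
Your proposal follows the paper's proof step for step. Safety pushes each $P$-step's transaction $t\in R(P)\subseteq R(P\cup P')$ through the closure. Liveness goes by contradiction: you place the participants of the offending class inside $P$ and transfer its enabledness to a tail of $r$, using the matching $P$-components and the fact that the interleaving exhausts $r$.

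The one point you flag, $t\in R(P)$, is a genuine gap, and the resolution you sketch does not close it. Enabledness of $[t]$ at $e_k$ says only that some member $t_1\in[t]$ has \emph{source} local states $(e_k)_q=(c_{i_k})_q\in S(P)$ for $q\in Q$. It says nothing about the \emph{target} local states of $t_1$, which may lie in $S(Q'')\setminus S(P)$ for some $Q''$ meeting $P'$. Such a $t_1$ is in $R(P\cup P')$ but not in $R(P)$, so $\calF(P)$ has no counterpart class, and a correct $r$ may leave it untaken forever.

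This is not mere bookkeeping; the statement as literally defined fails. Take $S(P)=\{s_0\}\cup\{s_a : a\in P\}$. Let $R$ consist of the unary transactions $t_{a,b}=(s_0\to s_b)$ at $a$, each over $\{a\}$ and $S(\{a,b\})$, for $a\neq b$, with $\sim$ the identity. Then $R(\{p\})=R(\{q\})=\emptyset$, so the one-configuration runs $(s_0)$ of $\calF(\{p\})$ and $\calF(\{q\})$ are correct. The hypothesis holds vacuously because every transaction is unary. Yet their interleaving $((s_0,s_0))$ is not live in $\calF(\{p,q\})$, where $t_{p,q}$ is enabled and never taken.

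The paper's proof makes the same leap silently. Its claim that enablement of $[t]$ at $e_k$ ``depends only on the states of the agents in $P$, which match those of $r$'' presumes that the enabling transaction lies in $R(P)$. The statement therefore needs an extra hypothesis. One option: every transaction of $R(P\cup P')$ with participants in $P$ that is enabled in such an interleaving belongs to $R(P)$, and symmetrically for $P'$. A sufficient condition is that, for every $P$, a transaction with participants in $P$ and source local states in $S(P)$ also has target local states in $S(P)$; it is then over $S(P)$, hence in $R(P)$. With such a hypothesis your argument is complete as written, including your separate treatment of finite $r$.
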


\begin{proof}
Let $\calF$ be a transactions-based protocol over transactions $R$, local-states function $S$, and equivalence $\sim$.  Let $P, P' \subset \Pi$ be disjoint and nonempty, $r = c_0, c_1, \ldots$ a correct run of $\calF(P)$, $r' = d_0, d_1, \ldots$ a correct run of $\calF(P')$, and $e = e_0, e_1, \ldots$ an interleaving of $r$ and $r'$.

\emph{Safety.}  We show that $e$ is a run of $\calF(P \cup P')$.  We have $e_0 = c_0(P \cup P')$ as noted above.  Consider a $P$-step $e_k \rightarrow e_{k+1}$.  The corresponding transition $c_{i_k} \rightarrow c_{i_k+1}$ is in $T(P) = R(P){\uparrow}P$, so there exists a transaction $t \in R(P)$ with participants $Q \subseteq P$.  Since $P \subseteq P \cup P'$, we have $R(P) \subseteq R(P \cup P')$, so $t \in R(P \cup P')$.  In the step $e_k \rightarrow e_{k+1}$, the $Q$-agents change as specified by $t$ and all other agents---both $P \setminus Q$ and $P'$---remain stationary.  By the definition of closure, $e_k \rightarrow e_{k+1} \in t{\uparrow}(P \cup P') \subseteq T(P \cup P')$.  The case of a $P'$-step is symmetric.

\emph{Liveness.}  Suppose for contradiction that some class $[t]$ in $R(P \cup P')/\!\sim$ is enabled in every configuration of some suffix of $e$ in which no member of $[t]$ occurs.  By assumption, no class with participants spanning both $P$ and $P'$ is ever enabled in the interleaving, and equivalent transactions have the same participants (Section~\ref{sec:multiagent-ts}), so the participants $Q$ of the transactions of $[t]$ satisfy $Q \subseteq P$ or $Q \subseteq P'$; without loss of generality $Q \subseteq P$.  Enablement of $[t]$ at $e_k$ depends only on the states of the agents in $P$, which match those of $r$ at index $i_k$.  Since the interleaving exhausts $r$ (Definition~\ref{def:interleaving}), the indices $i_k$ over the suffix cover a tail of $r$, so $[t]$ is enabled at every configuration of that tail; as no member of $[t]$ occurs in the suffix of $e$, none occurs along that tail of $r$.  This contradicts the correctness of $r$.  The case $Q \subseteq P'$ is symmetric.
\end{proof}

\begin{theorem}[Transactions-Based Grassroots]\label{thm:interactive-grassroots}
A transactions-based protocol that satisfies the condition of Proposition~\ref{prop:tb-oblivious} and is interactive is grassroots.
\end{theorem}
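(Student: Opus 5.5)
The statement is Theorem~\ref{thm:interactive-grassroots}: a transactions-based protocol that satisfies the condition of Proposition~\ref{prop:tb-oblivious} and is interactive is grassroots. By Definition~\ref{def:grassroots}, grassroots means oblivious and interactive, so the plan has two halves, and the second is given by hypothesis.

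For obliviousness I invoke Proposition~\ref{prop:tb-oblivious} directly. Its hypothesis is exactly the assumed condition: for every disjoint nonempty $P,P'$, no equivalence class whose transactions have participants spanning both $P$ and $P'$ is ever enabled in an interleaving of correct runs of $\calF(P)$ and $\calF(P')$. Its conclusion is that every such interleaving is a correct run of $\calF(P\cup P')$, which is Definition~\ref{def:grassroots}(1). Interactivity, Definition~\ref{def:grassroots}(2), is assumed outright. Combining the two gives grassroots.

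I must also check that the notions match. The protocol in Proposition~\ref{prop:tb-oblivious} is the transactions-based one of Definition~\ref{def:tb-protocol}, whose closures are those of Definition~\ref{def:closure}. Interleavings and correctness are taken in the sense of Definitions~\ref{def:interleaving} and~\ref{def:ts-basic}, as in Definition~\ref{def:grassroots}. With the definitions aligned, there is no substantive obstacle: the theorem packages the proposition with the interactivity hypothesis.

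The real work lies in applying the theorem to madGLP, i.e., in Theorem~\ref{thm:madglp-grassroots}, though it is not needed here. There the spanning-class condition holds trivially because all madGLP transactions are unary (Remark~\ref{rem:madglp-transaction-degrees}). Interactivity would come from a run with a Send from $P$ received in $P'$, which would be exhibited via a cold-call through the index-0 serializer.
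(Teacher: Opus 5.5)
Your proposal is correct and matches the paper's proof: obliviousness comes from Proposition~\ref{prop:tb-oblivious}, interactivity is assumed, and Definition~\ref{def:grassroots} combines the two into grassroots. Your closing remarks about madGLP belong to Theorem~\ref{thm:madglp-grassroots} and are not needed for this statement.
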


\begin{proof}
By Proposition~\ref{prop:tb-oblivious}, the protocol is oblivious.  If it is also interactive, it is grassroots by Definition~\ref{def:grassroots}.
\end{proof}

\begin{proof}[Proof of Theorem~\ref{thm:madglp-grassroots}]
By Theorem~\ref{thm:interactive-grassroots}, it suffices to show that madGLP is a transactions-based protocol satisfying the obliviousness condition and that it is interactive.

\emph{madGLP is transactions-based}: By Definition~\ref{def:madglp-ts}, madGLP is defined via transactions (Reduce, Send, Receive) over a local-states function with a common initial state and transaction equivalence (two transactions are equivalent if they are the same type at the same agent, reducing the same goal with the same clause for Reduce, or processing the same message for Send and Receive).

\emph{Obliviousness condition}: All madGLP transactions are unary (Reduce, Send, and Receive each act on a single agent), so no transaction has participants spanning both $P$ and $P'$.  Therefore no equivalence class with spanning participants exists, and the condition of Proposition~\ref{prop:tb-oblivious} is satisfied vacuously.

\emph{Interactivity}: Let $P, P' \subset \Pi$ be disjoint and nonempty.  Let $\hat{r}$ be a correct run of madGLP$(P \cup P')$ in which some $p \in P$ executes a Reduce that outputs $\mathit{msg}(q, X?)$ on its network output stream for some $q \in P'$, triggering a \texttt{global\_send} goal addressed to $q$'s index-0 serializer.  A subsequent Send at $p$ places the cold-call message in the communication channel to $q$, and a Receive at $q$ localizes it: the reader global name $\_r(p,i)$ in the message yields a global writers table entry $(Z_q, p, i)$ at $q$ recording $p$ (Definition~\ref{def:localize}).  The run $\hat{r}$ is correct: the cold-call is delivered (by fair message delivery) and processed, and every enabled equivalence class is eventually taken.

In madGLP$(P')$, every message in a communication channel is placed there by a Send of an agent of $P'$, and the global names it carries are the sending agent's (Definition~\ref{def:globalize}); hence every global writers table entry created by Localize records a remote agent in $P'$.  The entry $(Z_q, p, i)$ records $p \notin P'$, so the local state of $q$ following the Receive occurs in no run of madGLP$(P')$, and no interleaving contains this configuration of $\hat{r}$.  Hence $\hat{r}$ is not an interleaving, and madGLP is interactive.
\end{proof}

\else
The formal framework and the full proof appear in \arxivref.
\fi

\ifappendix
\section{Related Work}
\label{sec:related-work}

\mypara{Concurrent Constraint Programming}
Concurrent constraint programming~\cite{saraswat1991semantic} organises computation around a shared store accessed by concurrent agents, accumulating constraints rather than assigning single-occurrence variables. GLP belongs to the family of committed-choice concurrent logic languages~\cite{shapiro1983subset,ueda1986guarded,clark1986parlog} surveyed in~\cite{shapiro1989family}, and may be understood as Flat Concurrent Prolog~\cite{mierowsky1985fcp} with SRSW added---resolving the semantic difficulties of read-only unification~\cite{levi1985readonly} so that term matching suffices.

\mypara{Modes and Linearity}
Ueda's moded Flat GHC~\cite{ueda1994moded,ueda1995io} assigns polarity to every variable occurrence and guarantees each variable is written exactly once, establishing ``programming as wiring''; his subsequent linearity analysis~\cite{ueda2001resource} identifies single-reader variables for compile-time garbage collection. GLP enforces both single-reader and single-writer universally as syntactic restrictions, enabling its distributed execution model. The SRSW discipline connects to linear logic~\cite{girard1987linear,wadler1993taste,hodas1994lolli}: each assignment is produced and consumed at most once.

\mypara{Session Types and Linear Logic}
GLP's writer/reader pairs connect to session types via the Curry-Howard correspondence~\cite{caires2010session,wadler2012propositions,wadler2014propositions}: linear logic propositions interpret as session types where channels are used exactly once. Multiparty session types~\cite{honda2016multiparty} coordinate communication among multiple agents; a companion paper~\cite{shapiro2026types} develops a type system for GLP along these lines. Whereas session types address communication-protocol safety, this paper addresses correctness of a distributed implementation against an abstract specification. The forwarder role of \texttt{global\_send} parallels the forwarding processes of the $\pi$-calculus encodings of synchronous into asynchronous communication~\cite{milner1992calculus,honda1991object,boudol1992asynchrony}; a global link differs in carrying exactly one write-once assignment from its fixed anchor to the reader's current holder.

\mypara{Futures and Promises}
A GLP reader/writer pair is operationally a future or promise~\cite{baker1977future,friedman1976impact,flanagan1999semantics}: the writer creates a promise, the reader awaits fulfilment. Pruiksma and Pfenning~\cite{pruiksma2022futures} develop linear futures with write-once shared-memory semantics analogous to GLP's. Unlike mainstream futures, GLP's may carry terms with further reader/writer pairs, enabling recursive communication structures shared with channel-passing calculi~\cite{milner1999communicating}; none of these works address distributed implementation across independent agents---the specific contribution of madGLP's global links.

\mypara{Transition Systems and Refinement}
Our correctness framework uses simulation relations between transition systems, following the I/O automata tradition~\cite{lynch1987hierarchical,lynch1988introduction,lynch1995forward,lynch1996forward} and Abadi and Lamport's refinement-mapping theory~\cite{abadi1991existence}. We use \emph{outcome equivalence} rather than trace refinement, and forward simulation directly (without history or prophecy variables): the SO invariant provides sufficient structure. We use simulation rather than bisimulation~\cite{sangiorgi1998bisimulation} because committed-choice clause selection means dGLP and cGLP need not be bisimilar even though they produce the same outcomes.

\mypara{Atomic Transactions}
Our showing that binary transactions can be implemented by sequences of unary transactions builds on the theory of atomic transactions~\cite{lynch1988atomic,lynch1994atomic,bernstein1987concurrency} and on linearizability~\cite{herlihy1990linearizability}, which defines concurrent correctness via linearization points. Our decomposition follows a different route: disjoint substitution commutativity (a consequence of SO) makes the order of unary transactions irrelevant to outcomes.

\mypara{Grassroots and Monotonic Distributed Systems}
The grassroots property---that any agent subset forms a functioning subsystem---was introduced for distributed platforms~\cite{shapiro2023grassrootsBA} via multiagent transition systems with monotonicity. The definition we use follows~\cite{lewis2026volitional}. Applications include grassroots social networks~\cite{shapiro2023gsn}, cryptocurrencies~\cite{shapiro2024gc,lewis2023grassroots}, and federated governance~\cite{halpern2024federated,shapiro2025GF}. CRDTs~\cite{shapiro2011conflict,kleppmann2022making} achieve eventual consistency via commuting updates; the Blocklace~\cite{almeida2024blocklace} demonstrates the compositionality grassroots systems require.

\mypara{Related Language Implementations}
Several concurrent languages share design goals with GLP: Erlang~\cite{armstrong2010erlang,lindahl2006practical}, Go~\cite{pike2012go}, Rust~\cite{matsakis2014rust}, session-typed languages~\cite{cooper2006links,lindley2017lightweight,toninho2013higher}, capability-based actors~\cite{clebsch2015deny}, and active-object languages with futures~\cite{johnsen2011abs}. GLP differs in its logic programming foundation, with SRSW emerging from logical structure rather than a separate type system; none provide formal proofs that their distributed implementations correctly implement their abstract semantics.

\mypara{AI-Assisted Formal Methods}
AI-driven derivation of implementations from formal specifications is an emerging discipline~\cite{fowler2025sdd,mundler2025type,blinn2024typed}. Our approach shares the premise that formal specifications constrain AI-generated code, but employs three layers of abstraction (math$\to$informal spec$\to$Dart) with bidirectional feedback, and at greater scale: the specifications in this paper underwent several major revisions driven by difficulties discovered during AI implementation (Section~\ref{sec:ai-methodology}).

\fi

\section{Conclusion}
\label{sec:conclusion}

\mypara{Summary}
\ifappendix We have presented dGLP and madGLP---deterministic operational semantics that implement cGLP and maGLP respectively---and proved them correct with respect to their abstract counterparts, madGLP for authorised runs.\else We have presented madGLP---a deterministic operational semantics that implements maGLP via message passing---whose correctness with respect to maGLP is established by Theorem~\ref{thm:madglp-implements-maglp} and proved in \arxivref.\fi Disjoint substitution commutativity (from the SO invariant)\ifappendix\ enables correctness proofs that carry over directly from the single-agent to the multiagent case\else\ underlies the correctness of the multiagent, message-passing implementation\fi. The madGLP design---local variable pairs connected by global links---provides a clean separation between local computation and inter-agent communication, supporting both the formal correctness proofs and practical implementation.

\begin{credits}
\subsubsection{\discintname}
The author has no competing interests to declare that are relevant to the content of this article.
\end{credits}

\clearpage
\bibliography{bib}

@article{keidar2026secure,
  author    = {Idit Keidar and Ehud Shapiro},
  title     = {Secure GLP: Executing Digital Social Contracts without Consensus},
  year      = {2026},
  journal      = {In preparation}
}

@article{lewis2026volitional,
  title={Volition-Guarded Multiagent Atomic Transactions: Describing People and their Machines},
  author={Lewis-Pye, Andy and Shapiro, Ehud},
  journal={arXiv preprint arXiv:2604.25596},
  year={2026},
  url = {https://arxiv.org/abs/2604.25596}
}

@inproceedings{saraswat1991semantic,
  author    = {Saraswat, Vijay A. and Rinard, Martin and Panangaden, Prakash},
  title     = {Semantic Foundations of Concurrent Constraint Programming},
  booktitle = {Proceedings of the 18th ACM SIGPLAN-SIGACT Symposium on Principles of Programming Languages (POPL)},
  pages     = {333--352},
  publisher = {ACM},
  year      = {1991},
  doi       = {10.1145/99583.99627}
}

@article{sangiorgi1998bisimulation,
  author    = {Sangiorgi, Davide},
  title     = {On the Bisimulation Proof Method},
  journal   = {Mathematical Structures in Computer Science},
  volume    = {8},
  number    = {5},
  pages     = {447--479},
  year      = {1998},
  publisher = {Cambridge University Press},
  doi       = {10.1017/S0960129598002527}
}

@article{herlihy1990linearizability,
  author    = {Herlihy, Maurice P. and Wing, Jeannette M.},
  title     = {Linearizability: A Correctness Condition for Concurrent Objects},
  journal   = {ACM Transactions on Programming Languages and Systems},
  volume    = {12},
  number    = {3},
  pages     = {463--492},
  year      = {1990},
  publisher = {ACM},
  doi       = {10.1145/78969.78972}
}

@article{honda2016multiparty,
  author    = {Honda, Kohei and Yoshida, Nobuko and Carbone, Marco},
  title     = {Multiparty Asynchronous Session Types},
  journal   = {Journal of the ACM},
  volume    = {63},
  number    = {1},
  pages     = {9:1--9:67},
  year      = {2016},
  publisher = {ACM},
  doi       = {10.1145/2827695}
}

@article{flanagan1999semantics,
  author    = {Flanagan, Cormac and Felleisen, Matthias},
  title     = {The Semantics of Future and an Application},
  journal   = {Journal of Functional Programming},
  volume    = {9},
  number    = {1},
  pages     = {1--31},
  year      = {1999},
  publisher = {Cambridge University Press},
  doi       = {10.1017/S0956796899003329}
}

@article{shapiro2026types,
  title={Types for Grassroots Logic Programs (Full version)},
  author={Shapiro, Ehud},
  journal={arXiv preprint arXiv:2601.17957},
  year={2026},
  url = {https://arxiv.org/abs/2601.17957}
}

@article{shapiro2026implementing,
  title={Implementing Grassroots Logic Programs with Multiagent Transition Systems and AI (Full Version)},
  author={Shapiro, Ehud},
  journal={arXiv:2602.06934.  Summary to appear in Proc. of LOPSTR+PPDP'26},
  year={2026},
  url = {https://arxiv.org/abs/2602.06934}
}

@article{friedman1976impact,
  author = {Friedman, Daniel P. and Wise, David S.},
  title = {The Impact of Applicative Programming on Multiprocessing},
  journal = {Indiana University Computer Science Department Technical Report},
  number = {TR-26},
  year = {1976}
}

@misc{dart2024,
     author       = {{Google}},
     title        = {Dart Programming Language},
     howpublished = {\url{https://dart.dev}},
     year         = {2024},
  url = {https://dart.dev/}
}

@inproceedings{mundler2025type,
  author    = {M\"{u}ndler, Niels and Guerraoui, Rachid and Vechev, Martin},
  title     = {Type-Constrained Code Generation with Language Models},
  booktitle = {Proceedings of the 46th ACM SIGPLAN Conference on Programming Language Design and Implementation (PLDI)},
  publisher = {ACM},
  year      = {2025},
  note      = {Available at \url{https://arxiv.org/abs/2504.09246}}
}

@inproceedings{blinn2024typed,
  author    = {Blinn, Andrew and Li, Xiang and Kim, June Hyung and Omar, Cyrus},
  title     = {Statically Contextualizing Large Language Models with Typed Holes},
  booktitle = {Proceedings of the ACM on Programming Languages (OOPSLA)},
  volume    = {8},
  number    = {OOPSLA2},
  pages     = {1--29},
  publisher = {ACM},
  year      = {2024},
  doi       = {10.1145/3689746}
}

@article{fowler2025sdd,
  title   = {Understanding Spec-Driven-Development: Kiro, spec-kit, and Tessl},
  author  = {Fowler, Martin},
  journal = {MartinFowler.com},
  year    = {2025},
  month   = {October},
  day     = {15},
  url     = {https://martinfowler.com/articles/exploring-gen-ai/sdd-3-tools.html},
  note    = {Analyzes the shift toward using formal specifications and types as the primary interface for AI code generation}
}

@inproceedings{lynch1987hierarchical,
  author    = {Lynch, Nancy A. and Tuttle, Mark R.},
  title     = {Hierarchical Correctness Proofs for Distributed Algorithms},
  booktitle = {Proceedings of the 6th Annual ACM Symposium on Principles of Distributed Computing (PODC)},
  pages     = {137--151},
  year      = {1987},
  publisher = {ACM},
  doi       = {10.1145/41840.41852}
}

@techreport{lynch1988introduction,
  author      = {Lynch, Nancy A. and Tuttle, Mark R.},
  title       = {An Introduction to Input/Output Automata},
  institution = {Laboratory for Computer Science, Massachusetts Institute of Technology},
  number      = {MIT/LCS/TM-373},
  year        = {1988}
}

@article{lynch1995forward,
  author    = {Lynch, Nancy A. and Vaandrager, Frits W.},
  title     = {Forward and Backward Simulations: {I}. {U}ntimed Systems},
  journal   = {Information and Computation},
  volume    = {121},
  number    = {2},
  pages     = {214--233},
  year      = {1995},
  publisher = {Elsevier},
  doi       = {10.1006/inco.1995.1134}
}

@article{lynch1996forward,
  author    = {Lynch, Nancy A. and Vaandrager, Frits W.},
  title     = {Forward and Backward Simulations: {II}. {T}iming-Based Systems},
  journal   = {Information and Computation},
  volume    = {128},
  number    = {1},
  pages     = {1--25},
  year      = {1996},
  publisher = {Elsevier},
  doi       = {10.1006/inco.1996.0060}
}

@article{abadi1991existence,
  author    = {Abadi, Mart\'{\i}n and Lamport, Leslie},
  title     = {The Existence of Refinement Mappings},
  journal   = {Theoretical Computer Science},
  volume    = {82},
  number    = {2},
  pages     = {253--284},
  year      = {1991},
  publisher = {Elsevier},
  doi       = {10.1016/0304-3975(91)90224-P}
}

@article{mierowsky1985fcp,
  title={On the implementation of Flat Concurrent Prolog},
  author={Mierowsky, C. and Taylor, S. and Shapiro, E. and Levy, J. and Safra, M.},
  journal={Proceedings of the 1985 Symposium on Logic Programming},
  pages={276--286},
  year={1985},
  publisher={IEEE}
}

@article{ueda1994moded,
  doi = {10.1007/BF03038307},
  title={Moded Flat GHC and Its Message-Oriented Implementation Technique},
  author={Ueda, Kazunori and Morita, Masao},
  journal={New Generation Computing},
  volume={12},
  number={4},
  pages={337--368},
  year={1994}
}

@inproceedings{ueda1995io,
  doi = {10.1007/BFb0026579},
  title={I/O mode analysis in concurrent logic programming},
  author={Ueda, Kazunori and Morita, Masao},
  booktitle={Proceedings of the International Symposium on Theory and Practice of Parallel Programming},
  pages={356--368},
  year={1995},
  publisher={Springer}
}

@article{ueda2001resource,
  doi = {10.1007/3-540-45500-0\_5},
  title={Resource-passing concurrent programming},
  author={Ueda, Kazunori},
  journal={Proceedings of TACS 2001},
  pages={95--126},
  year={2001},
  publisher={Springer}
}

@inproceedings{levi1985readonly,
  author = {Levi, Giorgio and Palamidessi, Catuscia},
  title = {The Semantics of the Read-Only Variable},
  booktitle = {Proc. Symposium on Logic Programming},
  publisher = {IEEE},
  pages = {128--137},
  year = {1985}
}

@inproceedings{johnsen2011abs,
  author    = {Johnsen, Einar Broch and H{\"a}hnle, Reiner and Sch{\"a}fer, Jan and Schlatte, Rudolf and Steffen, Martin},
  title     = {{ABS}: A Core Language for Abstract Behavioral Specification},
  booktitle = {Proceedings of the 9th International Symposium on Formal Methods for Components and Objects (FMCO)},
  series    = {Lecture Notes in Computer Science},
  volume    = {6957},
  pages     = {142--164},
  publisher = {Springer},
  year      = {2011},
  doi       = {10.1007/978-3-642-25271-6\_8}
}

@inproceedings{matsakis2014rust,
  author    = {Matsakis, Nicholas D. and Klock II, Felix S.},
  title     = {The {R}ust Language},
  booktitle = {Proceedings of the 2014 ACM SIGAda Annual Conference on High Integrity Language Technology (HILT)},
  pages     = {103--104},
  publisher = {ACM},
  year      = {2014},
  doi       = {10.1145/2663171.2663188}
}

@misc{pike2012go,
  author    = {Pike, Rob},
  title     = {Go at {G}oogle: Language Design in the Service of Software Engineering},
  howpublished = {Keynote at SPLASH 2012},
  year      = {2012},
  note      = {Available at \url{https://go.dev/talks/2012/splash.article}}
}

@article{armstrong2010erlang,
  author    = {Armstrong, Joe},
  title     = {Erlang},
  journal   = {Communications of the ACM},
  volume    = {53},
  number    = {9},
  pages     = {68--75},
  year      = {2010},
  publisher = {ACM},
  doi       = {10.1145/1810891.1810910}
}

@inproceedings{lindahl2006practical,
  author    = {Lindahl, Tobias and Sagonas, Konstantinos},
  title     = {Practical Type Inference Based on Success Typings},
  booktitle = {Proceedings of the 8th ACM SIGPLAN International Conference on Principles and Practice of Declarative Programming (PPDP)},
  pages     = {167--178},
  publisher = {ACM},
  year      = {2006},
  doi       = {10.1145/1140335.1140356}
}

@book{milner1999communicating,
  author    = {Milner, Robin},
  title     = {Communicating and Mobile Systems: The $\pi$-Calculus},
  publisher = {Cambridge University Press},
  year      = {1999},
  isbn      = {0-521-65869-1},
  url = {https://www.cambridge.org/9780521658690}
}

@inproceedings{cooper2006links,
  author    = {Cooper, Ezra and Lindley, Sam and Wadler, Philip and Yallop, Jeremy},
  title     = {Links: Web Programming Without Tiers},
  booktitle = {Proceedings of the 5th International Symposium on Formal Methods for Components and Objects (FMCO)},
  series    = {Lecture Notes in Computer Science},
  volume    = {4709},
  pages     = {266--296},
  publisher = {Springer},
  year      = {2006},
  doi       = {10.1007/978-3-540-74792-5\_12}
}

@inproceedings{toninho2013higher,
  author    = {Toninho, Bernardo and Caires, Lu\'{\i}s and Pfenning, Frank},
  title     = {Higher-Order Processes, Functions, and Sessions: A Monadic Integration},
  booktitle = {Proceedings of the 22nd European Symposium on Programming (ESOP)},
  series    = {Lecture Notes in Computer Science},
  volume    = {7792},
  pages     = {350--369},
  publisher = {Springer},
  year      = {2013},
  doi       = {10.1007/978-3-642-37036-6\_20}
}

@inproceedings{clebsch2015deny,
  author    = {Clebsch, Sylvan and Drossopoulou, Sophia and Blessing, Sebastian and McNeil, Andy},
  title     = {Deny Capabilities for Safe, Fast Actors},
  booktitle = {Proceedings of the 5th International Workshop on Programming Based on Actors, Agents, and Decentralized Control (AGERE!)},
  pages     = {1--12},
  publisher = {ACM},
  year      = {2015},
  doi       = {10.1145/2824815.2824816}
}

@inproceedings{caires2010session,
  author    = {Caires, Lu\'{\i}s and Pfenning, Frank},
  title     = {Session Types as Intuitionistic Linear Propositions},
  booktitle = {Proceedings of the 21st International Conference on Concurrency Theory (CONCUR)},
  series    = {Lecture Notes in Computer Science},
  volume    = {6269},
  pages     = {222--236},
  publisher = {Springer},
  year      = {2010},
  doi       = {10.1007/978-3-642-15375-4\_16}
}

@article{pruiksma2022futures,
  author    = {Pruiksma, Klaas and Pfenning, Frank},
  title     = {Back to Futures},
  journal   = {Journal of Functional Programming},
  volume    = {32},
  pages     = {e4},
  year      = {2022},
  publisher = {Cambridge University Press},
  doi       = {10.1017/S0956796822000016}
}

@inproceedings{wadler2012propositions,
  author    = {Wadler, Philip},
  title     = {Propositions as Sessions},
  booktitle = {Proceedings of the 17th ACM SIGPLAN International Conference on Functional Programming (ICFP)},
  pages     = {273--286},
  publisher = {ACM},
  year      = {2012},
  doi       = {10.1145/2364527.2364568}
}

@article{wadler2014propositions,
  author    = {Wadler, Philip},
  title     = {Propositions as Sessions},
  journal   = {Journal of Functional Programming},
  volume    = {24},
  number    = {2--3},
  pages     = {384--418},
  year      = {2014},
  publisher = {Cambridge University Press},
  doi       = {10.1017/S095679681400001X}
}

@article{lindley2017lightweight,
  author    = {Lindley, Sam and Morris, J. Garrett},
  title     = {Lightweight Functional Session Types},
  journal   = {Behavioural Types: From Theory to Tools},
  pages     = {265--286},
  year      = {2017},
  publisher = {River Publishers},
  note      = {Chapter in edited volume}
}

@article{girard1987linear,
  author    = {Girard, Jean-Yves},
  title     = {Linear Logic},
  journal   = {Theoretical Computer Science},
  volume    = {50},
  number    = {1},
  pages     = {1--101},
  year      = {1987},
  publisher = {Elsevier},
  doi       = {10.1016/0304-3975(87)90045-4}
}

@article{wadler1993taste,
  author    = {Wadler, Philip},
  title     = {A Taste of Linear Logic},
  journal   = {Mathematical Structures in Computer Science},
  volume    = {3},
  number    = {4},
  pages     = {367--392},
  year      = {1993},
  publisher = {Cambridge University Press},
  doi       = {10.1017/S0960129500000268}
}

@article{hodas1994lolli,
  author    = {Hodas, Joshua S. and Miller, Dale},
  title     = {Logic Programming in a Fragment of Intuitionistic Linear Logic},
  journal   = {Information and Computation},
  volume    = {110},
  number    = {2},
  pages     = {327--365},
  year      = {1994},
  doi       = {10.1006/inco.1994.1036}
}

@article{lynch1988atomic,
  title={Atomic Transactions},
  author={Lynch, Nancy A. and Merritt, Michael and Weihl, William and Fekete, Alan},
  journal={Morgan Kaufmann},
  year={1988},
  publisher={Morgan Kaufmann Publishers}
}

@book{lynch1994atomic,
  author={Lynch, Nancy A. and Merritt, Michael and Weihl, William E. and Fekete, Alan},
  title={Atomic Transactions: In Concurrent and Distributed Systems},
  year={1994},
  publisher={Morgan Kaufmann Publishers},
  address={San Francisco, CA}
}

@book{bernstein1987concurrency,
  title={Concurrency Control and Recovery in Database Systems},
  author={Bernstein, Philip A. and Hadzilacos, Vassos and Goodman, Nathan},
  year={1987},
  publisher={Addison-Wesley}
}

@article{shapiro2025glp,
  title={GLP: A Grassroots, Multiagent, Concurrent, Logic Programming Language for AI (full version)},
  author={Shapiro, Ehud},
  journal={arXiv preprint arXiv:2510.15747, Summary to appear in Proc. of ICLP'26},
  year={2025},
  url = {https://arxiv.org/abs/2510.15747}
}

@inproceedings{ueda1986guarded,
  doi = {10.1007/3-540-16479-0\_17},
  title={Guarded Horn Clauses},
  author={Ueda, Kazunori},
  booktitle={Logic Programming '85},
  series={Lecture Notes in Computer Science},
  volume={221},
  pages={168--179},
  year={1986},
  publisher={Springer}
}

@article{shapiro2025GF,
  title={Grassroots Federation: Fair Democratic Governance at Scale},
  author={Talmon, Nimrod and Shapiro, Ehud},
  journal={arXiv preprint arXiv:2505.02208; also Proc. of AAMAS'26},
  year={2025},
  url = {https://arxiv.org/abs/2505.02208}
}

@inproceedings{shapiro2025atomic,
  title={Grassroots Platforms with Atomic Transactions: Social Graphs, Cryptocurrencies, and Democratic Federations},
  author={Shapiro, Ehud},
  booktitle={Proceedings of the 27th International Conference on Distributed Computing and Networking},
  pages={71--81},
  doi = {10.1145/3772290.3772309},
  note = {arXiv preprint arXiv:2502.11299},
  year={2026}
}

@article{clark1986parlog,
  doi = {10.1145/5001.5390},
  title={PARLOG: parallel programming in logic},
  author={Clark, Keith and Gregory, Steve},
  journal={ACM Transactions on Programming Languages and Systems (TOPLAS)},
  volume={8},
  number={1},
  pages={1--49},
  year={1986},
  publisher={ACM New York, NY, USA}
}

@article{shapiro1983subset,
  title={A subset of Concurrent Prolog and its interpreter},
  author={Shapiro, Ehud},
  journal={ICOT Technical Report, TR-003},
  year={1983}
}

@inproceedings{shapiro2011conflict,
  title={Conflict-free replicated data types},
  author={Shapiro, Marc and Pregui{\c{c}}a, Nuno and Baquero, Carlos and Zawirski, Marek},
  booktitle={Stabilization, Safety, and Security of Distributed Systems: 13th International Symposium, SSS 2011, Grenoble, France, October 10-12, 2011. Proceedings 13},
  pages={386--400},
  year={2011},
  organization={Springer}
}

@inproceedings{baker1977future,
  author    = {Henry G. Baker and Carl Hewitt},
  title     = {The Incremental Garbage Collection of Processes},
  booktitle = {Proceedings of the 1977 Symposium on Artificial Intelligence and Programming Languages},
  year      = {1977},
  pages     = {55--59},
  publisher = {ACM},
  doi       = {10.1145/800228.806932}
}

@article{houri1989sequential,
  doi = {10.1016/0743-1066(89)90011-3},
  title={A sequential abstract machine for Flat Concurrent Prolog},
  author={Houri, Avshalom and Shapiro, Ehud},
  journal={The Journal of Logic Programming},
  volume={7},
  number={2},
  pages={85--123},
  year={1989},
  publisher={Elsevier}
}

@inproceedings{kleppmann2022making,
  title={{Making CRDTs Byzantine fault tolerant}},
  author={Kleppmann, Martin},
  booktitle={Proceedings of the 9th Workshop on Principles and Practice of Consistency for Distributed Data},
  pages={8--15},
  year={2022}
}

@article{halpern2024federated,
  title={Federated Assemblies},
  author={Halpern, Daniel and Procaccia, Ariel D and Shapiro, Ehud and Talmon, Nimrod},
  booktitle={AAAI '25},
  journal={Proc AAAI 2025; arXiv preprint arXiv:2405.19129},
  year={2024},
  url = {https://arxiv.org/abs/2405.19129}
}

@article{almeida2024blocklace,
  title={The Blocklace: A  Byzantine-repelling and Universal Conflict-free Replicated Data Type},
  author={Almeida, Paulo S{\'e}rgio and Shapiro, Ehud},
  journal={arXiv preprint arXiv:2402.08068},
  year={2024},
  url = {https://arxiv.org/abs/2402.08068}
}

@article{lewis2023grassroots,
  title={Grassroots Flash: A Payment System for Grassroots Cryptocurrencies},
  author={Lewis-Pye, Andrew and Naor, Oded and Shapiro, Ehud},
  journal={arXiv preprint arXiv:2309.13191},
  year={2023},
  url = {https://arxiv.org/abs/2309.13191}
}

@inproceedings{shapiro2023gsn, 
author={Shapiro, Ehud},
title={Grassroots Social Networking: Serverless, Permissionless Protocols for Twitter/LinkedIn/WhatsApp},
year = {2023},
isbn = {979-8-4007-0225-9/23/09},
publisher = {Association for Computing Machinery},
doi = {10.1145/3599696.3612898},
booktitle = {OASIS ’23},
location = {Rome, Italy},
}

@inproceedings{shapiro2023grassrootsBA, 
author={Shapiro, Ehud},
 title={Grassroots Distributed Systems: Concept, Examples, Implementation and Applications (Brief Announcement)
},year = {2023},
publisher = {LIPICS},
booktitle = {37th International Symposium on Distributed Computing (DISC 2023). (Extended version: arXiv:2301.04391)},
address = {Italy},
notes={Extended version: arXiv preprint arXiv:2301.04391},
pages = {47:1, 47:7},
  url = {https://arxiv.org/abs/2301.04391}
}

@article{shapiro2024gc,
  title={Grassroots Currencies: A Foundation for a Grassroots Digital Economy},
  author={Shapiro, Ehud},
  journal={arXiv preprint arXiv:2202.05619},
  year={2022},
  url = {https://arxiv.org/abs/2202.05619}
}

@article{shapiro1989family,
  doi = {10.1145/72551.72555},
  title={The family of concurrent logic programming languages},
  author={Shapiro, Ehud},
  journal={ACM Computing Surveys (CSUR)},
  volume={21},
  number={3},
  pages={413--510},
  year={1989},
  publisher={ACM New York, NY, USA}
}

@article{shapiro2021multiagent,
  title={Multiagent Transition Systems: Protocol-Stack Mathematics for Distributed Computing},
  author={Shapiro, Ehud},
  journal={arXiv preprint arXiv:2112.13650},
  year={2021},
  url = {https://arxiv.org/abs/2112.13650}
}

@misc{proof,
  title={Proof of Stake FAQ, Ethereum Wiki},
  author={Nakamoto, Satoshi and Bitcoin, A},
  howpublished = { \\ \MYhref{https://eth.wiki/en/concepts/proof-of-stake-faqs}{https://eth.wiki/en/concepts/proof-of-stake-faqs}},
  year={2019}
}

@article{milner1992calculus,
  author    = {Robin Milner and Joachim Parrow and David Walker},
  title     = {A Calculus of Mobile Processes, {I} and {II}},
  journal   = {Information and Computation},
  volume    = {100},
  number    = {1},
  pages     = {1--77},
  year      = {1992}
}

@inproceedings{honda1991object,
  author    = {Kohei Honda and Mario Tokoro},
  title     = {An Object Calculus for Asynchronous Communication},
  booktitle = {Proceedings of the European Conference on Object-Oriented Programming (ECOOP)},
  series    = {Lecture Notes in Computer Science},
  volume    = {512},
  pages     = {133--147},
  publisher = {Springer},
  year      = {1991}
}

@techreport{boudol1992asynchrony,
  author      = {G{\'e}rard Boudol},
  title       = {Asynchrony and the $\pi$-calculus},
  institution = {INRIA},
  number      = {Research Report RR-1702},
  year        = {1992}
}

\ifappendix
\appendix

\section{Detailed madGLP Example Traces}\label{app:madglp-trace}

This appendix provides complete formal traces for three madGLP examples: the client-monitor example (Program~\ref{program:monitor}), the friend-mediated introduction scenario (Bob sends $X?$ to Alice and $X$ to Charlie), and the edge case where both ends of a variable pair are sent to the same agent.

\subsection{Example 1: Client-Monitor Communication}
\label{app:client-monitor}

This example corresponds to Figure~\ref{fig:madglp-example}.

Initial goal: \texttt{client1(Xs)@p, monitor(Xs?)@q}

\mypara{Stage 0: After Initial Cold-call}

\begin{center}
\begin{tabular}{|c|c|}
\hline
\multicolumn{2}{|c|}{\textbf{maGLP State}} \\
\hline
\textbf{Agent $p$ Resolvent} & \textbf{Agent $q$ Resolvent} \\
\hline
$\mathsf{Xs}$ & $\mathsf{Xs?}$ \\
\hline
\multicolumn{2}{|c|}{\textbf{Shared Pairs}: $(\mathsf{Xs}@p, \mathsf{Xs?}@q)$} \\
\hline
\end{tabular}
\end{center}

\begin{center}
\begin{tabular}{|c|c|}
\hline
\multicolumn{2}{|c|}{\textbf{madGLP State}} \\
\hline
\textbf{Agent $p$} & \textbf{Agent $q$} \\
\hline
\multicolumn{2}{|c|}{\textbf{Resolvents (including global\_send goals)}} \\
\hline
$\mathsf{Xs_p}, \mathsf{Xs_p?}$, & $\mathsf{Xs_q}, \mathsf{Xs_q?}$ \\
\texttt{global\_send}$(\mathsf{Xs_p?}, \_r(p,1), q)$ & \\
\hline
\multicolumn{2}{|c|}{\textbf{Assignments}} \\
\hline
(none) & (none) \\
\hline
\multicolumn{2}{|c|}{\textbf{Global Writers Tables}} \\
\hline
(empty) &
\begin{tabular}{ccc}
$k$ & entry \\
\hline
1 & $(\mathsf{Xs_q}, p, 1)$
\end{tabular}
\\
\hline
\multicolumn{2}{|c|}{\textbf{Pending Messages}: (none)} \\
\hline
\end{tabular}
\end{center}

During Globalize, the reader $\mathsf{Xs_p?}$ was globalized as $\_r(p,1)$, spawning \texttt{global\_send}$(\mathsf{Xs_p?}, \_r(p,1), q)$ at $p$ (Definition~\ref{def:globalize}, case~2). No entry is created at $p$---the \texttt{global\_send} goal handles outgoing communication. During Localize at $q$, the reader global name $\_r(p,1)$ created pair $(\mathsf{Xs_q}, \mathsf{Xs_q?})$ with entry $(\mathsf{Xs_q}, p, 1)$ at index~1 in $q$'s global writers table (Definition~\ref{def:localize}, case~2). Agent $q$ will receive the assignment on this link.

\mypara{Stage 1a: After $p$ Reduces (before $q$ Receives)}

Agent $p$ reduces with a clause whose head is $[\mathtt{add}|\mathsf{Xs1_p?}]$, producing the assignment $\mathsf{Xs_p} := [\mathtt{add}|\mathsf{Xs1_p?}]$. This makes $\mathsf{Xs_p?}$ known, triggering the \texttt{global\_send} goal.

\begin{center}
\begin{tabular}{|c|c|}
\hline
\multicolumn{2}{|c|}{\textbf{madGLP State}} \\
\hline
\textbf{Agent $p$} & \textbf{Agent $q$} \\
\hline
\multicolumn{2}{|c|}{\textbf{Resolvents (including global\_send goals)}} \\
\hline
$\mathsf{Xs1_p}, \mathsf{Xs1_p?}$, & $\mathsf{Xs_q}, \mathsf{Xs_q?}$ \\
\texttt{global\_send}$(\mathsf{Xs1_p?}, \_r(p,2), q)$ & \\
\hline
\multicolumn{2}{|c|}{\textbf{Assignments}} \\
\hline
$\mathsf{Xs_p} := [\mathtt{add}|\mathsf{Xs1_p?}]$ & (none) \\
$\mathsf{Xs_p?} := [\mathtt{add}|\mathsf{Xs1_p?}]$ & \\
\hline
\multicolumn{2}{|c|}{\textbf{Global Writers Tables}} \\
\hline
(empty) &
\begin{tabular}{cc}
$k$ & entry \\
\hline
1 & $(\mathsf{Xs_q}, p, 1)$
\end{tabular}
\\
\hline
\multicolumn{2}{|c|}{\textbf{Pending Messages}} \\
\hline
\multicolumn{2}{|c|}{$(\_r(p,1) := [\mathtt{add}|\_r(p,2)],\ q)$} \\
\hline
\end{tabular}
\end{center}

\paragraph{Globalization performed:}
\begin{center}
\begin{tabular}{|l|l|}
\hline
Term before & $[\mathtt{add}|\mathsf{Xs1_p?}]$ \\
\hline
Term after & $[\mathtt{add}|\_r(p,2)]$ \\
\hline
New goal at $p$ & \texttt{global\_send}$(\mathsf{Xs1_p?}, \_r(p,2), q)$ \\
\hline
New entry at $p$ & (none---outgoing link) \\
\hline
\end{tabular}
\end{center}

The old \texttt{global\_send} goal for $\mathsf{Xs_p?}$ fired (since $\mathsf{Xs_p?}$ became known) and completed after adding the message to $M_p$. The value $[\mathtt{add}|\mathsf{Xs1_p?}]$ contains reader $\mathsf{Xs1_p?}$, which is globalized as $\_r(p,2)$ with a new \texttt{global\_send} goal spawned (Definition~\ref{def:globalize}, case~2).

\mypara{Stage 1b: After $q$ Receives}

Agent $q$ receives message and localizes.

\begin{center}
\begin{tabular}{|c|c|}
\hline
\multicolumn{2}{|c|}{\textbf{maGLP State}} \\
\hline
\textbf{Agent $p$ Resolvent} & \textbf{Agent $q$ Resolvent} \\
\hline
$\mathsf{Xs1}$ & $\mathsf{Xs1?}$ \\
\hline
\multicolumn{2}{|c|}{\textbf{Assignments}} \\
\hline
$\mathsf{Xs} := [\mathtt{add}|\mathsf{Xs1}]$ & $\mathsf{Xs?} := [\mathtt{add}|\mathsf{Xs1?}]$ \\
\hline
\multicolumn{2}{|c|}{\textbf{Shared Pairs}: $(\mathsf{Xs1}@p, \mathsf{Xs1?}@q)$} \\
\hline
\end{tabular}
\end{center}

\begin{center}
\begin{tabular}{|c|c|}
\hline
\multicolumn{2}{|c|}{\textbf{madGLP State}} \\
\hline
\textbf{Agent $p$} & \textbf{Agent $q$} \\
\hline
\multicolumn{2}{|c|}{\textbf{Resolvents (including global\_send goals)}} \\
\hline
$\mathsf{Xs1_p}, \mathsf{Xs1_p?}$, & $\mathsf{Xs1_q}, \mathsf{Xs1_q?}$ \\
\texttt{global\_send}$(\mathsf{Xs1_p?}, \_r(p,2), q)$ & \\
\hline
\multicolumn{2}{|c|}{\textbf{Assignments}} \\
\hline
$\mathsf{Xs_p} := [\mathtt{add}|\mathsf{Xs1_p?}]$ & $\mathsf{Xs_q} := [\mathtt{add}|\mathsf{Xs1_q?}]$ \\
$\mathsf{Xs_p?} := [\mathtt{add}|\mathsf{Xs1_p?}]$ & $\mathsf{Xs_q?} := [\mathtt{add}|\mathsf{Xs1_q?}]$ \\
\hline
\multicolumn{2}{|c|}{\textbf{Global Writers Tables}} \\
\hline
(empty) &
\begin{tabular}{cc}
$k$ & entry \\
\hline
2 & $(\mathsf{Xs1_q}, p, 2)$
\end{tabular}
\\
\hline
\multicolumn{2}{|c|}{\textbf{Pending Messages}: (none)} \\
\hline
\end{tabular}
\end{center}

\paragraph{Localization performed:}
\begin{center}
\begin{tabular}{|l|l|}
\hline
Term before & $[\mathtt{add}|\_r(p,2)]$ \\
\hline
Term after & $[\mathtt{add}|\mathsf{Xs1_q?}]$ \\
\hline
New pair at $q$ & $(\mathsf{Xs1_q}, \mathsf{Xs1_q?})$ \\
\hline
New entry at $q$ & $(2, (\mathsf{Xs1_q}, p, 2))$ \\
\hline
New goal at $q$ & (none---incoming link) \\
\hline
\end{tabular}
\end{center}

Reader global name $\_r(p,2)$ is localized by creating pair $(\mathsf{Xs1_q}, \mathsf{Xs1_q?})$, placing $\mathsf{Xs1_q?}$ (reader) in the term, and adding entry $(\mathsf{Xs1_q}, p, 2)$ (Definition~\ref{def:localize}, case~2). Entry~1 at $q$ was removed after assigning $\mathsf{Xs_q}$, and $q$ acknowledged $\_r(p,1)$; on its arrival $p$ releases the pending value, closing link~1 (\S\ref{app:requests-acks}).

\mypara{Stage 2a: After $p$ Reduces (before $q$ Receives)}

Agent $p$ reduces with a clause whose head is $[\mathtt{value}(\mathsf{V_p})|\mathsf{Xs2_p?}]$, producing the assignment $\mathsf{Xs1_p} := [\mathtt{value}(\mathsf{V_p})|\mathsf{Xs2_p?}]$. The clause head contains writer $\mathsf{V_p}$ (the monitor will write the value back) and reader $\mathsf{Xs2_p?}$ (the stream continuation that $p$ will write to).

\begin{center}
\begin{tabular}{|c|c|}
\hline
\multicolumn{2}{|c|}{\textbf{madGLP State}} \\
\hline
\textbf{Agent $p$} & \textbf{Agent $q$} \\
\hline
\multicolumn{2}{|c|}{\textbf{Resolvents (including global\_send goals)}} \\
\hline
$\mathsf{V_p}, \mathsf{V_p?}, \mathsf{Xs2_p}, \mathsf{Xs2_p?}$, & $\mathsf{Xs1_q}, \mathsf{Xs1_q?}$ \\
\texttt{global\_send}$(\mathsf{Xs2_p?}, \_r(p,4), q)$ & \\
\hline
\multicolumn{2}{|c|}{\textbf{Assignments}} \\
\hline
$\mathsf{Xs_p} := [\mathtt{add}|\mathsf{Xs1_p?}]$ & $\mathsf{Xs_q} := [\mathtt{add}|\mathsf{Xs1_q?}]$ \\
$\mathsf{Xs_p?} := [\mathtt{add}|\mathsf{Xs1_p?}]$ & $\mathsf{Xs_q?} := [\mathtt{add}|\mathsf{Xs1_q?}]$ \\
$\mathsf{Xs1_p} := [\mathtt{value}(\mathsf{V_p})|\mathsf{Xs2_p?}]$ & \\
$\mathsf{Xs1_p?} := [\mathtt{value}(\mathsf{V_p})|\mathsf{Xs2_p?}]$ & \\
\hline
\multicolumn{2}{|c|}{\textbf{Global Writers Tables}} \\
\hline
\begin{tabular}{cc}
$i$ & entry \\
\hline
3 & $(\mathsf{V_p}, q)$
\end{tabular}
&
\begin{tabular}{cc}
$k$ & entry \\
\hline
2 & $(\mathsf{Xs1_q}, p, 2)$
\end{tabular}
\\
\hline
\multicolumn{2}{|c|}{\textbf{Pending Messages}} \\
\hline
\multicolumn{2}{|c|}{$(\_r(p,2) := [\mathtt{value}(\_w(p,3))|\_r(p,4)],\ q)$} \\
\hline
\end{tabular}
\end{center}

\paragraph{Globalization performed:}
\begin{center}
\begin{tabular}{|l|l|}
\hline
Term before & $[\mathtt{value}(\mathsf{V_p})|\mathsf{Xs2_p?}]$ \\
\hline
Term after & $[\mathtt{value}(\_w(p,3))|\_r(p,4)]$ \\
\hline
\end{tabular}
\end{center}

$\mathsf{V_p}$ is a writer, so it is globalized as $\_w(p,3)$ and entry $(\mathsf{V_p}, q)$ is created at index~3 in $p$'s global writers table (Definition~\ref{def:globalize}, case~1). Agent $p$ will receive the assignment on this link. $\mathsf{Xs2_p?}$ is a reader, so it is globalized as $\_r(p,4)$ and \texttt{global\_send}$(\mathsf{Xs2_p?}, \_r(p,4), q)$ is spawned (Definition~\ref{def:globalize}, case~2); no entry is created for this outgoing link.

\mypara{Stage 2b: After $q$ Receives}

Agent $q$ receives message and localizes.

\begin{center}
\begin{tabular}{|c|c|}
\hline
\multicolumn{2}{|c|}{\textbf{maGLP State}} \\
\hline
\textbf{Agent $p$ Resolvent} & \textbf{Agent $q$ Resolvent} \\
\hline
$\mathsf{V?},\ \mathsf{Xs2}$ & $\mathsf{V},\ \mathsf{Xs2?}$ \\
\hline
\multicolumn{2}{|c|}{\textbf{Assignments}} \\
\hline
$\mathsf{Xs} := [\mathtt{add}|\mathsf{Xs1}]$ & $\mathsf{Xs?} := [\mathtt{add}|\mathsf{Xs1?}]$ \\
$\mathsf{Xs1} := [\mathtt{value}(\mathsf{V?})|\mathsf{Xs2}]$ & $\mathsf{Xs1?} := [\mathtt{value}(\mathsf{V})|\mathsf{Xs2?}]$ \\
\hline
\multicolumn{2}{|c|}{\textbf{Shared Pairs}: $(\mathsf{V}@q, \mathsf{V?}@p)$, $(\mathsf{Xs2}@p, \mathsf{Xs2?}@q)$} \\
\hline
\end{tabular}
\end{center}

\begin{center}
\begin{tabular}{|c|c|}
\hline
\multicolumn{2}{|c|}{\textbf{madGLP State}} \\
\hline
\textbf{Agent $p$} & \textbf{Agent $q$} \\
\hline
\multicolumn{2}{|c|}{\textbf{Resolvents (including global\_send goals)}} \\
\hline
$\mathsf{V_p}, \mathsf{V_p?}, \mathsf{Xs2_p}, \mathsf{Xs2_p?}$, & $\mathsf{V_q}, \mathsf{V_q?}, \mathsf{Xs2_q}, \mathsf{Xs2_q?}$, \\
\texttt{global\_send}$(\mathsf{Xs2_p?}, \_r(p,4), q)$ & \texttt{global\_send}$(\mathsf{V_q?}, \_w(p,3), p)$ \\
\hline
\multicolumn{2}{|c|}{\textbf{Assignments}} \\
\hline
$\mathsf{Xs_p} := [\mathtt{add}|\mathsf{Xs1_p?}]$ & $\mathsf{Xs_q} := [\mathtt{add}|\mathsf{Xs1_q?}]$ \\
$\mathsf{Xs_p?} := [\mathtt{add}|\mathsf{Xs1_p?}]$ & $\mathsf{Xs_q?} := [\mathtt{add}|\mathsf{Xs1_q?}]$ \\
$\mathsf{Xs1_p} := [\mathtt{value}(\mathsf{V_p})|\mathsf{Xs2_p?}]$ & $\mathsf{Xs1_q} := [\mathtt{value}(\mathsf{V_q})|\mathsf{Xs2_q?}]$ \\
$\mathsf{Xs1_p?} := [\mathtt{value}(\mathsf{V_p})|\mathsf{Xs2_p?}]$ & $\mathsf{Xs1_q?} := [\mathtt{value}(\mathsf{V_q})|\mathsf{Xs2_q?}]$ \\
\hline
\multicolumn{2}{|c|}{\textbf{Global Writers Tables}} \\
\hline
\begin{tabular}{cc}
$i$ & entry \\
\hline
3 & $(\mathsf{V_p}, q)$
\end{tabular}
&
\begin{tabular}{cc}
$k$ & entry \\
\hline
3 & $(\mathsf{Xs2_q}, p, 4)$
\end{tabular}
\\
\hline
\multicolumn{2}{|c|}{\textbf{Pending Messages}: (none)} \\
\hline
\end{tabular}
\end{center}

\paragraph{Localization performed:}
\begin{center}
\begin{tabular}{|l|l|}
\hline
Term before & $[\mathtt{value}(\_w(p,3))|\_r(p,4)]$ \\
\hline
Term after & $[\mathtt{value}(\mathsf{V_q})|\mathsf{Xs2_q?}]$ \\
\hline
\end{tabular}
\end{center}

$\_w(p,3)$ is a writer global name, so localization creates pair $(\mathsf{V_q}, \mathsf{V_q?})$, puts $\mathsf{V_q}$ (the writer) in the term, and spawns \texttt{global\_send}$(\mathsf{V_q?}, \_w(p,3), p)$ with the record $(\mathsf{V_q}, p, 3, p)$ (Definition~\ref{def:localize}, case~1). No entry is created---this is an outgoing link from $q$ to $p$. $\_r(p,4)$ is a reader global name, so localization creates pair $(\mathsf{Xs2_q}, \mathsf{Xs2_q?})$, puts $\mathsf{Xs2_q?}$ (the reader) in the term, and creates entry $(\mathsf{Xs2_q}, p, 4)$ at index~3 (Definition~\ref{def:localize}, case~2)---this is an incoming link. Entry~2 at $q$ was removed after assigning $\mathsf{Xs1_q}$, and $q$ acknowledged $\_r(p,2)$, releasing $p$'s pending value.

\mypara{Global Link Summary}

After Stage 2b, the active global links are:

\begin{center}
\begin{tabular}{|c|c|c|c|c|}
\hline
\textbf{Link} & \textbf{Entry at} & \textbf{Entry} & \textbf{Direction} & \textbf{global\_send at} \\
\hline
V & $p$ (index 3) & $(\mathsf{V_p}, q)$ & $q \to p$ & $q$ \\
\hline
Xs2 & $q$ (index 3) & $(\mathsf{Xs2_q}, p, 4)$ & $p \to q$ & $p$ \\
\hline
\end{tabular}
\end{center}

The ``Direction'' column indicates dataflow: V flows from $q$ to $p$ (the monitor writes the value), while Xs2 flows from $p$ to $q$ (the client writes events). Entries exist only at the receiving end; \texttt{global\_send} goals exist only at the sending end. Each reader-name value is acknowledged by its receiver, and its sender holds it pending until then (\S\ref{app:requests-acks}).

\subsection{Example 2: Friend-Mediated Introduction}
\label{app:friend-intro}

This example demonstrates the critical case where Bob (agent $b$) introduces Alice (agent $a$) to Charlie (agent $c$) by sending them opposite ends of a variable pair. Bob sends reader $X?$ to Alice and writer $X$ to Charlie. Bob's local pair $(X_b, X_b?)$ acts as a relay: when Charlie assigns his writer, the value flows through Bob to Alice.

\mypara{Stage 0: Initial State at Bob}

Bob has a local variable pair $(X_b, X_b?)$ that will be shared with Alice and Charlie.

\begin{center}
\begin{tabular}{|c|c|c|}
\hline
\multicolumn{3}{|c|}{\textbf{madGLP State}} \\
\hline
\textbf{Agent $a$ (Alice)} & \textbf{Agent $b$ (Bob)} & \textbf{Agent $c$ (Charlie)} \\
\hline
\multicolumn{3}{|c|}{\textbf{Resolvents}} \\
\hline
(empty) & $X_b, X_b?$ & (empty) \\
\hline
\multicolumn{3}{|c|}{\textbf{Global Writers Tables}} \\
\hline
(empty) & (empty) & (empty) \\
\hline
\end{tabular}
\end{center}

\mypara{Stage 1: Bob Sends $X?$ to Alice via Cold-call}

Bob globalizes $X_b?$ (a reader) for Alice.

\paragraph{Globalization at Bob for Alice:}
\begin{center}
\begin{tabular}{|l|l|}
\hline
Variable & $X_b?$ (reader) \\
\hline
Global name & $\_r(b, 1)$ \\
\hline
Entry at Bob & (none---outgoing link) \\
\hline
Spawned goal & \texttt{global\_send}$(X_b?, \_r(b,1), a)$ \\
\hline
\end{tabular}
\end{center}

\paragraph{Localization at Alice:}
\begin{center}
\begin{tabular}{|l|l|}
\hline
Global name & $\_r(b, 1)$ \\
\hline
New pair & $(X_a, X_a?)$ \\
\hline
In term & $X_a?$ (reader) \\
\hline
Entry at Alice & $(X_a, b, 1)$ \\
\hline
Spawned goal & (none---incoming link) \\
\hline
\end{tabular}
\end{center}

\begin{center}
\begin{tabular}{|c|c|c|}
\hline
\multicolumn{3}{|c|}{\textbf{madGLP State after Stage 1}} \\
\hline
\textbf{Agent $a$ (Alice)} & \textbf{Agent $b$ (Bob)} & \textbf{Agent $c$ (Charlie)} \\
\hline
\multicolumn{3}{|c|}{\textbf{Resolvents (including global\_send goals)}} \\
\hline
$X_a, X_a?$ & $X_b, X_b?$, & (empty) \\
& \texttt{global\_send}$(X_b?, \_r(b,1), a)$ & \\
\hline
\multicolumn{3}{|c|}{\textbf{Global Writers Tables}} \\
\hline
\begin{tabular}{cc}
$k$ & entry \\
\hline
1 & $(X_a, b, 1)$
\end{tabular}
&
(empty)
& (empty) \\
\hline
\end{tabular}
\end{center}

Bob globalized reader $X_b?$ as $\_r(b,1)$, spawning a \texttt{global\_send} goal at Bob (Definition~\ref{def:globalize}, case~2). Alice localized $\_r(b,1)$, creating entry $(X_a, b, 1)$ and placing reader $X_a?$ in her term (Definition~\ref{def:localize}, case~2). Alice will receive values on this link.

\mypara{Stage 2: Bob Sends $X$ to Charlie via Cold-call}

Bob globalizes $X_b$ (a writer) for Charlie.

\paragraph{Globalization at Bob for Charlie:}
\begin{center}
\begin{tabular}{|l|l|}
\hline
Variable & $X_b$ (writer) \\
\hline
Global name & $\_w(b, 2)$ \\
\hline
Entry at Bob & $(X_b, c)$ at index 2 \\
\hline
Spawned goal & (none---incoming link) \\
\hline
\end{tabular}
\end{center}

\paragraph{Localization at Charlie:}
\begin{center}
\begin{tabular}{|l|l|}
\hline
Global name & $\_w(b, 2)$ \\
\hline
New pair & $(X_c, X_c?)$ \\
\hline
In term & $X_c$ (writer) \\
\hline
Entry at Charlie & (none---outgoing link) \\
\hline
Record at Charlie & $(X_c, b, 2, b)$ \\
\hline
Spawned goal & \texttt{global\_send}$(X_c?, \_w(b,2), b)$ \\
\hline
\end{tabular}
\end{center}

\begin{center}
\begin{tabular}{|c|c|c|}
\hline
\multicolumn{3}{|c|}{\textbf{madGLP State after Stage 2}} \\
\hline
\textbf{Agent $a$ (Alice)} & \textbf{Agent $b$ (Bob)} & \textbf{Agent $c$ (Charlie)} \\
\hline
\multicolumn{3}{|c|}{\textbf{Resolvents (including global\_send goals)}} \\
\hline
$X_a, X_a?$ & $X_b, X_b?$, & $X_c, X_c?$, \\
& \texttt{global\_send}$(X_b?, \_r(b,1), a)$ & \texttt{global\_send}$(X_c?, \_w(b,2), b)$ \\
\hline
\multicolumn{3}{|c|}{\textbf{Global Writers Tables}} \\
\hline
\begin{tabular}{cc}
$k$ & entry \\
\hline
1 & $(X_a, b, 1)$
\end{tabular}
&
\begin{tabular}{cc}
$i$ & entry \\
\hline
2 & $(X_b, c)$
\end{tabular}
&
(empty)
\\
\hline
\end{tabular}
\end{center}

Bob globalized writer $X_b$ as $\_w(b,2)$, creating entry $(X_b, c)$ at index~2 (Definition~\ref{def:globalize}, case~1). Bob will receive the assignment on this link. Charlie localized $\_w(b,2)$, placing writer $X_c$ in his term, recording $(X_c, b, 2, b)$, and spawning \texttt{global\_send}$(X_c?, \_w(b,2), b)$ (Definition~\ref{def:localize}, case~1). Charlie can assign $X_c$.

\paragraph{Global Link Summary after Stage 2:}
\begin{center}
\begin{tabular}{|c|c|c|c|}
\hline
\textbf{Link} & \textbf{Entry at} & \textbf{Entry} & \textbf{global\_send at} \\
\hline
$\_r(b,1)$ & Alice & $(X_a, b, 1)$ & Bob \\
\hline
$\_w(b,2)$ & Bob (index 2) & $(X_b, c)$ & Charlie \\
\hline
\end{tabular}
\end{center}

\mypara{Stage 3: Charlie Assigns $X_c := T$}

Charlie assigns $X_c := T$ for some term $T$. This makes $X_c?$ known.

\begin{center}
\begin{tabular}{|c|c|c|}
\hline
\multicolumn{3}{|c|}{\textbf{madGLP State after Charlie's assignment}} \\
\hline
\textbf{Agent $a$ (Alice)} & \textbf{Agent $b$ (Bob)} & \textbf{Agent $c$ (Charlie)} \\
\hline
\multicolumn{3}{|c|}{\textbf{Resolvents}} \\
\hline
$X_a, X_a?$ & $X_b, X_b?$, & (goal completed) \\
& \texttt{global\_send}$(X_b?, \_r(b,1), a)$ & \\
\hline
\multicolumn{3}{|c|}{\textbf{Assignments}} \\
\hline
(none) & (none) & $X_c := T$, $X_c? := T$ \\
\hline
\multicolumn{3}{|c|}{\textbf{Pending Messages}} \\
\hline
\multicolumn{3}{|c|}{$(\_w(b,2) := T^\uparrow,\ b)$ from Charlie} \\
\hline
\end{tabular}
\end{center}

The \texttt{global\_send}$(X_c?, \_w(b,2), b)$ goal fired because $X_c?$ became known. It sent message $(\_w(b,2) := T^\uparrow, b)$.

\mypara{Stage 4: Bob Receives from Charlie}

Bob receives message $(\_w(b,2) := T^\uparrow)$. He finds entry $(X_b, c)$ at index 2 and assigns $X_b := T^\downarrow$.

This makes $X_b?$ known, which triggers the \texttt{global\_send}$(X_b?, \_r(b,1), a)$ goal.

\begin{center}
\begin{tabular}{|c|c|c|}
\hline
\multicolumn{3}{|c|}{\textbf{madGLP State after Bob receives}} \\
\hline
\textbf{Agent $a$ (Alice)} & \textbf{Agent $b$ (Bob)} & \textbf{Agent $c$ (Charlie)} \\
\hline
\multicolumn{3}{|c|}{\textbf{Resolvents}} \\
\hline
$X_a, X_a?$ & (goal completed) & (empty) \\
\hline
\multicolumn{3}{|c|}{\textbf{Assignments}} \\
\hline
(none) & $X_b := T$, $X_b? := T$ & $X_c := T$, $X_c? := T$ \\
\hline
\multicolumn{3}{|c|}{\textbf{Global Writers Tables}} \\
\hline
\begin{tabular}{cc}
$k$ & entry \\
\hline
1 & $(X_a, b, 1)$
\end{tabular}
&
(empty)
& (empty) \\
\hline
\multicolumn{3}{|c|}{\textbf{Pending Messages}} \\
\hline
\multicolumn{3}{|c|}{$(\_r(b,1) := T^\uparrow,\ a)$ from Bob} \\
\hline
\end{tabular}
\end{center}

Entry at index 2 at Bob was removed after the assignment; a writer-name value is not acknowledged---its anchor entry is fixed. The \texttt{global\_send} goal fired and sent message $(\_r(b,1) := T^\uparrow, a)$.

\mypara{Stage 5: Alice Receives from Bob}

Alice receives message $(\_r(b,1) := T^\uparrow)$. She searches for entry with remote agent $b$ and remote index $1$, finds $(X_a, b, 1)$, and assigns $X_a := T^\downarrow$.

\begin{center}
\begin{tabular}{|c|c|c|}
\hline
\multicolumn{3}{|c|}{\textbf{Final madGLP State}} \\
\hline
\textbf{Agent $a$ (Alice)} & \textbf{Agent $b$ (Bob)} & \textbf{Agent $c$ (Charlie)} \\
\hline
\multicolumn{3}{|c|}{\textbf{Assignments}} \\
\hline
$X_a := T$, $X_a? := T$ & $X_b := T$, $X_b? := T$ & $X_c := T$, $X_c? := T$ \\
\hline
\multicolumn{3}{|c|}{\textbf{Global Writers Tables}} \\
\hline
(empty) & (empty) & (empty) \\
\hline
\end{tabular}
\end{center}

All entries have been removed. The value $T$ has flowed from Charlie through Bob to Alice; Alice acknowledges $\_r(b,1)$, releasing Bob's pending value and closing the link.

\paragraph{Summary:} Bob acted as a relay. Charlie assigned his writer $X_c$, the value flowed to Bob via the $\_w(b,2)$ link, Bob's local pair $(X_b, X_b?)$ received it, and the \texttt{global\_send} goal watching $X_b?$ forwarded it to Alice via the $\_r(b,1)$ link. Alice's reader $X_a?$ now has the value $T$.

\paragraph{The stream continues directly:} Suppose $T = [\mathtt{hello}|\mathsf{Rest?}]$ with $\mathsf{Rest}$ a fresh pair at Charlie. Charlie's value to Bob carries $\_r(c,1)$; Bob localizes it with entry $(\mathsf{Rest_b}, c, 1)$ and no request---the sender is the anchor. Forwarding the value to Alice, Bob globalizes the imported reader $\mathsf{Rest_b?}$: the original name $\_r(c,1)$ is emitted and Bob's entry removed (Definition~\ref{def:globalize}). Alice localizes $\_r(c,1)$ from sender Bob $\ne c$, so she sends $\texttt{req}(\_r(c,1))$ to Charlie, who records her as the holder. The next element flows Charlie$\to$Alice directly: the introducer relays one value and drops out.

\subsection{Example 3: Both Ends to Same Agent}
\label{app:both-ends-same}

This example demonstrates the edge case where Bob sends both writer $X$ and reader $X?$ to Alice in the same term $[X, X?]$. Bob's local pair acts as a relay: when Alice assigns her writer $Y_a$ (from the writer global name $\_w(b,1)$), the value flows through Bob back to Alice's reader $Z_a?$ (from the reader global name $\_r(b,2)$).

\mypara{Stage 0: Initial State}

Bob has a local variable pair $(X_b, X_b?)$ and will send term $[X_b, X_b?]$ to Alice.

\begin{center}
\begin{tabular}{|c|c|}
\hline
\multicolumn{2}{|c|}{\textbf{madGLP State}} \\
\hline
\textbf{Agent $a$ (Alice)} & \textbf{Agent $b$ (Bob)} \\
\hline
\multicolumn{2}{|c|}{\textbf{Resolvents}} \\
\hline
(empty) & $X_b, X_b?$ \\
\hline
\multicolumn{2}{|c|}{\textbf{Global Writers Tables}} \\
\hline
(empty) & (empty) \\
\hline
\end{tabular}
\end{center}

\mypara{Stage 1: Bob Sends $[X_b, X_b?]$ to Alice via Cold-call}

Bob globalizes term $[X_b, X_b?]$ for Alice. Both the writer and reader are processed:

\paragraph{Globalization at Bob:}
\begin{center}
\begin{tabular}{|l|l|l|}
\hline
\textbf{Variable} & \textbf{$X_b$ (writer)} & \textbf{$X_b?$ (reader)} \\
\hline
Global name & $\_w(b, 1)$ & $\_r(b, 2)$ \\
\hline
Entry at Bob & $(X_b, a)$ at index 1 & (none---outgoing) \\
\hline
Spawned goal & (none---incoming) & \texttt{global\_send}$(X_b?, \_r(b,2), a)$ \\
\hline
\end{tabular}
\end{center}

\paragraph{Globalized term:} $[\_w(b,1), \_r(b,2)]$

\paragraph{Localization at Alice:}
\begin{center}
\begin{tabular}{|l|l|l|}
\hline
\textbf{Global name} & \textbf{$\_w(b, 1)$} & \textbf{$\_r(b, 2)$} \\
\hline
New pair & $(Y_a, Y_a?)$ & $(Z_a, Z_a?)$ \\
\hline
In term & $Y_a$ (writer) & $Z_a?$ (reader) \\
\hline
Entry at Alice & (none---outgoing) & $(Z_a, b, 2)$ \\
\hline
Record at Alice & $(Y_a, b, 1, b)$ & (none) \\
\hline
Spawned goal & \texttt{global\_send}$(Y_a?, \_w(b,1), b)$ & (none---incoming) \\
\hline
\end{tabular}
\end{center}

\paragraph{Localized term:} $[Y_a, Z_a?]$

\begin{center}
\begin{tabular}{|c|c|}
\hline
\multicolumn{2}{|c|}{\textbf{madGLP State after Stage 1}} \\
\hline
\textbf{Agent $a$ (Alice)} & \textbf{Agent $b$ (Bob)} \\
\hline
\multicolumn{2}{|c|}{\textbf{Resolvents (including global\_send goals)}} \\
\hline
$Y_a, Y_a?, Z_a, Z_a?$, & $X_b, X_b?$, \\
\texttt{global\_send}$(Y_a?, \_w(b,1), b)$ & \texttt{global\_send}$(X_b?, \_r(b,2), a)$ \\
\hline
\multicolumn{2}{|c|}{\textbf{Global Writers Tables}} \\
\hline
\begin{tabular}{cc}
$k$ & entry \\
\hline
1 & $(Z_a, b, 2)$
\end{tabular}
&
\begin{tabular}{cc}
$i$ & entry \\
\hline
1 & $(X_b, a)$
\end{tabular}
\\
\hline
\end{tabular}
\end{center}

Alice's term is $[Y_a, Z_a?]$: she has writer $Y_a$ (can assign values) and reader $Z_a?$ (will receive values). These are two \emph{independent} pairs at Alice---she doesn't know they're connected through Bob.

Writer $X_b$ was globalized as $\_w(b,1)$ with entry $(X_b, a)$ at Bob (Definition~\ref{def:globalize}, case~1); Alice localized it as writer $Y_a$ with the record $(Y_a, b, 1, b)$ and a \texttt{global\_send} goal (Definition~\ref{def:localize}, case~1). Reader $X_b?$ was globalized as $\_r(b,2)$ with \texttt{global\_send} at Bob (Definition~\ref{def:globalize}, case~2); Alice localized it as reader $Z_a?$ with entry $(Z_a, b, 2)$ (Definition~\ref{def:localize}, case~2).

\mypara{Stage 2: Alice Assigns $Y_a := T$}

Alice assigns $Y_a := T$. This makes $Y_a?$ known, triggering her \texttt{global\_send} goal.

\begin{center}
\begin{tabular}{|c|c|}
\hline
\multicolumn{2}{|c|}{\textbf{madGLP State after Alice's assignment}} \\
\hline
\textbf{Agent $a$ (Alice)} & \textbf{Agent $b$ (Bob)} \\
\hline
\multicolumn{2}{|c|}{\textbf{Resolvents}} \\
\hline
$Z_a, Z_a?$ & $X_b, X_b?$, \\
(goal completed) & \texttt{global\_send}$(X_b?, \_r(b,2), a)$ \\
\hline
\multicolumn{2}{|c|}{\textbf{Assignments}} \\
\hline
$Y_a := T$, $Y_a? := T$ & (none) \\
\hline
\multicolumn{2}{|c|}{\textbf{Global Writers Tables}} \\
\hline
\begin{tabular}{cc}
$k$ & entry \\
\hline
1 & $(Z_a, b, 2)$
\end{tabular}
&
\begin{tabular}{cc}
$i$ & entry \\
\hline
1 & $(X_b, a)$
\end{tabular}
\\
\hline
\multicolumn{2}{|c|}{\textbf{Pending Messages}} \\
\hline
\multicolumn{2}{|c|}{$(\_w(b,1) := T^\uparrow,\ b)$ from Alice} \\
\hline
\end{tabular}
\end{center}

The \texttt{global\_send} goal sent message $(\_w(b,1) := T^\uparrow, b)$.

\mypara{Stage 3: Bob Receives from Alice}

Bob receives message $(\_w(b,1) := T^\uparrow)$. He finds entry $(X_b, a)$ at index 1 and assigns $X_b := T^\downarrow$.

This makes $X_b?$ known, triggering his \texttt{global\_send}$(X_b?, \_r(b,2), a)$ goal.

\begin{center}
\begin{tabular}{|c|c|}
\hline
\multicolumn{2}{|c|}{\textbf{madGLP State after Bob receives}} \\
\hline
\textbf{Agent $a$ (Alice)} & \textbf{Agent $b$ (Bob)} \\
\hline
\multicolumn{2}{|c|}{\textbf{Resolvents}} \\
\hline
$Z_a, Z_a?$ & (goal completed) \\
\hline
\multicolumn{2}{|c|}{\textbf{Assignments}} \\
\hline
$Y_a := T$, $Y_a? := T$ & $X_b := T$, $X_b? := T$ \\
\hline
\multicolumn{2}{|c|}{\textbf{Global Writers Tables}} \\
\hline
\begin{tabular}{cc}
$k$ & entry \\
\hline
1 & $(Z_a, b, 2)$
\end{tabular}
&
(empty)
\\
\hline
\multicolumn{2}{|c|}{\textbf{Pending Messages}} \\
\hline
\multicolumn{2}{|c|}{$(\_r(b,2) := T^\uparrow,\ a)$ from Bob} \\
\hline
\end{tabular}
\end{center}

Entry at index 1 at Bob was removed after the assignment. The \texttt{global\_send} goal fired and sent message $(\_r(b,2) := T^\uparrow, a)$.

\mypara{Stage 4: Alice Receives from Bob}

Alice receives message $(\_r(b,2) := T^\uparrow)$. She searches for entry with remote agent $b$ and remote index $2$, finds $(Z_a, b, 2)$, and assigns $Z_a := T^\downarrow$.

\begin{center}
\begin{tabular}{|c|c|}
\hline
\multicolumn{2}{|c|}{\textbf{Final madGLP State}} \\
\hline
\textbf{Agent $a$ (Alice)} & \textbf{Agent $b$ (Bob)} \\
\hline
\multicolumn{2}{|c|}{\textbf{Assignments}} \\
\hline
$Y_a := T$, $Y_a? := T$ & $X_b := T$, $X_b? := T$ \\
$Z_a := T$, $Z_a? := T$ & \\
\hline
\multicolumn{2}{|c|}{\textbf{Global Writers Tables}} \\
\hline
(empty) & (empty) \\
\hline
\end{tabular}
\end{center}

All entries have been removed. Bob's value to Alice, carried by the reader name $\_r(b,2)$, is acknowledged by Alice; Alice's value to Bob, carried by the writer name $\_w(b,1)$, needs no acknowledgement.

\paragraph{Summary:} Alice received term $[Y_a, Z_a?]$ with two independent pairs. When she assigned $Y_a := T$, the value flowed to Bob via the $\_w(b,1)$ link. Bob's local pair received it, and his \texttt{global\_send} goal forwarded it back to Alice via the $\_r(b,2)$ link. Now both $Y_a?$ and $Z_a?$ equal $T$---the two pairs at Alice are semantically linked through Bob's relay, even though Alice sees them as independent.

This correctly implements the maGLP semantics: in maGLP, Alice would have received $[X, X?]$ where both variables refer to the same shared pair. Assigning the writer makes the value available to the reader. In madGLP, this is achieved through Bob's relay.

\section{Code Format}
\label{app:code-format}

This appendix is the byte-level specification of the GLP code format: the canonical encoding of globalized terms and assignment messages, the encoding of the instruction set, the compiled-program artefact, deterministic flattening, the loader, and the offer and handshake messages. The engine implements the sequential abstract machine of Flat Concurrent Prolog~\cite{houri1989sequential}, whose reference implementation is available online.\footnote{\url{https://github.com/EShapiro2/FCP}} The format and engine follow that design, departing only where GLP differs from FCP---the instruction set below and the fixed byte order of \S\ref{app:cf-primitives}; departures are noted where they occur. The source and compiled identities, the module certificate, the conversation handshake, and the signature kernels are specified in the Secure GLP paper~\cite{keidar2026secure}; programs, exports, and the load-time interface check in the type and module system paper~\cite{shapiro2026types}; the instruction semantics in the companion GLP paper~\cite{shapiro2025glp}, this appendix assigning their bytes.

\subsection{Primitive Encodings}
\label{app:cf-primitives}

Fixed-width multi-byte integers (i64, f64) are little-endian, the one canonical byte order; clen is self-delimiting and read identically on every machine. A term yields the same byte string on every machine: the bytes that are hashed and signed.
\begin{itemize}
\item \temph{u8} --- one byte.
\item \temph{i64} --- 8 bytes, two's complement.
\item \temph{f64} --- 8 bytes, IEEE~754 binary64.
\item \temph{clen} (compact length) --- an unsigned integer below $2^{30}$ in 1, 2, or 4 bytes: values below 128 in one byte; below 16384 in two bytes, the first with high bits \texttt{10}; otherwise four bytes, the first with high bits \texttt{11}. The remaining bits, big-endian, carry the value. Encoders emit the shortest form; decoders reject longer-than-necessary forms (one value, one encoding).
\item \temph{string} --- clen byte count, then UTF-8 bytes.
\item \temph{bytes} --- clen byte count, then raw bytes.
\item \temph{hash} --- 32 raw bytes: SHA-256 of the hashed content. The \temph{source identity} $h(M)$ is the hash of the flattened source bytes (\S\ref{app:cf-flattening}); the \temph{compiled identity} is the hash of the artefact body (\S\ref{app:cf-artefact}).
\item \temph{agent} --- bytes; its content is the public key of the agent's person. The current realisation uses Ed25519 public keys (32 bytes); the simulation realisation uses UTF-8 symbolic names. The encoding does not interpret the content.
\end{itemize}

\subsection{Term and Message Encoding}
\label{app:cf-terms}

This section is the byte-level layout of the canonical encoding $e$: an injective function from globalized terms and assignment messages to byte strings.

\subsubsection{Terms.}
A globalized term is encoded by a tagged recursion. Each node opens with a u8 tag:
\begin{itemize}
\item \temph{1 constant} --- followed by a u8 constant tag and its payload: \temph{0 nil} (no payload, the empty list); \temph{1 integer} (i64); \temph{2 float} (f64); \temph{3 string} (string); \temph{4 boolean} (u8: 0 false, 1 true); \temph{5 bytes} (bytes; raw byte strings, used only in the runtime's own vocabulary---the 32-byte hashes of \S\ref{app:cf-handshake}---never in application terms); \temph{6 module} (bytes; the artefact bytes of \S\ref{app:cf-artefact}, decoding to the Module constant---the form in which compiled programs ship).
\item \temph{2 variable} --- a global name per Definition~\ref{def:globalize}: u8 polarity (0 writer $\_w(p,i)$, 1 reader $\_r(p,i)$), agent $p$, clen index $i$. Names are the allocating agent's---a forwarded reader keeps its anchor's name; no other variable representation exists in the code format.
\item \temph{3 structure} --- string functor, clen arity $n$, then the $n$ argument encodings in order.
\end{itemize}
Lists are structures: a cell is the structure \texttt{'.'/2}; the empty list is the constant nil. The four properties of the canonical encoding hold by this layout: hardware-independent (\S\ref{app:cf-primitives} pins widths and byte order); address-free (a variable appears only as its global name with polarity); globalizer's names (tag~2 admits nothing else); canonical on ground terms (a ground term contains tags~1 and~3 only, so its bytes are agent-independent --- these are the bytes over which terms are signed and hashed). Injectivity holds by unique decoding: every tag determines its payload form, every length is explicit, and clen is one-value-one-encoding.

\subsubsection{Messages.}
A message opens with a u8 kind: \temph{0 value}, \temph{1 request}, \temph{2 acknowledgement}. A value $G := T^\uparrow$ continues: u8 polarity of $G$, agent of $G$, clen index of $G$, then the encoding of $T^\uparrow$. A request $\texttt{req}(G)$ or an acknowledgement $\texttt{ack}(G)$ continues: u8 polarity (1, a reader name), agent of $G$, clen index of $G$; no term follows. The kind byte is new in code-format version~2 (\S\ref{app:cf-versioning}).

A serializer (cold-call) message to agent $q$ is the assignment $\_w(q,0) := [T^\uparrow \mid \_w(q,0)]$: polarity 0, agent $q$, index 0, then the encoding of the list cell whose head is $T^\uparrow$ and whose tail is the variable $\_w(q,0)$ (tag~2, polarity 0, agent $q$, index 0). The receiver treats index~0 by serializer semantics: append, update the entry, do not remove it.

\subsubsection{Signed and Hashed Content.}
The canonical bytes of a ground term $T$ are $e(T)$. Protocols that sign or hash compound content encode it as a ground term and apply $e$; the signature kernels of Secure GLP~\cite{keidar2026secure} sign $e$ of the pair of the signing instance's source identity and the term, and the signed term carries the signer's key alongside that pair, so that a verifier recovers all three (\S\ref{app:cf-handshake}).

\subsection{Instruction Encoding}
\label{app:cf-instructions}

This section assigns bytes to the instruction set; the instruction semantics are those of the companion GLP paper~\cite{shapiro2025glp}. The scheduling of \texttt{spawn} (0x50) and \texttt{requeue} (0x51) is dGLP's, not the companion paper's: \texttt{spawn} enqueues a body goal, \texttt{requeue} continues the clause's tail goal under the $b$-bounded tail discipline of Definition~\ref{def:bounded-tail}. An encoded instruction is a u8 opcode followed by its operands, in the order listed.

\subsubsection{Operand kinds.}
\begin{itemize}
\item \temph{polarity} --- u8: 0 writer, 1 reader (the \texttt{isReader} flag; same convention as the variable tag).
\item \temph{negated} --- u8: 0 plain, 1 negated guard ($\sim\!G$).
\item \temph{varIndex, argSlot, arity, count, slots, regIndex} --- clen.
\item \temph{constant} --- the constant payload of \S\ref{app:cf-terms} (u8 constant tag, then its payload); one constant representation serves terms and code.
\item \temph{functor} --- string.
\item \temph{proc} --- clen index into the artefact's symbol table (\S\ref{app:cf-artefact}). The entry's kind determines the target: a compiled procedure, or a codeless body kernel or builtin guard bound by name at load. Procedure references are symbol-table indices, never names; clause targets are byte offsets (\temph{ctarget}), not instruction indices.
\item \temph{ctarget} --- a clen byte offset to a clause target within the current procedure's code (clause targets are procedure-relative); the engine branches by program-counter offset, as in FCP~\cite{houri1989sequential}. Assembly labels do not exist in the code format.
\end{itemize}

\subsubsection{Opcode table.}
\begin{center}
\small
\begin{longtable}{lll}
\toprule
Opcode & Mnemonic & Operands \\
\midrule
\endfirsthead
\toprule
Opcode & Mnemonic & Operands \\
\midrule
\endhead
\texttt{0x01} & \texttt{clause\_try} & --- \\
\texttt{0x02} & \texttt{clause\_next} & ctarget \\
\texttt{0x03} & \texttt{no\_more\_clauses} & --- \\
\texttt{0x04} & \texttt{commit} & --- \\
\texttt{0x05} & \texttt{proceed} & --- \\
\texttt{0x06} & \texttt{halt} & --- \\
\texttt{0x07} & \texttt{nop} & --- \\
\texttt{0x10} & \texttt{head\_constant} & constant, argSlot \\
\texttt{0x11} & \texttt{head\_nil} & argSlot \\
\texttt{0x12} & \texttt{head\_structure} & functor, arity, argSlot \\
\texttt{0x13} & \texttt{head\_list} & argSlot \\
\texttt{0x14} & \texttt{head\_variable} & polarity, varIndex \\
\texttt{0x15} & \texttt{get\_variable} & polarity, varIndex, argSlot \\
\texttt{0x16} & \texttt{get\_value} & polarity, varIndex, argSlot \\
\texttt{0x20} & \texttt{unify\_variable} & polarity, varIndex \\
\texttt{0x21} & \texttt{unify\_constant} & constant \\
\texttt{0x22} & \texttt{unify\_void} & count \\
\texttt{0x23} & \texttt{unify\_structure} & functor, arity \\
\texttt{0x24} & \texttt{push} & regIndex \\
\texttt{0x25} & \texttt{pop} & regIndex \\
\texttt{0x30} & \texttt{put\_variable} & polarity, varIndex, argSlot \\
\texttt{0x31} & \texttt{put\_constant} & constant, argSlot \\
\texttt{0x32} & \texttt{put\_nil} & argSlot \\
\texttt{0x33} & \texttt{put\_list} & argSlot \\
\texttt{0x34} & \texttt{put\_structure} & functor, arity, argSlot \\
\texttt{0x35} & \texttt{set\_variable} & polarity, varIndex \\
\texttt{0x36} & \texttt{set\_constant} & constant \\
\texttt{0x37} & \texttt{allocate} & slots \\
\texttt{0x38} & \texttt{deallocate} & --- \\
\texttt{0x39} & \texttt{put\_bound\_const} & constant, argSlot \\
\texttt{0x3A} & \texttt{put\_bound\_nil} & argSlot \\
\texttt{0x40} & \texttt{guard} & proc, arity, negated \\
\texttt{0x41} & \texttt{ground} & varIndex, negated \\
\texttt{0x42} & \texttt{known} & varIndex, negated \\
\texttt{0x43} & \texttt{unknown} & varIndex \\
\texttt{0x44} & \texttt{no\_readers} & varIndex, negated \\
\texttt{0x45} & \texttt{ground\_equal} & varIndex, varIndex, negated \\
\texttt{0x46} & \texttt{otherwise} & --- \\
\texttt{0x50} & \texttt{spawn} & proc, arity \\
\texttt{0x51} & \texttt{requeue} & proc, arity \\
\bottomrule
\end{longtable}
\end{center}

\subsubsection{Reserved and excluded.}
Opcodes \texttt{0x47}--\texttt{0x4F} are reserved for the comparison guards (\texttt{guard\_less} and its five siblings), assigned when they are implemented; their arrival is an instruction-set version change carried in the artefact header (\S\ref{app:cf-artefact}). Opcodes \texttt{0x52}--\texttt{0x53} are reserved for runtime inter-unit dispatch: under static linking (\S\ref{app:cf-flattening}) every call is local and no dispatch instruction is emitted, so the code format carries none; dispatch of dynamically loaded modules returns with the trust machinery of Secure GLP~\cite{keidar2026secure}, as an instruction-set version change. The following implementation classes are not part of the code format: Label (assembly-time, erased by the operand kinds); TryNextClause (unused); GuardFail, SuspendEnd, TailStep, BodySetConst, BodySetStructConstArgs, HeadBindWriter, GuardNeedReader, and the \texttt{*Arg} test variants (legacy).

\subsection{Program Artefact}
\label{app:cf-artefact}

The artefact is the byte string a compiler produces from a flat program; it travels as a module constant (\S\ref{app:cf-terms}, constant tag~6). It is a \temph{body} followed by a \temph{certificate}. Its \temph{compiled identity} is the SHA-256 of the body; its \temph{source identity} $h(M)$ is the hash of the flattened source (\S\ref{app:cf-flattening}). The two identities, the certificate, and what a receiver concludes from it are specified in Secure GLP~\cite{keidar2026secure}; a compiled program carries what the load-time interface check needs~\cite{shapiro2026types}. The body's sections, in order:
\begin{enumerate}
\item \temph{Header} --- magic \texttt{GLPW} (4 bytes); u8 code-format version (this document: 2); string instruction-set version; string program name. Neither identity is in the header: both are in the certificate, where they are signed.
\item \temph{Interface table} --- the program interface, carried as declaration source text, from which the loader derives the type automata: a string giving the reachable type definitions, in canonical print (\S\ref{app:cf-flattening}); then a clen export count, and per export a string name, clen arity, and string declaration text. Carrying text rather than compiled automata keeps one source of truth: type identity is source-based throughout, and equal sources define equal automata.
\item \temph{Symbol table} --- clen count; per entry a string name, clen arity, and a u8 kind. Kind \temph{0 compiled} --- a procedure defined in this program --- is followed by clen byte offset into the code section and clen byte length; kind \temph{1 codeless} --- a body kernel or builtin guard with no code in this artefact, bound by name to the local runtime at load (\S\ref{app:cf-loader}) --- carries nothing further. The \temph{proc} operands of \S\ref{app:cf-instructions} index this table. Exported procedures are the compiled entries named in the interface table; the loader aliases exactly them.
\item \temph{Code section} --- clen byte count, then the concatenated procedure bodies in the instruction encoding of \S\ref{app:cf-instructions}. There is no constant pool: constants are inline in instructions.
\end{enumerate}
The \temph{certificate} follows the body: \temph{agent}, the public key of the person who compiled the module; then \temph{bytes}, that person's signature over $e(\texttt{ids}(\mathit{HSrc}, \mathit{HBin}))$, the canonical encoding of the 2-ary structure \texttt{ids} whose arguments are the source identity and the compiled identity, both hash constants. The functor \texttt{ids} is fixed by this specification. It is written at compilation, whether or not the module is ever sent, so a module states which source it was compiled from and who is answerable for it~\cite{keidar2026secure}. The compiled identity is the hash of the body alone, so the certificate sits inside the artefact without hashing itself, and \texttt{decompose\_module/4} reads the key and both identities from it.

The body is code only. Source distribution is an ordinary value exchange: the flattened source travels as a blob in a term, and the receiver verifies it by hashing against $h(M)$.

\subsection{Deterministic Flattening}
\label{app:cf-flattening}

The \temph{flattened source} of a program --- the preimage of the source identity $h(M)$ --- is the canonical print of the linked, pruned program: the program is discovered, type-checked, renamed, resolved, and pruned to the procedures reachable from the root's exported entry points~\cite{shapiro2026types}; the result is printed canonically and hashed~\cite{keidar2026secure}.

Canonical print: UTF-8, LF line ends, no comments, tokens separated by single spaces, one declaration or clause per line. Order: (\ia)~the reachable type definitions, lexicographically by type name; (\ib)~the exported procedure declarations, lexicographically by name then arity; (\ic)~the procedures, lexicographically by renamed name then arity --- within a procedure, clauses keep their source order, which is semantic (first-applicable-clause selection). Procedure order is not semantic and is fixed by sorting.

The same byte string is what a digital social contract's participants read and agree to, what the handshake compares by hash, and what versioning names.

\subsection{Loader}
\label{app:cf-loader}

Given a received artefact and the adoption context (the offered $h(M)$, and the sender's attestation)~\cite{keidar2026secure}, the loader:
\begin{enumerate}
\item Computes SHA-256 of the body and verifies it equals the compiled identity in the certificate; verifies the certificate's signature under the key the certificate carries; verifies the certificate's source identity equals the offered $h(M)$; refuses unsupported code-format or instruction-set versions (\S\ref{app:cf-versioning}).
\item Derives the type automata from the interface table's declaration text and runs the load-time interface checks when linking the program against local callers: the callee's exported type must accept the caller's imported type~\cite{shapiro2026types}.
\item Resolves \temph{proc} indices through the symbol table per \S\ref{app:cf-instructions}: a compiled entry resolves to the procedure at its offset; a codeless entry binds by name to the local runtime's kernel or builtin guard of that name, the instruction (\texttt{spawn}/\texttt{requeue} versus \texttt{guard}) selecting which, and an unknown name is a failsafe refusal (\S\ref{app:cf-versioning}). Generates unqualified aliases for the exported procedures only.
\item Registers the program under its two identities; artefacts are cached and deduplicated by compiled identity.
\end{enumerate}
Admission of conversation traffic is not the loader's: the runtime admits application traffic on a conversation only after its handshake succeeds, and aborts failsafe otherwise~\cite{keidar2026secure}.

After verification, if the running machine's byte order differs from the canonical order the loader inverts the artefact's fixed-width multi-byte fields to local order for execution; before the program is shipped onward it is inverted back, so the hashed artefact is always the canonical bytes (\S\ref{app:cf-primitives}).

\subsection{Offer and Handshake Messages}
\label{app:cf-handshake}

These are ground terms between runtimes on the attested channel, encoded per \S\ref{app:cf-terms}; hash values inside terms are bytes constants of 32 bytes. They are specified in Secure GLP~\cite{keidar2026secure}.
\begin{itemize}
\item \texttt{offer(HSrc, tau(HSrcDef, TypeName))} --- the adoption offer: the contract's source identity, and the root channel's type identity --- the pair of the source identity of the type's defining source and the type's name (a string).
\item \texttt{accept} / \texttt{decline} --- the consent outcome of the offeree's person.
\item \texttt{ship(Artefact)} --- the compiled program as a module constant. The two identities and the key answerable for them are in the artefact's own certificate (\S\ref{app:cf-artefact}), so the message no longer carries them.
\item \texttt{handshake(HSrc, tau(HSrcDef, TypeName))} --- exchanged by the two runtimes on a conversation before any application traffic; both ends must present equal values, or the runtime aborts the conversation failsafe. Derived variables undergo no handshake.
\item \temph{Signed content} --- \texttt{sign(T, K, S)} assigns $S$ the signed term of the ground term $T$ under the key $K$~\cite{keidar2026secure}. $S$ is \temph{agent}, the signer's public key; then \temph{bytes}, the signature; then $e(\texttt{sig}(\mathit{HSrc}, T))$, the canonical encoding of the 2-ary structure \texttt{sig} whose arguments are the source identity of the signing instance's module and the ground term. The signature is over those encoded bytes. The functor \texttt{sig} is fixed by this specification. \texttt{signed(S, K, H, T)} decodes $S$ and assigns the signer's key, the source identity and the term; \texttt{signed(S, K)} assigns the signer alone. The runtime supplies $\mathit{HSrc}$ and the caller cannot choose it, so neither verifying predicate runs the signing module.
\end{itemize}

\subsection{Format Versioning}
\label{app:cf-versioning}

The header carries two versions: the code-format version (u8) --- this document, version 2 --- and the instruction-set version (string). A loader refuses an artefact whose code-format version or instruction-set version it does not support; refusal at adoption time means the offer fails before any conversation exists. New instructions enter by instruction-set version (reserved ranges, \S\ref{app:cf-instructions}); changes to the encodings of \S\ref{app:cf-primitives}--\S\ref{app:cf-terms} or the artefact layout of \S\ref{app:cf-artefact} are code-format version changes. Opcode and name assignments are append-only: an assigned opcode never changes meaning, and a kernel or guard name, once in use, is never reassigned to a different operation; additions consume only reserved or unused opcodes and new names, so a newer runtime runs older artefacts unchanged. Forward incompatibility is always a clean refusal, never silent misexecution: an older runtime rejects a newer instruction-set version at adoption, and a codeless name the runtime does not provide fails to bind at load (\S\ref{app:cf-loader}). Module versioning --- a version is $h(M)$, a new version is a new contract --- is specified in Secure GLP~\cite{keidar2026secure} and is not restated here.

\section{Implementation Notes}
\label{app:implementation-notes}

This appendix records implementation decisions of the Dart runtime derived from dGLP and madGLP: the heap representation of variable pairs and suspensions, agent execution and boot, and the networking interface of the multiagent runtime. The code lives in the public repository; this appendix states the invariants it maintains.

\subsection{Heap: Variables, Dereferencing, Binding, Suspension}
\label{app:in-heap}

\mypara{Variable pairs}
The engine follows the FCP sequential abstract machine~\cite{houri1989sequential} in its heap design. A variable pair is two heap cells, writer and reader, each a tagged reference; while the pair is unbound the two cells point to each other, so either end reaches its counterpart by following a pointer --- no address arithmetic relates them. A variable occurrence in a goal or term is a cell address; the cell's tag gives its polarity.

\mypara{Dereferencing}
Dereferencing follows references until reaching a cell that is not a reference: a value, or an unbound writer (one whose pointer leads to its own reader). The starting cell is then updated to point directly to the final target; this path compression is integral to the design, not optional. A chain never runs writer to writer: a writer binds to a value or to a reader, never to a writer, and dereferencing checks this SRSW invariant.

\mypara{Binding}
A writer is bound at most once, to a value or to a reader. Binding to a value converts the writer cell to a value cell; binding to a reader stores a pointer to it, extending the dereference chain.

\mypara{Suspension}
A goal that suspends on a reader is registered on the paired writer's cell; an unbound writer carries its suspension list alongside its reader pointer. Binding the writer to a value activates the registered goals. Binding the writer to a reader routes its suspensions to wherever the reader's dereference chain leads: to the unbound writer at the end of the chain; and, when the chain already ends in a value, the binding determines the writer's value, so the suspensions are activated at once. The invariant: a suspension is activated exactly when its variable becomes determined, wherever along the chain that happens.

\mypara{Imported readers}
An imported reader --- one whose writer is at another agent --- is the reader of an ordinary local pair created by Localize (Definition~\ref{def:localize}), and its paired local writer is the one named by a global writers table entry. That writer carries the suspension queue and receives the incoming assignment (Definition~\ref{def:madglp-receive}), so dereferencing an imported reader reaches an unbound local writer like any other. The entry is found from the arriving name, by its remote agent and index.

\mypara{Imported writers}
An imported writer --- one localized from a writer global name --- is an ordinary local writer carrying its imported-writer record (Definition~\ref{def:imported-writer-records}), keyed by the writer's address and so reached from its cell in constant time: the anchor and index of the link its value must reach, the agent the name came from, and the \texttt{global\_send} goal watching its reader. The record is what makes re-export a forwarding: Globalize reads it from the writer's cell and removes both it and the goal, so the exporting agent leaves the link instead of relaying its values (Definition~\ref{def:globalize}).

\subsection{Agent Execution and Boot}
\label{app:in-execution}

\mypara{Clause try}
Head matching against a clause is two-phase: a collection pass traverses the arguments left to right, accumulating a tentative writers substitution and a preliminary set of blocking readers; a resolution pass removes from that set every reader whose paired writer the tentative substitution binds. An empty resolved set selects the clause; otherwise the resolved set is added to the goal's suspension set, the tentative substitution is discarded, and the next clause is tried --- when no clause selects, the goal suspends on the accumulated set if it is nonempty and fails otherwise (Definition~\ref{def:dglp-ts}). The tentative substitution reaches the heap only at commit, after the guards hold; a failed or suspended clause leaves no trace.

\mypara{Event-driven execution}
Each agent runs in its own isolate and is event-driven: an event is the initial start, an incoming assignment message (Definition~\ref{def:madglp-receive}), the person's input on the UI channel, or a place event from the networking layer (\S\ref{app:seam-predicates}). On each event the agent binds what the event delivers, reduces until quiescent --- FIFO with the $b$-bounded tail schedule of Definition~\ref{def:bounded-tail} --- and then performs its Sends (Definition~\ref{def:madglp-send}); there is no clock and no polling. An exhausted tail budget (Definition~\ref{def:bounded-tail}) yields to the host event loop, so timers, I/O, and the UI stay live. Every event that may unblock a goal is followed by such a cycle, and a suspended goal is re-enqueued solely through the activations its binding returns, so no goal is enqueued twice. Agents do not detect or report termination; shutdown is the embedding application's.

\mypara{Boot}
A program's \texttt{boot} procedure declares its agents: each body goal $G$\texttt{@}$p$ directs the runtime to spawn an isolate for agent $p$, create its index-0 serializer entry (Definition~\ref{def:index-0-serializer}), and spawn $G$ there with the network-input reader as its last argument. The runtime provides only that reader; every other channel, including the UI channel, is created by the GLP boot goal itself. Inter-agent message routing is the runtime's, by the destination in the message's global name. The \texttt{boot} clause is stripped on the boot path, so there the \texttt{@} notation never reaches the type checker and what remains of the file is an ordinary module.  A directory loaded as a program is not on that path and its boot clause is not stripped, so a program directory carries none. A boot program is placed under the program root, which is therefore its ancestor; placed outside it, the program inherits neither the root's declarations nor the modules the root exposes.

\subsection{Networking Seam}
\label{app:in-networking}

\mypara{One interface}
The multiagent runtime reaches the network through one interface, whose contract is stated below. The BLE/IP transport implementing it is a dependency; a simulation realisation of the same interface backs the tests and plays, so transport integration is a backend swap.

\mypara{The contract}
The runtime relies on this contract: an agent's identity is its public key, held with its private key by the networking layer; communication is point-to-point delivery of opaque payload bytes, the sender authenticated by the layer; delivery is fair --- every message sent is eventually delivered (Remark~\ref{rem:fair-message-delivery}) --- and unordered; the layer reports peer discovery, connection, disconnection, and reachability; and trust is open or closed, closed admitting no first contact from an unknown agent. The layer also observes a peer's address, opens a UDP path to an address, declares a place of a given radius, removes a declared place, and reports entry to and exit from a declared place together with when it stops and resumes reporting them; these five back the seam predicates (\S\ref{app:seam-predicates}).

\mypara{Payloads}
A payload is one assignment message in the canonical encoding (\S\ref{app:cf-terms}); a cold-call is an assignment to the index-0 serializer, distinguished by its global name, not by a message kind. The layer's signing primitives back the signature kernels; the signed content is specified in \S\ref{app:cf-handshake} and the Secure GLP paper~\cite{keidar2026secure}.

\mypara{Global-name indices}
Each agent allocates global-name indices from one counter shared by Globalize and Localize, starting at 1 --- index~0 is the serializer (Definition~\ref{def:index-0-serializer}). A removed entry's index is never reused, so a global name denotes one link for the whole run.

\mypara{Early messages}
Delivery is unordered, so an assignment to $\_r(p,i)$ may arrive before the message whose localization creates its global writers table entry. The Receive transaction is enabled only when the entry exists (Definition~\ref{def:madglp-receive}); the runtime holds such an early assignment, keyed by its global name, and applies it when localization creates the entry.

\mypara{Link release}
A reader-name value is retained by its sender, pending, until its acknowledgement arrives (\S\ref{app:requests-acks}); a request re-addresses the pending value to the reader's current holder. Traffic for a closed link --- a late value, request, or acknowledgement --- is dropped; indices are never reused, so the drop is failsafe.

\mypara{Held links}
A link end whose other end is at an agent the runtime did not exchange the name with is held and reported (\S\ref{app:held-links}). The test is structural --- the anchor named in the global name, or the agent an entry was exported to, against the agent that sent the traffic --- and the runtime does not consult the layer's reachability or trust state for it. A held outgoing message stays in $M_p$ marked held; a held incoming assignment is retained keyed by its global name, as an early assignment is. \texttt{authorise\_link/2} releases it, or drops the link.

\mypara{Simulation realisation}
The simulation realisation runs agents as isolates behind the same interface, with a router owning the identifier--key directory, the adjacency relation, and per-pair queues; agent identifiers in global names remain symbolic and are resolved to keys at the seam (Remark~\ref{rem:agent-names}). Harness controls partition pairs and defer delivery, exercising fair delivery and order independence.

\subsection{Self-Module}
\label{app:in-self-module}

The kernel predicate \texttt{self\_module}, whose body kernel is \texttt{'\_self\_module'}, returns, as a \texttt{Module} constant, the compiled artefact (\S\ref{app:cf-artefact}) of the app that calls it. An agent ships a running app to a friend by sending this value; the friend adopts it---offer and handshake are Secure GLP's~\cite{keidar2026secure}---and runs it. On the wire the constant travels as its artefact bytes under the module tag (\S\ref{app:cf-terms}), in an assignment message like any value (\S\ref{app:program-shipment}).

Every goal points to a module. An initial goal is started on a module by \texttt{run(Goal, Type, Module)}, and every goal spawned from it carries a program counter into that module. Whether an agent runs one module or a million, it is the same code. \texttt{self\_module} is the inverse of \texttt{run}: it takes no argument and returns the module its goal points to.

The \texttt{Module} constant carries the artefact --- the code, and the certificate over its source and compiled identities --- so the adopter verifies the certificate, checks its source identity against the offered $h(M)$, and runs the code (\S\ref{app:cf-loader}). The constant kind \texttt{Module} is the type system's~\cite{shapiro2026types}; the kernel predicate and the module-value representation are the runtime's.

\section{Development History}
\label{app:dev-history}

This appendix records how the madGLP specification and its Dart implementation co-evolved. The discipline of \cref{sec:ai-methodology} is bidirectional: authority flows from the mathematics to the specification to the code, but implementation runs and specification derivation surface defects, and several such defects proved to be at the mathematical level rather than the code level. Human review, in turn, caught a conceptual error the formalism had absorbed from an earlier draft of the paper. The episodes below are the substantive ones; each names what changed, how it was found, and which layer was at fault.

\mypara{The implementation arc}
The implementation passed through five stages: cGLP, the concurrent source language; dGLP, the single-agent deterministic implementation; maGLP, the abstract multiagent specification; irmaGLP, a first multiagent implementation organised around a binary network transaction; and madGLP, the current implementation, in which a shared variable pair is realised by two local pairs joined by a global link, with forwarding carried by spawned \texttt{global\_send} goals and a reserved index-0 serializer for cold-calls. The migration from irmaGLP to madGLP, in early 2026, is where most of the corrections below occurred.

\mypara{Localize stored the wrong half of the variable pair}
The Localize operation stored, in the global writers table entry, the wrong half of the local variable pair: for an incoming globalized writer $\_w(p,i)$, the entry must store the local writer $X_q$ (so that an arriving assignment can flow to its reader $X_q?$ in $q$'s resolvent), but it stored the reader instead. The error was found by tracing a complete communication scenario during specification derivation: $p$ assigns $X := T$, the message reaches $q$, and $q$ cannot find the local variable to assign. The correction touched the Localize and Receive definitions, the madGLP Reduce transition, and the Globalize--Localize correspondence lemma. The fault was at the specification level and was fixed there.

\mypara{Globalizing at production time --- a layering decision later reversed}
At one point the abstract maGLP semantics was revised to carry an outbox: each agent's local state became a pair $(G_p, O_p)$, Reduce queued each reader assignment into the outbox at production time, and Communicate was reduced to a passive sender (which merely removes the delivered message from its outbox) and an active receiver. The apparent purpose was to align the abstract semantics with the implementation, where outgoing communication is already produced at reduction time by \texttt{global\_send} goals; the change carried no recorded rationale beyond this reading. The change was subsequently reversed at the abstract level: the outbox was judged too concrete for the abstract semantics, which was restored to the simpler form in which a local state is an asynchronous resolvent $(G_p, \sigma_p)$. The production-time, passive-sender treatment was retained one layer down, in madGLP, where it belongs --- and it survives in the \texttt{global\_send} mechanism of \cref{sec:ai-methodology}, whose body predicate queues the outgoing message for delivery. The episode is an instance of the layering discipline deciding what belongs at which level, rather than a defect.

\mypara{Eliminating variable migration}
The earlier irmaGLP design tracked, per agent, which variables had their paired counterpart on another agent, and relayed readers leaving an agent's scope through fresh relay pairs with forwarding goals. An audit of the implementation against the specification during the irmaGLP-to-madGLP migration exposed edge cases in which the routing was missed or misordered. The design was replaced wholesale by the push-based model now in the paper: a shared pair is realised by two local pairs joined by a global link, and every variable remains in its originating agent's resolvent, with no migration. The defect surfaced in the implementation; the resolution was a redesign reflected in both specification and code.

\mypara{Unifying cold-calls with regular communication}
A cold-call --- a message between agents that share no variable yet --- was first handled by a dedicated binary network transaction. That mechanism was removed once as out of scope, then reintroduced, the separate path recurring because it duplicated machinery that regular communication already had. It was finally unified: cold-calls now use the same \texttt{global\_send} mechanism as established links, addressed to a reserved index-0 serializer, making every transaction unary and aligning the implementation with the source-language paper. The redundancy surfaced in the code; the unification was carried through specification and paper together.

\mypara{Removing the false ``callback'' asymmetry}
An earlier draft of the paper described the case of exporting a reader as a ``callback,'' a term that then propagated faithfully into the specification and the code comments. Human review caught that the term encoded a conceptual error, not merely an infelicity: replacing the word alone would have left the misunderstanding intact. Exporting a writer and exporting a reader are the same symmetric operation --- an agent holds a local pair and spawns a \texttt{global\_send} goal to send its reader's value outward once known, with a table entry at the receiving end. ``Callback'' wrongly implied that one direction was primary and the other a response. The term was removed throughout the paper, the appendix traces, and the implementation, and the symmetric account of global links put in its place. The fault originated in the paper and was corrected there first.

\mypara{Simplifying the global writers table}
The variable table initially stored four kinds of entry, distinguished by origin (Globalize versus Localize) and by variable kind (writer versus reader). Specification derivation made clear that entries are needed only for writers awaiting an incoming assignment, since all outgoing communication is handled uniformly by \texttt{global\_send} goals. The table was renamed the global writers table and its entry kinds reduced from four to two.

\mypara{A cold-call polarity error, found by running the code}
Running the multiagent implementation exposed a polarity error that was present in the abstract specifications, not in the code: the cold-call transaction transferred the reader $X?$ to the receiving agent $q$, whereas $q$ must receive the writer $X$. Because the single-occurrence restriction makes the globalize-reader dataflow always run from the exporting agent to the receiver, the reverse setup was simply wrong. The fix was applied across the madGLP paper, the source-language paper, and the specification together, with the affected single-occurrence-preservation proof reworked. The diagnostic that exposed it was a runtime trace in which a goal failed where it should have suspended:
\begin{verbatim}
[MAD agent2] send: globalized term =
  msg(agent3, data(.(got(1), .(got(2), _r(agent2, 1)))))
[MAD agent3] registered global_send goal: _r(agent2, 1) -> agent2
consumer_init(agent3, [msg(agent3, data([got(1), got(2) | X2])) | X3?]) -> failed
\end{verbatim}
Agent~3 received a writer where it expected a reader, so its \texttt{ground} guard failed definitively instead of suspending, and the goal terminated. This single line --- a goal reaching \texttt{failed} where the semantics required suspension --- drove the polarity correction at the paper level. It is the clearest case of the loop the methodology depends on: the implementation, by being run, falsified the specification.

\mypara{Diagnosing the fault at the right layer}
The same investigation first produced a fix at the code level --- a write-back mechanism at the receiver and additional table entries --- which was then itself judged wrong. Tracing the dataflow showed that the single-occurrence restriction forces the globalize-reader flow to run from the exporting agent to the receiver, whereas the specification had set up the reverse; the two failing tests were caused by this specification-level defect, not by the implementation. The provisional code-level fix was removed and the correction made in the specification and paper instead. This is the discipline's central caution in practice: a defect that first presents as an implementation bug is not patched around in the code once it is traced to the specification.

\mypara{A missing reactivation hook}
A related implementation bug blocked all inter-agent communication using a response variable: the spawned \texttt{global\_send} goal was never registered to fire when its watched reader became known, so a cold-call expecting a reply waited forever. This was a genuine code-level defect --- the specification was correct --- found by running a concrete request-and-response scenario. It is recorded here as the converse case: not every defect surfaced by running the code is a paper-level fault, and distinguishing the two is the point of keeping all three layers aligned.

\mypara{A suspension bug that clarified the specification}
In the core single-agent engine, the soft-fail step that advances a goal to its next candidate clause discarded the clause's suspension set instead of merging it into the goal's, so goals failed where they should have suspended. Fixing the code exposed that the specification had not stated the merge invariant explicitly; the invariant, together with the three-valued reading of guards, was then added to the runtime specification. Here a runtime bug drove a clarification of the specification rather than a change to the mathematics.

\mypara{From relayed links to anchored request/acknowledge transfer}
The relay design kept every past holder of a reader in the loop: a holder that passed a reader on chained a fresh pair to it, so a stream reaching a re-exported reader crossed every past holder forever. The 2026-07 redesign fixes a link's global name --- its \emph{anchor} --- for life: an unbound imported reader is forwarded under its original name, the new holder requests the value from the anchor, a sent value stays pending at the anchor until the current holder acknowledges, and stale traffic is dropped on never-reused indices. A value now crosses once, to the current holder; past holders drop out, and a redirected stream continues to the new holder directly. This partly reverses the ``eliminating variable migration'' episode above: the reader occurrence does move between agents, but against a fixed anchor and under the request/acknowledge discipline, so the routing races that sank irmaGLP's migration do not arise. Epochs on readers were considered and dropped: the acknowledgement alone closes a link, and a request from a holder the anchor did not itself export to is what reports the move.

\mypara{An internal trace inconsistency}
Reserving index~0 for the cold-call serializer left the worked traces and the figure still using index~0 for ordinary global links --- an inconsistency internal to the mathematical layer, introduced by the unification above. The traces, figure, and appendix proof were shifted to start at index~1.

\fi

\end{document}